\documentclass[reprint,superscriptaddress,
 amsmath,amssymb,
 aps,
pra,
longbibliography
]{revtex4-2}

\usepackage{graphicx}% Include figure files
\usepackage{dcolumn}% Align table columns on decimal point
\usepackage{bm}% bold math
\usepackage{physics}
\usepackage{amsfonts, amsmath}
\usepackage{dsfont}
\usepackage{amsthm}
\usepackage{bm}
\usepackage{hyperref}
\usepackage{bbold}
\usepackage{color}
\usepackage{xcolor}
\usepackage{lipsum,babel}
\usepackage[normalem]{ulem}
\usepackage{enumerate}
\usepackage{comment}

\newtheorem{theorem}{Theorem}

\newtheorem{lemma}{Lemma}

\newtheorem{remark}{Remark}

\begin{document}

\definecolor{navy}{RGB}{46,72,102}
\definecolor{pink}{RGB}{219,48,122}
\definecolor{grey}{RGB}{184,184,184}
\definecolor{yellow}{RGB}{255,192,0}
\definecolor{grey1}{RGB}{217,217,217}
\definecolor{grey2}{RGB}{166,166,166}
\definecolor{grey3}{RGB}{89,89,89}
\definecolor{red}{RGB}{255,0,0}

\preprint{APS/123-QED}

\title{Universal Sample Complexity Bounds in Quantum Learning Theory via Fisher Information Matrix}
\author{Hyukgun Kwon}
\email{kwon37hg@sejong.ac.kr}
\affiliation{Department of Physics and Astronomy, Sejong University, 209 Neungdong-ro Gwangjin-gu, Seoul 05006, Republic of Korea}

\author{Seok Hyung Lie}
\email{seokhyung@unist.ac.kr}
\affiliation{Department of Physics, Ulsan National Institute of Science and Technology (UNIST), Ulsan 44919, Republic of Korea}

\author{Liang Jiang}
\email{liangjiang@uchicago.edu}
\affiliation{Pritzker School of Molecular Engineering, University of Chicago, Chicago, Illinois 60637, USA}

\begin{abstract}
In this work, we show that the sample complexity required in quantum learning theory within a general parametric framework is fundamentally governed by the inverse Fisher information matrix. More specifically, we derive upper and lower bounds on the number of samples required to estimate the parameters of a quantum system within a prescribed small additive error, with high success probability under maximum-likelihood estimation. {Notably, both the upper and lower bounds are determined by the supremum of the maximum diagonal entry of the inverse Fisher information matrix, differing only by a logarithmic factor in the number of parameters to be estimated.} {We then apply the general bounds to Pauli channel learning and Pauli expectation value learning, which serve as representative tasks in quantum channel and state learning, respectively, in the asymptotic small-error regime.} Furthermore, we identify the structural origin of exponential sample complexity in Pauli channel learning without entanglement and in Pauli expectation value learning without quantum memory {by comparing the quantum Fisher information matrix and the classical Fisher information matrix.} We then extend the analysis to an error criterion based on the Euclidean distance between the true parameter values and their estimators, deriving the corresponding upper and lower bounds on the sample complexity, which are likewise characterized by the inverse Fisher information matrix. As an application, we consider Pauli channel learning with entangled probes. We highlight two fundamental contributions to quantum learning theory. First, we establish a systematic framework that determines the task-independent sample complexity under maximum-likelihood estimation. Second, we show that, in the small-error regime, the learning sample complexity is governed by the inverse Fisher information matrix, which is the central quantity in quantum metrology that determines the ultimate achievable mean squared error.
\end{abstract}

\maketitle

\section{Introduction}\label{sec:intro}
Characterizing quantum systems is a central prerequisite for the advancement of quantum science and technology. In particular, in quantum information science, accurate characterization enables hardware benchmarking \cite{learningex1-erhard2019characterizing, learningex1-PRXQuantum.6.030202}, facilitates reliable noise modeling for quantum devices \cite{learningex1-harper2020efficient, learningex1-1pnv-t9px}, and informs the design of quantum error correction and error mitigation protocols \cite{learningex2-PhysRevLett.120.050505, learningex2-PhysRevLett.128.110504, learningex2-van2023probabilistic, learningex2-kim2023evidence, learningex2-google2025quantum,learningex2-tsubouchi2026quantum}.

In this context, quantum learning theory establishes a systematic framework for estimating the parameters that characterize an unknown quantum state or channel of interest \cite{learningex-huang2020predicting, memory-chen2022exponential, learningex-haah2023query, learningex-j7b8-pb77, learningex-PhysRevLett.130.200403, learningex-wu2025, memory-PhysRevLett.126.190505, entanglement-liu2025quantum, entanglement-PhysRevA.105.032435}. More specifically, quantum learning theory aims to efficiently estimate the parameters within a prescribed \emph{additive error} $\epsilon$ with \emph{success probability} at least $1-\delta$, a requirement commonly referred to as the $(\epsilon,\delta)$-criterion. {Within this framework, the central objective is to design an efficient estimation protocol that reduces the number of accesses to the given quantum state or channel required to satisfy the \((\epsilon,\delta)\)-criterion. For a chosen estimation protocol, this quantity is referred to as the \emph{sample complexity}, i.e., the number of state copies or channel uses.}
Recent results demonstrate that appropriate quantum resources can significantly reduce the sample complexity. In particular, {the simultaneous access to multiple copies of unknown quantum states and collective measurements on these states—simply referred to as the use of quantum memory—}can lead to an exponential reduction in the number of samples required for tasks such as learning expectation values of Pauli observables \cite{memory-chen2022exponential, memory-huang2022quantum, memory-PhysRevLett.126.190505, memory-chen2024optimal}, the characteristic function of a bosonic state \cite{memory-coroi2025exponential}, and the Pauli transfer matrix of a quantum channel \cite{memory-caro2024learning}, compared to protocols that do not employ quantum memory. Moreover, entanglement provides advantages in learning the Pauli eigenvalues of a Pauli channel \cite{entanglement-PhysRevA.105.032435, entanglement-PhysRevLett.132.180805, entanglement-seif2024, entanglement-kim2025} and the probability distribution of a random displacement channel \cite{entanglement-PhysRevLett.133.230604, entanglement-liu2025quantum}.

However, despite these advances, a comprehensive and systematic framework for characterizing the sample complexity of designed estimation strategies has yet to be established. Existing studies derive sample complexity bounds in a task-dependent manner, invoking distinct proof techniques and information-theoretic quantities tailored to each specific setting. As a result, a unifying method for systematically characterizing the required number of samples in general quantum learning tasks remains elusive. This observation motivates the following fundamental open question: \emph{Is there a unified framework for characterizing the sample complexity of quantum learning problems in general settings?}

Beyond general sample complexity, an additional conceptual question concerns the relationship between quantum learning theory and quantum metrology. Although both fields address parameter estimation in quantum systems, they are typically formulated in terms of distinct performance measures. Quantum metrology aims to achieve quantum-enhanced precision in parameter estimation by exploiting quantum resources \cite{metrology-giovannetti2004quantum, metrology-giovannetti2006quantum, metrology-Giovannetti2011}, with performance quantified through the mean squared error of estimators. The fundamental precision limit is determined by the inverse of the Fisher information matrix via the Cramér-Rao bound, which characterizes the minimum achievable mean squared error. On the other hand, quantum learning theory evaluates performance in terms of the sample complexity required to satisfy the $(\epsilon,\delta)$-criterion, rather than the mean squared error. This raises a natural fundamental question: \emph{Can the sample complexity required to ensure the $(\epsilon,\delta)$-criterion also be characterized in terms of the inverse Fisher information matrix?}

In this work, we establish that the sample complexity required to achieve the $(\epsilon,\delta)$-criterion is fundamentally governed by the inverse Fisher information matrix. We consider $(\epsilon,\delta)$-criterion–based learning under maximum-likelihood estimation and derive both upper and lower bounds on the sample complexity required to estimate all parameters of a quantum system within additive error $\epsilon$ with success probability at least $1-\delta$, formalized as $\ell_{\infty}$-distance–based $(\epsilon,\delta)$-criterion. Our analysis demonstrates that, for sufficiently small $\epsilon$, both bounds are determined by the supremum over the parameter space of the largest diagonal entry of the inverse Fisher information matrix. These results highlight that the inverse Fisher information matrix is the key quantity governing the sample complexity required to guarantee the $\ell_{\infty}$-distance–based $(\epsilon,\delta)$-criterion.

As applications, we consider the learning of Pauli eigenvalues of a given Pauli channel and the learning of Pauli expectation values of a given quantum state, focusing on the asymptotic regime $\epsilon \to 0$. 
First, based on the general upper and lower bounds established above, we analyze the task of learning the Pauli eigenvalues of a given Pauli channel. We show that the use of entanglement reduces the sample complexity from exponential to polynomial in the number of qubits, a result originally established in Refs.~\cite{entanglement-PhysRevA.105.032435, entanglement-PhysRevLett.132.180805} using different proof techniques from our FIM approach. Notably, we identify the origin of the exponential sample complexity in the absence of entanglement. A quantum probe must satisfy the purity constraint, which confines its Bloch vector to lie within the Bloch sphere. Crucially, because of this constraint, along certain parameter directions the allowable Bloch vector components are necessarily exponentially small. Consequently, the corresponding diagonal elements of the inverse Fisher information matrix grow exponentially with the number of qubits, which directly results in exponential sample complexity. The necessity of exponential sample complexity for entanglement free Pauli eigenvalue learning was originally established in Refs.~\cite{entanglement-PhysRevA.105.032435, entanglement-PhysRevLett.132.180805, entanglement-kim2025}; our analysis complements these results by providing an FIM-based explanation of its statistical origin.
Second, we analyze the task of learning Pauli expectation values. We show that access to quantum memory reduces the sample complexity from exponential to polynomial in the number of qubits, a result originally established in Ref.~\cite{memory-PhysRevLett.126.190505} using a different proof techniques from our FIM approach. We further clarify the origin of the exponential cost in the absence of quantum memory. The optimal measurement for estimating the expectation value of a given Pauli operator is a projective measurement in its eigenbasis. However, distinct Pauli operators generally do not commute, and therefore their respective optimal measurements are mutually incompatible. This measurement incompatibility necessarily induces exponential growth in the relevant diagonal elements of the inverse Fisher information matrix, which in turn implies exponential sample complexity. Lastly, we emphasize that Ref.~\cite{memory-chen2024optimal} also related these sample-complexity limitations to the commutation structure of Pauli observables using information-theoretic techniques. Our analysis complements this characterization by showing how the associated measurement incompatibility leads to exponentially large diagonal entries of the inverse Fisher information matrix.

Finally, we extend our analysis to $\ell_{2}$-distance-based $(\epsilon,\delta)$-criterion, where the estimation error is measured by the Euclidean norm of the difference between the true parameter and the estimator, and is required to be bounded by an additive error $\epsilon$ with high probability at least $1-\delta$. For this criterion, we derive upper and lower bounds on the sample complexity, which are also characterized by the inverse Fisher information matrix.
We then apply our bounds to the task of learning the Pauli eigenvalues of a given Pauli channel using a maximally entangled state as a quantum probe. We show that, when the probe state and the Pauli channel are each used only once, the required sample complexity grows exponentially, in contrast to the $\ell_{\infty}$ case.

We emphasize that our results answer the two fundamental questions posed above. First, we resolve the important open problem of characterizing the task-independent sample complexity of quantum learning tasks under the assumption of maximum-likelihood estimation. This yields a general framework that provides new analytical tools for systematically characterizing the sample complexity of general learning protocols. Second, we establish a quantitative connection between quantum metrology and quantum learning: both metrological precision and learning sample complexity are determined by the inverse Fisher information matrix.

This paper is organized as follows. In Sec.~\ref{sec:preliminaries}, we review the necessary background on quantum metrology, with a focus on the Fisher information matrix, and on quantum learning theory, including the definition of sample complexity. In Sec.~\ref{sec:main1}, we establish general upper and lower bounds on the sample complexity required to satisfy the $\ell_{\infty}$-distance--based $(\epsilon,\delta)$-criterion, and apply these bounds to Pauli channel learning and Pauli expectation value learning. In Sec.~\ref{sec:main2}, we derive corresponding upper and lower bounds for the sample complexity under the $\ell_{2}$-distance--based $(\epsilon,\delta)$-criterion, and apply the bounds to Pauli channel learning.

\section{Preliminaries of multi-parameter estimation}\label{sec:preliminaries}
\subsection{Quantum parameter estimation}\label{subsec:parameterestimation}
\begin{figure}[t]
    \centering
    \includegraphics[width=1\linewidth]{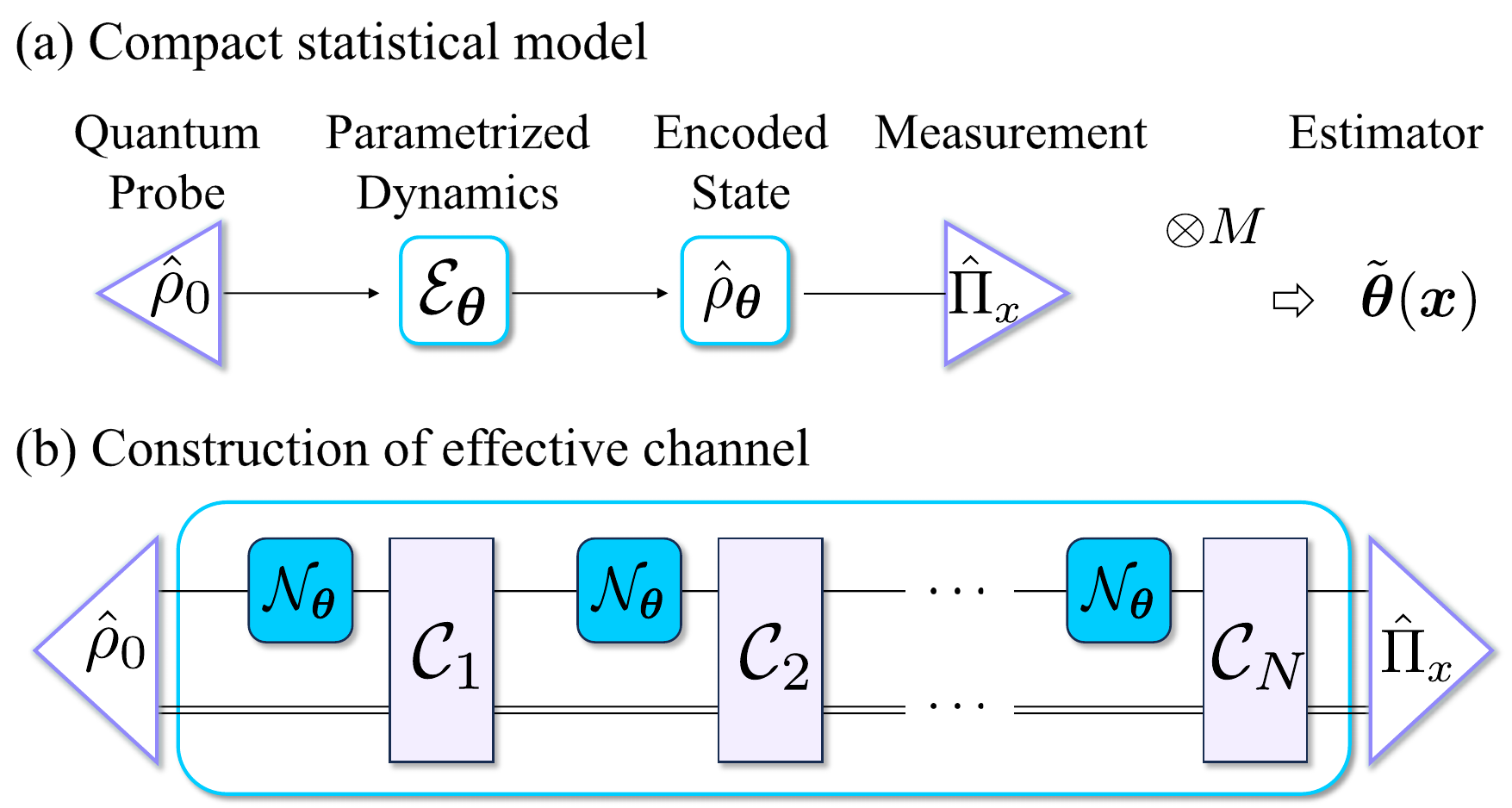}
    \caption{{Schematic of parameter estimation. (a) Compact statistical model for a quantum parameter estimation protocol. A probe state is independently transformed by the same effective parameter-dependent channel and measured over $M$ repetitions. The resulting $M$ measurement outcomes are then collectively processed to construct an estimator of the parameter. (b) Construction of the effective channel \(\mathcal{E}_{\vb*{\theta}}\). The effective channel is realized by inserting multiple uses of the elementary parameter-dependent channel \(\mathcal{N}_{\vb*{\theta}}\) into the slots of a \(\vb*{\theta}\)-independent strategy specified by the quantum operations \(\mathcal{C}_i\). The double dashed line denotes the ancillary system.} }
    \label{fig:estimation}
\end{figure}
Let us consider a parametrized quantum channel $\mathcal{E}_{\vb*{\theta}}$ where
\begin{align}
    \vb*{\theta}=(\theta_{1},\theta_{2},\cdots,\theta_{d})^{\mathrm{T}}\in \Theta\subset\mathbb{R}^{d}
\end{align}
denotes a set of unknown parameters to be estimated, $d$ is a finite positive integer, and $\Theta$ denotes the parameter space. To estimate $\vb*{\theta}$, one prepares a quantum probe $\hat{\rho}_{0}$ and sends it through the channel, resulting in the output state $\hat{\rho}_{\vb*{\theta}}:=\mathcal{E}_{\vb*{\theta}}(\hat{\rho}_{0})$ that encodes the parameters $\vb*{\theta}$. To extract the information about $\vb*{\theta}$ from $\hat{\rho}_{\vb*{\theta}}$, a measurement described by a positive-operator-valued-measure (POVM) is performed, where each element of $\{\hat{\Pi}_{x}\}_{x}$ corresponds to a possible measurement outcome $x$ and the elements satisfy the completeness relation $\sum_{x}\hat{\Pi}_{x}=\hat{I}$. {The measurement is repeated independently $M$} times, resulting in a sequence of measurement outcomes $\vb*{x}:=(x_{1},x_{2},\cdots,x_{M})^{\mathrm{T}}$. The joint probability distribution for obtaining the measurement outcomes $\vb*{x}$ is given by
\begin{align}
    p_{\vb*{\theta}}(\vb*{x}) :=\prod_{i=1}^{M}p_{\vb*{\theta}}(x_{i}),
\end{align}
where $p_{\vb*{\theta}}(x):=\mathrm{Tr}[\hat{\Pi}_{x}\hat{\rho}_{\vb*{\theta}}]$ is the probability of obtaining outcome $x$ for a single measurement. Based on the measurement outcomes $\vb*{x}$, we construct estimators for the parameters, denoted by
\begin{align}
    \tilde{\vb*{\theta}}(\vb*{x})=(\tilde{\theta}_{1}(\vb*{x}),\tilde{\theta}_{2}(\vb*{x}),\cdots\tilde{\theta}_{d}(\vb*{x}))^{\mathrm{T}}. \label{eq:estimator}
\end{align}
(See Fig.~\ref{fig:estimation}(a) for a schematic illustration of the parameter estimation procedure.)

Notably, the compact statistical model above is not restricted to a fixed physical probe state, a single use of the parameter-dependent channel, or a fixed measurement. Rather, it should be understood as an effective statistical description of one execution of a general finite-resource quantum estimation protocol.

To make the connection with concrete quantum protocols explicit, let \(\mathcal{N}_{\vb*{\theta}}\) denote the
elementary parameter-dependent channel to which the learner has access. {To estimate \(\vb*{\theta}\) efficiently, one may construct an effective channel \(\mathcal{E}_{\vb*{\theta}}\) by invoking \(\mathcal{N}_{\vb*{\theta}}\) multiple times, possibly together with ancillary systems. These invocations may be arranged in parallel, sequentially, or adaptively, and may be interleaved with arbitrary known \(\vb*{\theta}\)-independent operations, as captured by the standard quantum-comb or quantum-strategy framework \cite{comb-chiribella2008quantum, comb-chiribella2009theoretical, comb-chiribella2012optimal, comb-yang2019memory, comb-altherr2021quantum, comb-kurdzialek2025quantum}.}
Accordingly, one execution of the constructed estimation protocol can be described by an effective channel \(\mathcal{E}_{\vb*{\theta}}\), which can be expressed schematically as
\begin{equation}
\mathcal{E}_{\vb*{\theta}}
=
\mathcal{C}_{N_{\mathcal{E}}}
\circ
\left(\mathcal{N}_{\vb*{\theta}}\otimes\mathcal{I}\right)
\circ
\mathcal{C}_{N_{\mathcal{E}}-1}
\circ
\cdots
\circ
\mathcal{C}_{1}
\circ
\left(\mathcal{N}_{\vb*{\theta}}\otimes\mathcal{I}\right).
\label{eq:effective_channel_setting}
\end{equation}
{(See Fig.~\ref{fig:estimation}(b) for a schematic illustration of the construction of the effective channel $\mathcal{E}_{\vb*{\theta}}$.)} Here, \(N_{\mathcal{E}}\) is the number of uses of the elementary channel \(\mathcal{N}_{\vb*{\theta}}\) required to implement one execution of the effective channel \(\mathcal{E}_{\vb*{\theta}}\). The quantum operations \(\mathcal{C}_{i}\) are known \(\vb*{\theta}\)-independent CPTP maps. They may include state preparation, ancillary systems, coherent controls, quantum
instruments, intermediate measurements, classical memory, and feedforward. The identity channel \(\mathcal{I}\) acts on systems that are not acted upon by \(\mathcal{N}_{\vb*{\theta}}\), such as ancillary systems or classical registers. After the effective channel is applied, a final POVM \(\{\hat{\Pi}_{x}\}_{x}\) is performed. Hence the induced probability distribution takes the compact form
\begin{equation}
p_{\vb*{\theta}}(x)
=
\operatorname{Tr}
\!\left[
\hat{\Pi}_{x}
\mathcal{E}_{\vb*{\theta}}(\hat{\rho}_{0})
\right],
\label{eq:compact_effective_model}
\end{equation}
where \(\hat{\rho}_{0}\) is the input state of the effective channel $\mathcal{E}_{\vb*{\theta}}$. We emphasize that different choices of the estimation protocol generally lead to different effective channels \(\mathcal{E}_{\vb*{\theta}}\), and therefore to different induced probability distributions \(p_{\vb*{\theta}}(x)\) and estimation performance. Examples of this dependence are discussed for Pauli-channel estimation and Pauli-expectation-value estimation in Secs.~\ref{subsec:Pauli} and \ref{subsec:State}, respectively. Lastly, we note that while we assume a fixed number \(N_{\mathcal E}\) of channel uses per protocol execution, the formulation can be extended to more general protocols in which the number of channel uses is determined probabilistically and may therefore vary from one execution to another. In such cases, the channel-use cost may be quantified either by its average value or by the maximum number of channel uses required in the worst case.

\subsection{Quantum metrology and mean squared error}\label{subsec:qmmse}
Under the parameter estimation scheme introduced in Sec.~\ref{subsec:parameterestimation}, quantum metrology aims to reduce the mean squared error \cite{metrology-giovannetti2006quantum, metrology-Giovannetti2011, metrology-giovannetti2004quantum, multipara-PhysRevLett.120.080501, multipara-PhysRevLett.121.130503, multipara-PhysRevLett.123.200503}
\begin{align}
    \delta^{2}\theta_{i}:=\sum_{\vb*{x}}p_{\vb*{\theta}}(\vb*{x})\left(\tilde{\theta}_{i}(\vb*{x})-\theta_{i}\right)^{2},
\end{align}
beyond the limits achievable by classical strategies. Such an enhancement is enabled by exploiting quantum resources, including the preparation of an appropriate quantum probe state $\hat{\rho}_{0}$, the construction of the effective channel $\mathcal{E}_{\vb*{\theta}}$ from the elementary channel $\mathcal{N}_{\vb*{\theta}}$, and the choice of POVM $\{\hat{\Pi}_{x}\}_{x}$.

For a fixed estimation protocol, the mean squared error can be quantified by the mean-squared error matrix formalism. Specifically, the mean squared error is characterized by the \(d \times d\) mean-squared error matrix \(\mathbf{\Sigma}\), whose elements are defined as
\begin{align}
    [\mathbf{\Sigma}]_{ij}
    =
    \sum_{\vb*{x}} p_{\vb*{\theta}}(\vb*{x})
    \left(\tilde{\theta}_{i}(\vb*{x})-\theta_{i}\right)
    \left(\tilde{\theta}_{j}(\vb*{x})-\theta_{j}\right).
\end{align}
When all the estimators satisfy the unbiasedness condition
\begin{align}
    \sum_{\vb*{x}}p_{\vb*{\theta}}(\vb*{x})\tilde{\theta}_{i}(\vb*{x})=\theta_{i}, \label{eq:luc}
\end{align}
the multi-parameter quantum Cramér--Rao matrix inequality establishes a fundamental lower bound on the mean-squared error matrix \cite{est-1976quantum, est-braunstein1994statistical, est-paris2009quantum, est-liu2020quantum}:
\begin{align}
    \mathbf{\Sigma} \succeq \frac{1}{M}\mathbf{F}^{-1}_{\vb*{\theta}}(\{\hat{\Pi}_{x}\}_{x}) \succeq \frac{1}{M}\mathbf{J}^{-1}_{\vb*{\theta}}. \label{eq:QCRB}
\end{align}
Here, $\mathbf{F}_{\vb*{\theta}}(\{\hat{\Pi}_{x}\}_{x})$ is the Fisher information matrix (FIM) associated with the POVM $\{\hat{\Pi}_{x}\}_{x}$, while $\mathbf{J}_{\vb*{\theta}}$ is the quantum Fisher information matrix (QFIM), which provides an ultimate, measurement-independent bound. Explicitly, the FIM and QFIM are defined as
\begin{align}
    &[\mathbf{F}_{\vb*{\theta}}(\{\hat{\Pi}_{x}\}_{x})]_{ij}=\sum_{x} ~ \frac{\partial_{i} p_{\vb*{\theta}}(x)\partial_{j} p_{\vb*{\theta}}(x)}{p_{\vb*{\theta}}(x)},\label{eq:defFIM}\\
    &[\mathbf{J}_{\vb*{\theta}}]_{ij}:=\frac{1}{2}\mathrm{Tr}\big[\hat{\rho}_{\vb*{\theta}}\{\hat{L}_{i},\hat{L}_{j}\}\big],
\end{align}
where $\hat{L}_{i}$ are the symmetric logarithmic derivative (SLD) operators, satisfying
\begin{align}
    \pdv{\hat{\rho}_{\vb*{\theta}}}{\theta_{i}}=\frac{1}{2}\left(\hat{L}_{i}\hat{\rho}_{\vb*{\theta}}+ \hat{\rho}_{\vb*{\theta}}\hat{L}_{i}\right).
\end{align}
The equality in the first matrix inequality $\mathbf{\Sigma} \succeq \frac{1}{M}\mathbf{F}^{-1}_{\vb*{\theta}}$ is attained by a suitable choice of estimator. A paradigmatic example is the maximum-likelihood estimator (MLE), defined as
\begin{align}
\hat{\vb*{\theta}}^{\mathrm{ML}}(\vb*{x})
:= \arg\max_{\vb*{\theta}\in\Theta} \ell_{\vb*{\theta}}(\vb*{x}),
\end{align}
where $\ell_{\vb*{\theta}}(\vb*{x})$ is the log-likelihood function associated with $M$ independent measurement outcomes, defined as
\begin{align}
    \ell_{\vb*{\theta}}(\vb*{x})=\log{\prod_{i=1}^{M}p_{\vb*{\theta}}(x_{i})}=\sum_{i=1}^{M}\log{p_{\vb*{\theta}}(x_{i})}. 
\end{align}
By construction, the MLE selects the parameter value $\vb*{\theta}$ that maximizes the log-likelihood of the observed measurement outcomes $\vb*{x}$. 

In the asymptotic limit $M\to \infty$, the MLE becomes an asymptotically unbiased estimator satisfying Eq.~\eqref{eq:luc} and saturates the first matrix inequality in Eq.~\eqref{eq:QCRB}. As a consequence, the MLE asymptotically achieves
\begin{align}
    \delta^{2}\theta_{i} = \frac{1}{M}[\mathbf{F}^{-1}_{\vb*{\theta}}(\{\hat{\Pi}_{x}\}_{x})]_{ii},~~\forall i \in [d].\label{eq:msefim}
\end{align}
In contrast, the equality in the second inequality $\mathbf{F}^{-1}_{\vb*{\theta}} \succeq \mathbf{J}^{-1}_{\vb*{\theta}}$ cannot be attained in a general multi-parameter setting \cite{est-liu2020quantum, incompati-PhysRevA.94.052108, incompati-PhysRevX.11.011028, incompati-PhysRevX.12.011039}. The intuitive understanding is that the diagonal elements of the QFIM are obtained by optimizing the measurement independently for each parameter:
\begin{align}
    [\mathbf{J}_{\vb*{\theta}}]_{ii}= \max_{\{\hat{\Pi}_{x}\}_{x}}[\mathbf{F}_{\vb*{\theta}}(\{\hat{\Pi}_{x}\}_{x})]_{ii}.  \label{eq:qfimfim}
\end{align}
However, the measurements that are optimal for different parameters are typically incompatible and cannot be implemented simultaneously. As a consequence, there generally does not exist a single POVM that simultaneously achieves
\begin{align}
    \delta^{2}\theta_{i} = \frac{1}{M}[\mathbf{J}^{-1}_{\vb*{\theta}}]_{ii},~~\forall i \in [d].
\end{align}

Lastly, we introduce the QFIM for estimating Pauli expectation values, which is a key ingredient of our applications in Secs.~\ref{subsec:Pauli} and \ref{subsec:State}. Consider an $n$-qubit quantum state expressed in the Pauli basis as
\begin{align}
    \hat{\rho}_{\vb*{\theta}}:=\frac{1}{2^{n}}\left(\hat{I}+\sum_{i=1}^{4^{n}-1}\theta_{i}\hat{P}_{i}\right),
\end{align}
where $\{\hat{P}_i\}_{i=1}^{4^n-1}$ denotes the set of non-identity $n$-qubit Pauli operators and $\theta_i=\mathrm{Tr}[\hat{\rho}\hat{P}_i]$ is the corresponding Pauli expectation value. The optimal measurement for minimizing the mean squared error of $\theta_{i}$ is the projective measurement in the eigenbasis of $\hat{P}_{i}$. The corresponding diagonal element of the inverse QFIM is given by \cite{pauliqfi-PhysRevLett.104.020401}
\begin{align}
    [\mathbf{J}^{-1}_{\vb*{\theta}}]_{ii}=1-\theta_{i}^{2}. \label{eq:pauliqfim}
\end{align}

\subsection{Quantum Learning Theory with $\ell_{k}$-distance}\label{subsec:learning}
{We first clarify the notion of \emph{sample complexity}. Let \(N_{\mathcal{E}}\) denote the number of uses of the elementary channel \(\mathcal{N}_{\vb*{\theta}}\) required to implement a single instance of the effective parameter-dependent channel \(\mathcal{E}_{\vb*{\theta}}\). If the same estimation protocol is repeated independently \(M\) times, the total number of uses of \(\mathcal{N}_{\vb*{\theta}}\) required to implement the estimation is 
\begin{align}
N_{\mathrm{samp}} := M N_{\mathcal{E}} .
\end{align}
Throughout this paper, we use the following terminology:
\begin{align}
\begin{split}
    & N_{\mathrm{samp}}: \text{ sample complexity}\\
    & M :\text{  number of repetitions}\\
    & N_{\mathcal{E}}: \text{  number of elementary channel uses}.
\end{split}
\end{align}
We note that, once the estimation protocol is designed (i.e., \(N_{\mathcal{E}}\) is fixed), determining the sample complexity $N_{\mathrm{samp}}$ is equivalent to determining the number of repetitions \(M\).}

We introduce quantum learning theory formulated with respect to the \(\ell_{k}\)-distance. We consider the quantum parameter estimation setting introduced in Sec.~\ref{subsec:parameterestimation}. Although both quantum metrology and quantum learning theory address parameter estimation, quantum learning theory evaluates performance in terms of the sample complexity required to estimate \(\vb*{\theta}\) within \(\ell_k\)-distance \(\epsilon\) with confidence at least \(1-\delta\). To formalize this objective, we adopt the \((\epsilon,\delta)\)-criterion with respect to the \(\ell_{k}\)-distance. Within this framework, the performance of an estimation protocol is quantified by the sample complexity required to guarantee that
\begin{align}
    \Pr\left[\|\tilde{\vb*{\theta}}-\vb*{\theta}\|_{k} \leq \epsilon\right] \geq 1-\delta, \label{eq:PACcondition}
\end{align}
uniformly over \(\vb*{\theta}\in\Theta\).
Here, \(\norm{\vb*{v}}_{k}\) denotes the \(\ell_{k}\)-norm, defined for a \(d\)-dimensional vector \(\vb*{v}:=(v_{1},v_{2},\cdots,v_{d})^{\mathrm{T}}\) as
\begin{align}
    \norm{\vb*{v}}_{k}:=\left(\sum_{i=1}^{d}\abs{v_{i}}^{k}\right)^{{1}/{k}}.
\end{align}
In the remainder of this paper, we refer to Eq.~\eqref{eq:PACcondition} as the \((\epsilon,\ell_{k},\delta)\)-criterion and focus on \(k=\infty\) and \(k=2\).
The \(\ell_\infty\)-norm is defined as
\begin{align}
\|\vb*{v}\|_{\infty} := \max_{i} |v_i|, \label{eq:defofellinfty}
\end{align}
while the \(\ell_2\)-norm corresponds to the Euclidean norm,
\begin{align}
\|\vb*{v}\|_{2} := \left( \sum_{i=1}^{d} |v_i|^2 \right)^{\frac{1}{2}}.
\end{align}

{Finally, we again emphasize that once the estimation protocol has been specified, the number of elementary channel uses \(N_{\mathcal{E}}\) is fixed. Consequently, the number of repetitions $M$ determines the sample complexity $N_{\mathrm{samp}}$ required to satisfy the $(\epsilon,\ell_{k},\delta)$-criterion. One may alternatively consider a setting in which $M$ is fixed where $N_{\mathcal{E}}$ determines $N_{\mathrm{samp}}$, for example, the extreme one-shot collective-measurement setting where \(M=1\) and \(N_{\mathcal{E}}=N_{\mathrm{samp}}\). In general, however, changing $N_{\mathcal{E}}$ amounts to changing the estimation protocol itself. Allowing \(N_{\mathcal{E}}\) to be chosen separately for each prescribed pair \((\epsilon,\delta)\) would therefore effectively require designing a different estimation protocol for each target accuracy and confidence level, since different \((\epsilon,\ell_{k},\delta)\)-requirements generally require different sample complexity. To avoid this inefficiency, we adopt a fixed-protocol formulation: once \(\mathcal{E}_{\vb*{\theta}}\) is chosen, \(N_{\mathcal{E}}\) is fixed, and the sample-complexity question is to determine the number of independent repetitions \(M\) required to satisfy the desired \((\epsilon,\ell_{k},\delta)\)-criterion.}

\section{Main result 1: Sample complexity of $\ell_{\infty}$-distance–based learning}\label{sec:main1}
{In this section, we show that the sample complexity required to satisfy the $(\epsilon,\ell_{\infty},\delta)$-criterion is characterized by the diagonal elements of the inverse FIM. More specifically, we show that both the upper and lower bounds on the number of repetitions of the protocol $M$ are governed by the supremum over the parameter space $\Theta$ of the maximum diagonal entry of the inverse FIM. The corresponding bounds on the sample complexity $N_{\mathrm{samp}}$ are then obtained by simply multiplying these bounds on the number of repetitions by \(N_{\mathcal{E}}\), the number of elementary channel uses required to implement one instance of \(\mathcal{E}_{\vb*{\theta}}\).}

To derive the bounds, we impose the following assumptions on the log-likelihood function $\ell_{\vb*{\theta}}(\vb*{x})$:
\begin{enumerate}
    \item[(A1)] \textbf{Unique maximizer and stationary point:} $\ell_{\vb*{\theta}}(\vb*{x})$ has a unique maximizer $\tilde{\vb*{\theta}}^{\mathrm{ML}}$ in the interior of the parameter domain $\Theta$, which is also the unique stationary point.
    \item[(A2)] \textbf{Smoothness:} $\ell_{\vb*{\theta}}(\vb*{x})$ is three times continuously differentiable with respect to $\vb*{\theta}$ on the parameter domain $\Theta$.
\end{enumerate}
These assumptions are satisfied by a broad class of regular statistical models under standard parameterizations. Examples include Bernoulli and multinomial models, Poisson models, and Gaussian models with known variance (or covariance), as well as, more generally, exponential family models in their canonical parameterization, as discussed in Appendix~\ref{appensec:assumption}. 

%To obtain the quantitative bounds presented below, we further impose the standard regularity conditions (R1)--(R5), whose precise statements and implications are provided in Appendix~\ref{appen:sraftpr}. Together with assumptions (A1)--(A2), these conditions ensure that the upper and lower bounds derived in the following subsections are mathematically well defined and valid over the parameter space $\Theta$.

For the quantitative bounds derived below, we additionally impose the technical regularity conditions (R1)--(R5), which are stated and discussed in Appendix~\ref{appen:sraftpr}, to ensure the mathematical rigor of the analysis. Since these assumptions are technical in nature and their detailed presentation would interrupt the flow of the main exposition, their precise statements, together with the corresponding admissible range of finite $\epsilon$, are deferred to Appendix~\ref{appen:sraftpr}. Under these conditions, we now present the main upper and lower bounds.

\subsection{Upper bound on $\ell_{\infty}$-distance learning}\label{subsec:upperinfinite}
Let us assume that the assumptions (A1)–(A2) and the standard regularity conditions (R1)-(R5) in Appendix \ref{appen:sraftpr} are satisfied. We then obtain the following upper bound on the sample complexity that guarantees the $(\epsilon,\ell_{\infty},\delta)$-criterion.
{
\begin{theorem}[Simplified small-error upper bound]
\label{theorem:infiniteupper}
For fixed $0<\delta\le 1$ and $d<\infty$, the minimal number of repetitions
$M=M(\epsilon)$ required to guarantee that the MLE satisfies
\begin{align}
    \Pr\!\left[
    \| \tilde{\vb*{\theta}}^{\mathrm{ML}}
    -\vb*{\theta} \|_{\infty}
    \le \epsilon
    \right]
    \ge 1-\delta
    \quad
    \text{for all } \vb*{\theta}\in\Theta
\end{align}
is upper bounded, in the small-error limit $\epsilon\to0$, as
\begin{align}
    M
    \lesssim
    W_0(8\pi^{-1}\delta^{-2}d^2)
    \sup_{\vb*{\theta}\in\Theta}
    \max_{a\in[d]}
    [\mathbf F_{\vb*{\theta}}^{-1}]_{aa}
    \epsilon^{-2}.
    \label{eq:theorem1_small_errorm}
\end{align}
Equivalently, the minimal sample complexity is upper bounded as
\begin{align}
    N_{\mathrm{samp}}
    \lesssim
    N_{\mathcal{E}}W_0(8\pi^{-1}\delta^{-2}d^2)
    \sup_{\vb*{\theta}\in\Theta}
    \max_{a\in[d]}
    [\mathbf F_{\vb*{\theta}}^{-1}]_{aa}
    \epsilon^{-2}.
    \label{eq:theorem1_small_errors}
\end{align}
\end{theorem}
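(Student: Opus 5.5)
The bound comes from Gaussian concentration of the MLE, a union bound over the $d$ coordinates, and inversion of the resulting tail inequality through the Lambert function $W_0$.

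\textbf{Step 1: local expansion of the score.} Using (A1) and (A2), we Taylor expand the score $\nabla\ell_{\vb*{\theta}}(\vb*{x})$ around the true parameter. By (A1), the MLE is the unique interior stationary point, so
\begin{align}
0=\nabla\ell_{\vb*{\theta}}+\nabla^{2}\ell_{\vb*{\theta}}\,(\tilde{\vb*{\theta}}^{\mathrm{ML}}-\vb*{\theta})+R,
\end{align}
where the remainder $R$ is quadratic in the error. The third-derivative control of (A2), together with the regularity conditions (R1)--(R5), bounds $R$ uniformly over $\Theta$. Writing $-\nabla^{2}\ell_{\vb*{\theta}}=M\mathbf F_{\vb*{\theta}}+O_p(\sqrt M)$ then gives
\begin{align}
\tilde{\vb*{\theta}}^{\mathrm{ML}}-\vb*{\theta}=\frac1M\mathbf F_{\vb*{\theta}}^{-1}\nabla\ell_{\vb*{\theta}}+\text{(higher order)}.
\end{align}
The score is a sum of $M$ i.i.d.\ zero-mean vectors with covariance $\mathbf F_{\vb*{\theta}}$. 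Hence the leading term of the $a$-th coordinate is asymptotically $\mathcal N\big(0,[\mathbf F_{\vb*{\theta}}^{-1}]_{aa}/M\big)$.

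\textbf{Step 2: tail bound and union bound.} For each coordinate, the Gaussian Mills-ratio bound gives
\begin{align}
\Pr\big[|\tilde\theta_a-\theta_a|>\epsilon\big]\lesssim \sqrt{\tfrac{2}{\pi}}\,\frac{\sigma_a}{\epsilon}\,e^{-\epsilon^2/(2\sigma_a^2)},\qquad \sigma_a^2=[\mathbf F_{\vb*{\theta}}^{-1}]_{aa}/M.
\end{align}
A union bound over $a\in[d]$ turns the $(\epsilon,\ell_\infty,\delta)$-criterion into the sufficient condition
\begin{align}
d\sqrt{2/\pi}\,(\sigma/\epsilon)\,e^{-\epsilon^2/(2\sigma^2)}\le\delta,
\end{align}
where $\sigma^2=\sup_{\vb*{\theta}\in\Theta}\max_a[\mathbf F_{\vb*{\theta}}^{-1}]_{aa}/M$. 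Taking the supremum makes the condition uniform in $\vb*{\theta}$, as the statement requires.

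\textbf{Step 3: inversion via $W_0$.} Set $u=\epsilon^2/\sigma^2=M\epsilon^2/\sup\max[\mathbf F^{-1}]_{aa}$. Squaring the condition gives $(2d^2/\pi\delta^2)\,u^{-1}e^{-u}\le1$, that is, $ue^{u}\ge 2d^2/(\pi\delta^2)$. This holds exactly when $u\ge W_0\big(2d^2/(\pi\delta^2)\big)$. I expect the stated constant $8\pi^{-1}$ to come from more careful bookkeeping of the constants: for instance, a factor $2$ for the two-sided tail before squaring, or a slack factor of $2$ reserved to absorb the non-Gaussian corrections of Step 4. I would fix the precise constant by matching the paper's convention. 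Solving for $M$ then yields Eq.~\eqref{eq:theorem1_small_errorm}. Multiplying by $N_{\mathcal E}$ gives Eq.~\eqref{eq:theorem1_small_errors}.

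\textbf{Step 4: main obstacle.} The hardest step is justifying that the Gaussian approximation and the Taylor remainder are negligible uniformly over $\vb*{\theta}\in\Theta$ as $\epsilon\to0$. In this limit $M\sim\epsilon^{-2}$ diverges while $\delta$ and $d$ stay fixed, so the required tail level $\delta/d$ is a constant. Only a central-limit-type statement is therefore needed, not a large deviation. I would first try a uniform multivariate Berry--Esseen bound on the score, combined with a consistency event on which the Hessian stays within a factor $1\pm\eta$ of $M\mathbf F_{\vb*{\theta}}$, using (R1)--(R5) for uniform moment and third-derivative bounds. The probability of the complementary event and the Berry--Esseen error both vanish as $M\to\infty$. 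Absorbing them, together with $\eta\to0$, produces the ``$\lesssim$'' in the statement.
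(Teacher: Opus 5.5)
Your architecture is the same as the paper's: score expansion, Chebyshev-type control of the Hessian fluctuation and of the third-derivative envelope, coordinatewise Berry--Esseen plus the Mills ratio, a union bound over $a\in[d]$, and Lambert inversion. Step 1, however, has a genuine gap: you never localize the MLE.

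You expand the score at $\tilde{\vb*{\theta}}^{\mathrm{ML}}$ and claim that (A2) and (R1)--(R5) bound the remainder uniformly. But the remainder is $\|\vb*{\Delta}\|_2^2$ times a third-derivative envelope, and (R4) controls that envelope only within radius $r_0$ of $\vb*{\theta}$. So the expansion is usable only once you already know $\tilde{\vb*{\theta}}^{\mathrm{ML}}$ is $O(\epsilon)$-close to $\vb*{\theta}$, which is the conclusion you want. None of (R1)--(R5) gives (uniform) consistency, and your ``consistency event'' constrains the Hessian, not where the MLE lies.

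The paper breaks this circularity with a fixed-point argument.
\begin{itemize}
\item For every $\vb*{\Delta}$ in the cube $\mathcal{C}^{(\infty)}_{\epsilon}$ (where $\sqrt{d}\,\epsilon\le r_0$), it defines $\mathcal{F}(\vb*{\Delta})=\mathbf{F}^{-1}_{\vb*{\theta}}\vb*{S}_{\vb*{\theta}}+\mathbf{F}^{-1}_{\vb*{\theta}}(\mathbf{H}_{\vb*{\theta}}+\mathbf{F}_{\vb*{\theta}})\vb*{\Delta}+\mathbf{F}^{-1}_{\vb*{\theta}}\vb*{r}_{\vb*{\theta}}(\vb*{\Delta})$. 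Its fixed points are exactly the stationary points of the likelihood.
\item On the event $\{\|\mathbf{F}^{-1}_{\vb*{\theta}}\vb*{S}_{\vb*{\theta}}\|_\infty\le\tau_-\}\cap\{\|\mathbf{H}_{\vb*{\theta}}+\mathbf{F}_{\vb*{\theta}}\|_{\mathrm{op}}\le c_H\}\cap\{R\le c_R\}$, this map sends the cube into itself.
\item Brouwer's theorem then yields a stationary point inside the cube.
\item Only then does the uniqueness of the stationary point in (A1) identify it with the MLE.
\end{itemize}
That identification is the essential role of (A1); in your sketch (A1) is used only to set the score to zero. With this step added, $\{\|\vb*{\Delta}^{\mathrm{ML}}\|_\infty>\epsilon\}$ lies in the union of the three bad events, and your Steps 2--4 go through. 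Because you union-bound over coordinates, you only need the one-dimensional Berry--Esseen theorem for each projection $\vb*{e}_a^{\mathrm{T}}\mathbf{F}^{-1}_{\vb*{\theta}}\ell^{(1)}_{\vb*{\theta}}$, with third moments from (R5); a multivariate version is unnecessary.

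On the constant, your $W_0(2\pi^{-1}\delta^{-2}d^2)$ needs no adjustment. Your first explanation would double count, since $\sqrt{2/\pi}=2/\sqrt{2\pi}$ already contains the two-sided factor. Your second explanation is the right one. The paper fixes $\delta_H=\delta_R=\delta/4$ in the Chebyshev bounds, which leaves $\delta/2$ for the Gaussian and Berry--Esseen terms. That replaces $\delta$ by $\delta/2$ inside the Lambert function and turns $2\pi^{-1}$ into $8\pi^{-1}$.

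Since $W_0$ is increasing, your smaller constant already implies the stated bound. It is also attainable as $\epsilon\to0$ if you let $\delta_H,\delta_R$ decay slowly with $M$. For example, $\delta_H=\delta_R=M^{-1/2}$ gives $c_H\sim M^{-1/4}\to0$ while $c_R$ stays bounded. This is exactly the vanishing bad-event bookkeeping you propose in Step 4.
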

}
In Eqs.~\eqref{eq:theorem1_small_errorm} and \eqref{eq:theorem1_small_errors}, the notation $A_\epsilon\lesssim B_\epsilon$ denotes
\begin{align}
    \limsup_{\epsilon\to0}
    \frac{A_\epsilon}{B_\epsilon}
    \le 1,
\end{align}
i.e., \(A_\epsilon \le (1+o(1))\, B_\epsilon\) as \(\epsilon \to 0\) with fixed $\delta$ and $d$. 

The proof of Theorem~\ref{theorem:infiniteupper} is provided in Appendix~\ref{appensec:upperinfinite}, where we also derive a fully explicit non-asymptotic finite-\(\epsilon\) bound.

To clarify the scaling in Eq.~\eqref{eq:theorem1_small_errorm}, we briefly introduce basic properties of the Lambert \(W_0\) function (see Appendix~\ref{appensec:tools} and Refs.~\cite{lambert-corless1996lambert, lambert-olver2010nist}). First, it satisfies the elementary bound
\begin{align}
     W_{0}(x) \le \log x \label{eq:lambertbound}
\end{align}
for \(x \ge e\). We note that $W_{0}(e)=1$ and $W_{0}(x)$ is an increasing function for $x\ge 0$. {In addition, \(W_{0}(x)\) admits the asymptotic expansion for large \(x\),
\begin{align}
W_0(x)
=
\log x - \log \log x
+
\frac{\log \log x}{\log x}
+
O\!\left(\frac{(\log\log x)^2}{(\log x)^2}\right).
\end{align}}
Applying the bound~\eqref{eq:lambertbound} to our setting, we obtain that, whenever \(8\pi^{-1}\delta^{-2}d^{2} \ge e\),
\begin{align}
    W_{0}(8\pi^{-1}\delta^{-2}d^{2}) \le \log\!\left(8\pi^{-1}\delta^{-2}d^{2}\right).
\end{align}
{Especially when \(8\pi^{-1}\delta^{-2}d^{2}\) is sufficiently large, the difference between \(W_{0}(8\pi^{-1}\delta^{-2}d^{2})\) and \(\log\!\left(8\pi^{-1}\delta^{-2}d^{2}\right)\) is approximately \(\log\!\log\!\left(8\pi^{-1}\delta^{-2}d^{2}\right)\).}

Finally, we emphasize that the supremum and the maximum are essential in the upper bounds. An upper bound must guarantee the $(\epsilon,\ell_{\infty},\delta)$-criterion uniformly over the entire parameter space. Since $\ell_{\infty}$-accurate learning requires simultaneous control of every coordinate at every parameter value, as specified in Eq.~\eqref{eq:defofellinfty}, {the number of repetitions $M$} must be sufficiently large to accommodate the most statistically challenging coordinate at the most unfavorable parameter point. This worst-case uniform requirement is precisely captured by the supremum of the largest diagonal component of the inverse FIM.

\subsection{Lower bound on $\ell_{\infty}$-distance learning}\label{subsec:lowerinfinite}
We now establish the lower bound under the assumptions (A1)–(A2) and the standard regularity conditions (R1)-(R5) in Appendix \ref{appen:sraftpr} are satisfied.
{
\begin{theorem}[Simplified small-error lower bound]
\label{theorem:infinitelower}
For fixed $0<\delta<1/\sqrt{8\pi e}$ and $d<\infty$, the minimal number of repetitions
$M=M(\epsilon)$ required to guarantee that the MLE satisfies
\begin{align}
    \Pr\!\left[
    \| \tilde{\vb*{\theta}}^{\mathrm{ML}}
    -\vb*{\theta} \|_{\infty}
    \le \epsilon
    \right]
    \ge 1-\delta
    \quad
    \text{for all } \vb*{\theta}\in\Theta
\end{align}
is lower bounded, in the small-error limit $\epsilon\to0$, as
\begin{align}
    M
    \gtrsim
    W_0 (\delta^{-2}/8\pi)
    \sup_{\vb*{\theta}\in\Theta}
    \max_{a\in[d]}
    [\mathbf F_{\vb*{\theta}}^{-1}]_{aa}
    \epsilon^{-2}.
    \label{eq:theorem2_small_errorm}
\end{align}
Equivalently, the minimal sample complexity is lower bounded as
\begin{align}
    N_{\mathrm{samp}}
    \gtrsim
    N_{\mathcal{E}}W_0(\delta^{-2}/8\pi)
    \sup_{\vb*{\theta}\in\Theta}
    \max_{a\in[d]}
    [\mathbf F_{\vb*{\theta}}^{-1}]_{aa}
    \epsilon^{-2}.
    \label{eq:theorem2_small_errors}
\end{align}
\end{theorem}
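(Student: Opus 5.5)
To prove the lower bound I will exhibit, for any $M$ below the claimed threshold, a single parameter point and coordinate at which the MLE fails the criterion with probability exceeding $\delta$.

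\textbf{Step 1: reduction to one worst coordinate.} Fix $\eta>0$ and choose $\vb*{\theta}^{\ast}\in\Theta$ and $a^{\ast}\in[d]$ with $[\mathbf F^{-1}_{\vb*{\theta}^\ast}]_{a^\ast a^\ast}\ge(1-\eta)\sup_{\vb*{\theta}}\max_a[\mathbf F^{-1}_{\vb*{\theta}}]_{aa}$. By the regularity conditions (R1)--(R5) we may take $\vb*{\theta}^\ast$ in the interior with $\mathbf F_{\vb*{\theta}^\ast}$ nonsingular. Since $\|\tilde{\vb*{\theta}}^{\mathrm{ML}}-\vb*{\theta}^\ast\|_\infty\le\epsilon$ implies $|\tilde\theta^{\mathrm{ML}}_{a^\ast}-\theta^\ast_{a^\ast}|\le\epsilon$, it suffices to show
\[
\Pr\bigl[|\tilde\theta^{\mathrm{ML}}_{a^\ast}-\theta^\ast_{a^\ast}|>\epsilon\bigr]>\delta
\]
whenever $M$ lies below the bound. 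This explains why the lower bound carries no $d$-dependence inside $W_0$: only one coordinate is used.

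\textbf{Step 2: local linearization of the MLE.} Using (A1)--(A2), the stationarity condition $\nabla\ell=0$ at $\tilde{\vb*{\theta}}^{\mathrm{ML}}$, and a Taylor expansion of the score around $\vb*{\theta}^\ast$ with the third-derivative remainder controlled by (R1)--(R5), I will write
\[
\tilde{\vb*{\theta}}^{\mathrm{ML}}-\vb*{\theta}^\ast=\mathbf F^{-1}_{\vb*{\theta}^\ast}\,\frac{1}{M}\sum_{i}\nabla\log p_{\vb*{\theta}^\ast}(x_i)+\vb*{R}_M.
\]
Here $\vb*{R}_M$ is of order $\|\tilde{\vb*{\theta}}-\vb*{\theta}^\ast\|^2$ plus the fluctuation of the observed information. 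The leading term in coordinate $a^\ast$ is a sum of i.i.d.\ mean-zero variables with variance $\sigma^2/M$, where $\sigma^2=[\mathbf F^{-1}]_{a^\ast a^\ast}$. I will then apply the Berry--Esseen theorem to this sum, together with a bound showing that $\Pr[|R_{M,a^\ast}|>\eta\epsilon]$ is negligible. From these,
\[
\Pr\bigl[|\tilde\theta_{a^\ast}-\theta^\ast_{a^\ast}|>\epsilon\bigr]\ge 2\bar\Phi\!\bigl((1+\eta)\epsilon\sqrt M/\sigma\bigr)-C/\sqrt M-(\text{remainder}),
\]
where $\bar\Phi$ is the Gaussian tail. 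In the regime of interest $M\sim\epsilon^{-2}$, so the Berry--Esseen correction is $O(\epsilon)$ and vanishes as $\epsilon\to0$.

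\textbf{Step 3: Gaussian tail inversion via Lambert $W$.} I will use the lower Mills bound
\[
2\bar\Phi(t)\ge\frac{2}{\sqrt{2\pi}}\,\frac{t}{1+t^2}\,e^{-t^2/2}\ge\frac{1}{\sqrt{2\pi}\,t}\,e^{-t^2/2}\qquad(t\ge1).
\]
Setting $u=t^2$, the failure probability exceeds $\delta$ whenever $u\,e^{u}<\delta^{-2}/(2\pi)$, up to the constant bookkeeping that the paper packages as $\delta^{-2}/8\pi$. This is the step where $\delta<1/\sqrt{8\pi e}$ enters, since it makes the argument of $W_0$ exceed $e$ and so guarantees $t\ge1$. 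The condition becomes
\[
t^2<W_0\bigl(\delta^{-2}/8\pi\bigr),\qquad\text{that is,}\qquad M<\frac{W_0(\delta^{-2}/8\pi)\,\sigma^2}{(1+\eta)^2\,\epsilon^{2}}.
\]
Hence any admissible $M$ must satisfy the reverse inequality. Letting $\epsilon\to0$ and then $\eta\to0$ yields Eq.~\eqref{eq:theorem2_small_errorm}. Multiplying by $N_{\mathcal E}$ gives Eq.~\eqref{eq:theorem2_small_errors}.

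\textbf{Main obstacle.} The hardest part is Step 2: controlling the remainder $\vb*{R}_M$ uniformly enough that it does not destroy the lower tail estimate. First, I will show consistency of the MLE, meaning that $\tilde{\vb*{\theta}}$ lies in a ball of radius $r_\epsilon\to0$ around $\vb*{\theta}^\ast$ with overwhelming probability, using (A1) and a concentration bound on the log-likelihood. On that event the quadratic remainder is $O(r_\epsilon^2)=o(\epsilon)$, which can be absorbed into the factor $(1+\eta)$. I expect the admissible range of finite $\epsilon$ in Appendix~\ref{appen:sraftpr} to be exactly what makes this absorption rigorous.
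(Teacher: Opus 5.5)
Your outline is correct and shares the paper's skeleton: fix one pair $(\vb*{\theta},a)$, linearize the score equation, apply Berry--Esseen to $\vb*{e}_a^{\mathrm{T}}\mathbf{F}^{-1}_{\vb*{\theta}}\vb*{S}_{\vb*{\theta}}$, use the lower Mills bound on the branch $x\ge 1$, invert with $W_0$, and take the supremum. You differ genuinely in how you handle the remainder. You call this the main obstacle and plan to treat it with a separate consistency result.

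The paper avoids this step by working on the accuracy event itself. It writes $\vb*{e}_a^{\mathrm{T}}\mathbf{F}^{-1}\vb*{S}=\Delta^{\mathrm{ML}}_a-\vb*{e}_a^{\mathrm{T}}\mathbf{F}^{-1}(\mathbf{H}+\mathbf{F})\vb*{\Delta}^{\mathrm{ML}}-\vb*{e}_a^{\mathrm{T}}\mathbf{F}^{-1}\vb*{r}(\vb*{\Delta}^{\mathrm{ML}})$. On $\{\|\vb*{\Delta}^{\mathrm{ML}}\|_\infty\le\epsilon\}$ one automatically has $\|\vb*{\Delta}^{\mathrm{ML}}\|_2\le\sqrt{d}\,\epsilon\le r_0$. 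So both correction terms are bounded deterministically on a good event $\mathcal{G}$, using Markov for $\|\mathbf{H}+\mathbf{F}\|_{\mathrm{op}}$ and Chebyshev for the third-derivative envelope. This gives $A(\epsilon)\subseteq\{|\vb*{e}_a^{\mathrm{T}}\mathbf{F}^{-1}\vb*{S}|\le\tau_+\}\cup\mathcal{G}^{\mathsf{c}}$, and it never asks where the MLE lies when the estimate fails.

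Your route needs more: consistency at a rate $r_\epsilon=o(\epsilon^{1/2})$, with vanishing failure probability when $M\asymp\epsilon^{-2}$. Also, the linear correction is the product $\|\mathbf{H}+\mathbf{F}\|_{\mathrm{op}}\|\vb*{\Delta}\|_2$, not an additive fluctuation term. This consistency is attainable, but not from a global concentration bound on the log-likelihood, because (R1)--(R5) are purely local. It does follow from the local Brouwer argument of the upper-bound proof combined with the uniqueness of the stationary point in (A1), for instance with $r_\epsilon=\epsilon^{3/4}$.

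Each route buys something different. The paper's is self-contained and gives explicit finite-$\epsilon$ bounds. However, it spends a fixed budget $\delta_H+\delta_R=\delta$ on $\mathcal{G}^{\mathsf{c}}$. Its necessary condition therefore reads $f(x)\le 2\delta+\eta/\sqrt{M}$, and the constant becomes $W_0((2\delta)^{-2}/2\pi)=W_0(\delta^{-2}/8\pi)$. In your route every nuisance probability vanishes as $\epsilon\to0$, so you actually get $W_0(\delta^{-2}/2\pi)$. This implies the stated bound by monotonicity of $W_0$, so your ``constant bookkeeping'' is not a gap, only a weaker-than-necessary choice.

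One small correction. The condition $\delta<1/\sqrt{8\pi e}$ does not guarantee $t\ge1$ for every $M$ below the bound; it guarantees that the threshold value of $t^2$ exceeds $1$. Since $M_0$ is defined through ``for all $M\ge M_0$'', you only need failure at one $M$ just below the threshold. That is exactly the regime $M\asymp\epsilon^{-2}$ where $t\ge 1$ and your Berry--Esseen correction is $O(\epsilon)$.
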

}
In Eqs. \eqref{eq:theorem2_small_errorm} and \eqref{eq:theorem2_small_errors}, the notation $A_\epsilon\gtrsim B_\epsilon$ denotes
\begin{align}
    \liminf_{\epsilon\to0}
    \frac{A_\epsilon}{B_\epsilon}
    \ge 1,
\end{align}
i.e., \(A_\epsilon \ge (1+o(1))\, B_\epsilon\) as \(\epsilon \to 0\) with fixed $\delta$ and $d$. 
The proof of Theorem~\ref{theorem:infinitelower} is provided in Appendix~\ref{appensec:lowerinfinite}, where we also derive a fully explicit non-asymptotic finite-\(\epsilon\) bound.

{Combining Theorems~\ref{theorem:infiniteupper} and~\ref{theorem:infinitelower}, we obtain the following consequence, which shows that the upper and lower bounds nearly match for guarantees that hold uniformly over the entire parameter space.
\begin{remark}\label{remark}
    For fixed $0<\delta<1/\sqrt{8\pi e}$ and $d<\infty$, the minimal sample complexity $N_{\mathrm{samp}}$ required to guarantee that the MLE satisfies
\begin{align}
    \Pr\!\left[
    \| \tilde{\vb*{\theta}}^{\mathrm{ML}}
    -\vb*{\theta} \|_{\infty}
    \le \epsilon
    \right]
    \ge 1-\delta
    \quad
    \text{for all } \vb*{\theta}\in\Theta
\end{align}
is tightly bounded up to $\log{d}$, in the small-error limit $\epsilon\to0$, as
\begin{align}
     N_{\mathcal{E}}W_{0}\left(\delta^{-2}/8\pi\right)\mathfrak{F}\epsilon^{-2} \lesssim N_{\mathrm{samp}} \lesssim N_{\mathcal{E}} W_{0}(8\pi^{-1}\delta^{-2}d^{2})\mathfrak{F}\epsilon^{-2}, \label{eq:sampleremark}
\end{align}
where $\mathfrak{F}:=\sup_{\vb*{\theta}\in\Theta}\max_{a \in [d]}[\mathbf{F}^{-1}_{\vb*{\theta}}]_{aa}$.
\end{remark}

Hence, for a uniform guarantee over the entire parameter space, the upper and lower bounds match in the inverse-FIM-dependent quantity. The only remaining discrepancy between the upper and lower bounds lies in the Lambert \(W_{0}\) factors. Since
\begin{align}
    8\pi^{-1}\delta^{-2}d^2
    =
    64d^2
    (\delta^{-2}/8\pi),
\end{align}
the concavity of the Lambert $W_{0}$ function implies \cite{lambert-corless1996lambert, lambert-olver2010nist},
\begin{align}
    W_0(8\pi^{-1}\delta^{-2}d^2)
    \le
    W_0(\delta^{-2}/8\pi)
    +2\log d+\log 64.
\end{align}
Therefore, relative to the lower-bound Lambert factor, the remaining overhead in the upper bound is at most $\log d$. Hence, the two bounds match in the inverse FIM dependent quantity and in the leading \(\epsilon^{-2}\) scaling, up to a logarithmic factor in the number of the parameters to be estimated. This demonstrates that the inverse FIM is the fundamental quantity governing the sample-complexity scale of MLE-based $(\epsilon,\ell_{\infty},\delta)$-learning in the asymptotic limit $\epsilon\to 0$.}

\subsection{Application to Pauli channel learning}\label{subsec:Pauli}

\begin{figure}[t]
    \centering
    \includegraphics[width=\linewidth]{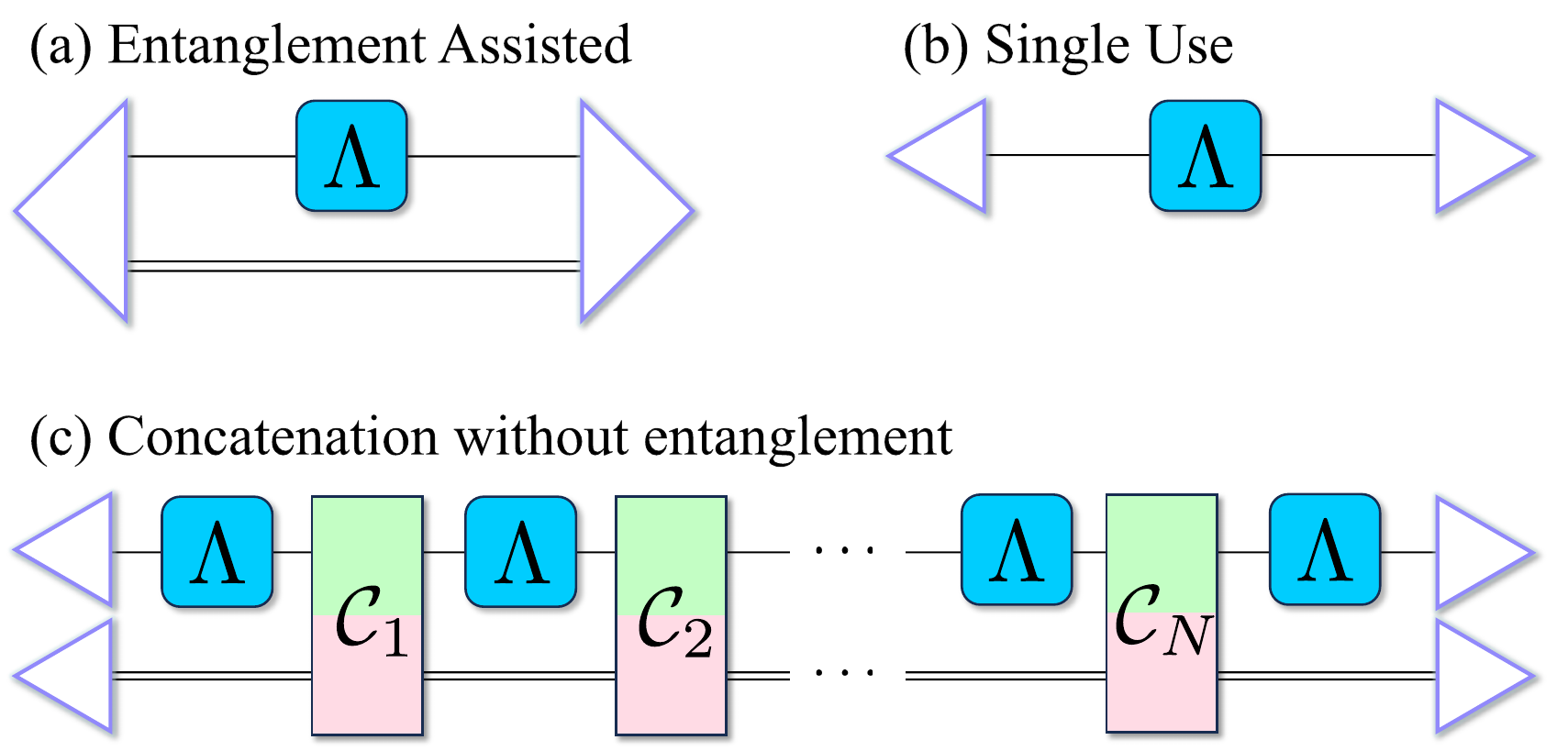}
    \caption{Schematic of Pauli channel learning. {Left and right pointing triangles denote quantum probe and measurement respectively, blue rounded squares represent the Pauli channel $\Lambda$, green/pink blocks represent intermediate quantum control operations that do not generate entanglement, and the double dashed line denotes the ancillary system.} (a) Entanglement-assisted strategy using a maximally entangled probe state and a Bell measurement on the system and ancilla. (b) Entanglement-free strategy using a single use of the Pauli channel with local state preparation and measurement. (c) Entanglement-free concatenation strategy consisting of consecutive uses of the Pauli channel interleaved with a predetermined sequence of CPTP control maps, with no system–ancilla entanglement generated throughout the protocol.}
    \label{fig:schematic}
\end{figure}

In this section, we apply Theorems~\ref{theorem:infiniteupper} and~\ref{theorem:infinitelower} to the problem of learning the Pauli eigenvalues of an $n$-qubit Pauli channel. Our analysis shows that the use of entanglement enables at least an exponential reduction in the sample complexity as a function of the number of qubits for learning the Pauli eigenvalues. This task was originally investigated in Refs.~\cite{entanglement-PhysRevA.105.032435, entanglement-PhysRevLett.132.180805} using a proof technique different from the FIM-based approach used here.

\subsubsection{Pauli error rates and eigenvalues}
We first introduce notation for $n$-qubit Pauli operators. Any $n$-qubit Pauli operator can be expressed as
\begin{align}
    \hat{P}_{a}=\bigotimes_{k=1}^{n}i^{a_{x,k}a_{z,k}}\hat{X}^{a_{x,k}}\hat{Z}^{a_{z,k}}, \label{pauliexpression}
\end{align}
where the index $0 \leq a \leq 4^{n}-1$ uniquely labels the Pauli operator, and $[a]_{2}:=(a_{x,1}, a_{x,2},\cdots, a_{x,n},a_{z,1}, a_{z,2},\cdots, a_{z,n})$ denotes the binary representation of $a$. 
Based on this notation, an $n$-qubit Pauli channel can be expressed as 
\begin{align}
    \Lambda(\hat{\rho})= \sum_{a=0}^{4^{n}-1}p_{a}\hat{P}_{a}\hat{\rho}\hat{P}_{a},
\end{align}
where $\hat{\rho}$ is an arbitrary $n$-qubit state, and $\{p_{a}\}_{a=0}^{4^{n}-1}$ are the \emph{Pauli error rates}, satisfying $\sum_{a=0}^{4^{n}-1}p_{a}=1$. An equivalent description of a Pauli channel is obtained by considering its action on the Pauli operators. In this representation, the channel is diagonal, satisfying
\begin{align}
    \Lambda(\hat{P}_{a})= \lambda_{a}\hat{P}_{a},
\end{align}
where $\{\lambda_{a}\}_{a=0}^{4^{n}-1}$ are referred to as the \emph{Pauli eigenvalues}, with $-1 \leq \lambda_{a} \leq 1$ for all $a$ and $\lambda_{0}=1$.
The Pauli error rates and Pauli eigenvalues are related via the \emph{Walsh--Hadamard transformation}, given by
\begin{align}
    \lambda_{b}=\sum_{[a]_{2}\in \mathbb{Z}^{2n}}(-1)^{\langle [a]_{2},[b]_{2}\rangle}p_{a},
\end{align}
where $\langle[a]_{2},[b]_{2}\rangle$ denotes the symplectic inner product 
\begin{align}
    \langle[a]_{2},[b]_{2}\rangle:=\sum_{k=1}^{n}a_{x,k}b_{z,k}+a_{z,k}b_{x,k}.
\end{align}
In what follows, we consider the task of learning the Pauli eigenvalues $\vb*{\lambda}=(\lambda_{1},\lambda_{2},\cdots,\lambda_{4^{n}-1})^{\mathrm{T}}$ to additive error $\epsilon$ in $\ell_{\infty}$-distance. {In the Pauli channel learning setting, for simplicity, we omit the dependence on \(\vb*{\lambda}\) in \(\Lambda\) and assume that \(\Lambda\) is the elementary channel $\mathcal{N}_{\vb*{\lambda}}$ to which we have access.}

\subsubsection{Entanglement-assisted scheme}\label{sec:entanglementPauli}
We first analyze the learning of Pauli eigenvalues assisted by entanglement with a noiseless ancilla mode (see Fig.~\ref{fig:schematic}(a) for a schematic illustration).
To estimate the Pauli eigenvalues $\vb*{\lambda}$, let us consider the maximally entangled state
\begin{align}
    \hat{\rho}_0 := |\Psi\rangle\langle\Psi|
=
\frac{1}{4^n}
\sum_{a=0}^{4^n-1}
\hat P_a^{\mathrm S}\otimes (\hat P_a^{\mathrm A})^T, \label{maximally}
\end{align}
where $|\Psi\rangle=2^{-n/2}\sum_{z\in\{0,1\}^n} |z\rangle_{\mathrm S}|z\rangle_{\mathrm A}$
denotes the standard maximally entangled state,
as a quantum probe.
With the Pauli convention in Eq. \eqref{pauliexpression}, the transpose of a Pauli operator is
given by
\begin{align}
    (\hat P_a)^T = \eta_a \hat P_a,
    \qquad
    \eta_a := (-1)^{\sum_{k=1}^n a_{x,k}a_{z,k}} ,\nonumber
\end{align}
which accounts for the minus sign associated with each $\hat{Y}$ factor. Here, the superscripts $\mathrm{S}$ and $\mathrm{A}$ denote the system and the noiseless ancilla, respectively. Next, we inject $\hat{\rho}_{0}$ into the $n$-qubit Pauli channel acting only on the system. {In this case, the effective channel is defined as \(\mathcal{E}_{\vb*{\lambda}} := \Lambda^{\mathrm{S}} \otimes \mathcal{I}^{\mathrm{A}}\), where \(\mathcal{I}^{\mathrm{A}}\) denotes the identity channel. In this estimation protocol, \(N_{\mathcal{E}}=1\); therefore, \(N_{\mathrm{samp}}=M\).} The encoded state of the Pauli error rates (or Pauli eigenvalues) can be expressed as
\begin{align}
    \hat{\rho}_{\vb*{\lambda}}&=\sum_{a=0}^{4^{n}-1}p_{a}(\hat{P}^{\mathrm{S}}_{a} \otimes \hat{P}^{\mathrm{A}}_{0})\dyad{\Psi}(\hat{P}^{\mathrm{S}}_{a} \otimes \hat{P}^{\mathrm{A}}_{0})\\
    &=
\frac{1}{4^n}
\sum_{b=0}^{4^n-1}
\lambda_b
\hat P_b^{\mathrm S}\otimes (\hat P_b^{\mathrm A})^T. \label{eq:Pauliencodedstate}
\end{align}
Here, we note that $\hat{\rho}_{\vb*{\lambda}}$ is a diagonal matrix whose diagonal components are $\{p_{a}\}_{a=0}^{4^{n}-1}$ with respect to the orthonormal basis $\{(\hat{P}^{\mathrm{S}}_{a} \otimes \hat{P}^{\mathrm{A}}_{0})\ket{\Psi}\}_{a=0}^{4^{n}-1}$. 

The QFIM with respect to $\vb*{\lambda}$ is given by
\begin{align}
    [\mathbf{J}_{\vb*{\lambda}}]_{ij}:={\frac{1}{2}}\mathrm{Tr}\big[\hat{\rho}_{\vb*{\lambda}}\{\hat{L}_{i},\hat{L}_{j}\}\big].
\end{align}
In particular, the diagonal elements of $\mathbf{J}$ are 
\begin{align}
    [\mathbf{J}_{\vb*{\lambda}}]_{bb}=\mathrm{Tr}\left[\pdv{\hat{\rho}_{\vb*{\lambda}}}{\lambda_{b}}\hat{L}_{b} \right].
\end{align}
From Eq.~\eqref{eq:Pauliencodedstate}, the derivative of the density operator with respect to $\lambda_b$ is
\begin{align}
    \pdv{\hat{\rho}_{\vb*{\lambda}}}{\lambda_{b}}=\frac{1}{4^{n}}\hat{P}^{\mathrm{S}}_{b}\otimes (\hat P_b^{\mathrm A})^T.
\end{align}
Importantly, it is straightforward to verify that $\pdv{\hat{\rho}_{\vb*{\lambda}}}{\lambda_{b}}$ satisfies
\begin{align}
    \left[\pdv{\hat{\rho}_{\vb*{\lambda}}}{\lambda_{b}},\hat{\rho}_{\vb*{\lambda}} \right]=0 ~~\Leftrightarrow~~\left[\pdv{\hat{\rho}_{\vb*{\lambda}}}{\lambda_{b}},\hat{\rho}^{-1}_{\vb*{\lambda}}\right]=0.
\end{align}
Since $\pdv{\hat{\rho}_{\vb*{\lambda}}}{\lambda_{b}}$ commutes with $\hat{\rho}_{\vb*{\lambda}}$, the SLD admits the simplified form
\begin{align}
    \hat{L}_{b}=\hat{\rho}_{\vb*{\lambda}}^{-1}\pdv{\hat{\rho}_{\vb*{\lambda}}}{\lambda_{b}}.
\end{align}
Thus, for all $b$, the SLD operators also commute with $\hat{\rho}_{\vb*{\lambda}}$ and are simultaneously diagonalizable. Consequently, $\hat{\rho}_{\vb*{\lambda}}$ and $\{\hat{L}_b\}_{b=1}^{d}$ share the common eigenbasis
\begin{align}
    \{(\hat{P}^{\mathrm{S}}_{i}\otimes \hat{P}^{\mathrm{A}}_{0}) \ket{\Psi}  \}_{i=0}^{4^{n}-1}.
\end{align}
The Bell measurement is the projective measurement onto the basis
\begin{align}
    |\Psi_x\rangle
    :=
    (\hat P_x^{\mathrm S}\otimes \hat{P}^{\mathrm{A}}_{0})|\Psi\rangle,
    \quad
    \hat{\Pi}_x := |\Psi_x\rangle\langle\Psi_x|.
\end{align}
Using Eq. \eqref{maximally}, the corresponding projectors can be expressed as
\begin{align}
    \hat{\Pi}_x
    =
    \frac{1}{4^n}
    \sum_{a=0}^{4^n-1}
    (-1)^{\langle [x]_2,[a]_2\rangle}
    \hat P_a^{\mathrm S}\otimes(\hat P_a^{\mathrm A})^T .
\end{align}
Hence, a single projective measurement onto this common eigenbasis saturates the quantum Cramér--Rao matrix inequalities in Eq.~\eqref{eq:QCRB}. Operationally, this corresponds to performing a Bell measurement. Therefore, the FIM with the Bell measurement equals the QFIM,
\begin{align}
    \mathbf{F}_{\vb*{\lambda}}(\{\hat{\Pi}_{x}\}_{x}) = \mathbf{J}_{\vb*{\lambda}}.
\end{align}
As a result, from Eq.~\eqref{eq:pauliqfim}, we have
\begin{align}
    [\mathbf{F}^{-1}_{\vb*{\lambda}}]_{aa}=[\mathbf{J}^{-1}_{\vb*{\lambda}}]_{aa} = 1-\lambda_{a}^{2}.
\end{align}
It then follows immediately that
\begin{align}
    \sup_{\vb*{\lambda}}\max_{a \in [d]}[\mathbf{F}^{-1}_{\vb*{\lambda}}]_{aa} = 1.
\end{align}
Finally, invoking Theorem~\ref{theorem:infiniteupper}, we conclude that the sample complexity $N_{\mathrm{samp}}$ required to satisfy the $(\epsilon,\ell_{\infty},\delta)$-criterion for Pauli eigenvalue learning using entanglement is upper bounded by
\begin{align}
    N_{\mathrm{samp}} \lesssim W_{0}(8\pi^{-1}\delta^{-2}4^{2n})\epsilon^{-2}. \label{eq:paulientangled}
\end{align}
{For clarity, we reiterate that
\begin{align}
    W_{0}(8\pi^{-1}\delta^{-2}4^{2n}) \le \log \frac{8}{\pi} + 2n\log 4 + 2\log \delta^{-1}.
\end{align}
Hence, in the entangled setting, the sample complexity increases by at most a polynomial factor in the number of qubits $n$.} 
{This upper bound on the sample complexity is consistent with the upper bound obtained in Ref. \cite{entanglement-PhysRevA.105.032435}, which scales as $O(n\log{\delta^{-1}}\epsilon^{-2})$.}

Lastly, let us inspect the MLE of the Pauli eigenvalues. 
The probability of obtaining outcome $x$ from the Bell measurement is given by
\begin{align}
    p_{\vb*{\lambda}}(x):=\mathrm{Tr}[\hat{\Pi}_{x}\hat{\rho}_{\vb*{\lambda}}]= \frac{1}{4^{n}}\sum_{a=0}^{4^{n}-1}\lambda_{a}(-1)^{\langle[x]_{2},[a]_{2}\rangle}. \label{eq:paulientangletostate}
\end{align}
We perform this measurement independently $M$ times, obtaining a sequence of outcomes $\vb*{x}=(x_{1},x_{2},\cdots , x_{M})$. Let $n_{x}$ denote the number of occurrences of outcome $x$, with $\sum_{x=0}^{4^{n}-1}n_{x}=M$. 
As shown in Appendix ~\ref{appensec:assumption}, the log-likelihood associated with the measurement statistics satisfies the standard regularity conditions (A1)–(A2). Under these conditions, the MLE of the Pauli eigenvalues is given by
\begin{align}
    \tilde{\lambda}^{\mathrm{ML}}_{a}=\sum_{b=0}^{4^{n}-1}\frac{n_{b}}{M}(-1)^{\langle[a]_{2},[b]_{2}\rangle}.
\end{align}

\subsubsection{Separable scheme with a single use of Pauli channel without an ancilla}\label{separablePauli}
We next investigate learning the Pauli eigenvalues using a single use of the Pauli channel, without entanglement (see Fig.~\ref{fig:schematic}(b) for a schematic illustration).
Any $n$-qubit quantum state can be expanded in the Pauli basis as
\begin{align}
    \hat{\rho}_{0}=\frac{1}{2^{n}}\sum_{a=0}^{4^{n}-1}r_{a}\hat{P}_{a},
\end{align}
where the coefficients $r_{a}:=\mathrm{Tr}[\hat{\rho}_{0}\hat{P}_{a}]$ are real-valued and satisfy $r_{0}=1$ due to the trace condition. We first focus on the simplest setting of a single use of the Pauli channel. {In this estimation protocol, \(\Lambda=\mathcal{N}_{\vb*{\lambda}} = \mathcal{E}_{\vb*{\lambda}}\) and \(N_{\mathcal{E}} = 1\), which implies that \(N_{\mathrm{samp}} = M\).}
Let us consider an $n$-qubit Pauli channel acting on the quantum probe $\hat{\rho}_{0}$. The resulting state is given by
\begin{align}
    \Lambda(\hat{\rho}_{0})=\frac{1}{2^{n}}\sum_{a=0}^{4^{n}-1}\lambda_{a}r_{a}\hat{P}_{a}.
\end{align}
From Eq.~\eqref{eq:pauliqfim} and by applying the chain rule, the diagonal elements of the inverse QFIM are given by
\begin{align}
    [\mathbf{J}^{-1}_{\vb*{\lambda}}]_{aa}=\frac{1}{r^{2}_{a}}(1-r^{2}_{a}\lambda^{2}_{a}).
\end{align}
Here, $[\mathbf{J}^{-1}_{\vb*{\lambda}}]_{aa}$ satisfies the inequality
\begin{align}
    [\mathbf{J}^{-1}_{\vb*{\lambda}}]_{aa}=\frac{1}{r^{2}_{a}}(1-r^{2}_{a}\lambda^{2}_{a}) \geq \frac{1}{r^{2}_{a}}-1. \label{eq:boundqfilower}
\end{align}
To proceed, we note that every quantum state must satisfy the purity constraint $\mathrm{Tr}[\hat{\rho}_{0}^{2}] \le 1$, which is equivalently expressed as
\begin{align}
    \sum_{a\neq 0}r^{2}_{a} \leq 2^{n}-1. \label{eq:forlemma1}
\end{align}
Eq.~\eqref{eq:forlemma1} directly implies that there always exists an index $a$ such that $r^{2}_{a} \le 2^{-n}$. As a consequence, there exists at least one $a$ for which the corresponding diagonal element satisfies 
\begin{align}
    [\mathbf{J}^{-1}_{\vb*{\lambda}}]_{aa} \geq  2^{n}.
\end{align}
According to the quantum Cramér--Rao matrix inequality in Eq.~\eqref{eq:QCRB}, for any POVM $\{\hat{\Pi}_{x}\}_{x}$, the following inequality holds:
\begin{align}
    \sup_{\vb*{\lambda}}\max_{a\in[d]}[\mathbf{F}^{-1}_{\vb*{\lambda}}(\{\hat{\Pi}_{x}\}_{x})]_{aa} \geq [\mathbf{F}^{-1}_{\vb*{\lambda}}(\{\hat{\Pi}_{x}\}_{x})]_{aa} \geq [\mathbf{J}^{-1}_{\vb*{\lambda}}]_{aa}.
\end{align}
As a consequence, Theorem~\ref{theorem:infinitelower} implies that the sample complexity $N_{\mathrm{samp}}$ required to satisfy the $(\epsilon,\ell_{\infty},\delta)$-criterion for Pauli eigenvalue learning using a separable scheme, particularly in the single-use setting of the Pauli channel, is lower bounded by
\begin{align}
    N_{\mathrm{samp}} \gtrsim W_{0}({\delta^{-2}}/{8\pi}){\epsilon^{-2}}{2^{n}}. \label{eq:paulisingleuse}
\end{align}
{For clarity, we reiterate that using Eq.~\eqref{eq:lambertbound}, we have
\begin{align}
    1 < W_{0}(\delta^{-2}/8\pi) \le  2\log \delta^{-1}-\log{8\pi},
\end{align}
for $0 < \delta < 1/\sqrt{8\pi e}$.}
By comparing Eqs.~\eqref{eq:paulientangled} and \eqref{eq:paulisingleuse}, we observe that the use of entanglement yields at least an exponential reduction in the required sample complexity with respect to the number of qubits $n$. The intuitive origin of the exponential sample complexity is as follows. In the absence of entanglement, the probe state necessarily possesses at least one Pauli-basis component whose magnitude is exponentially small in the number of qubits, as indicated by Eq.~\eqref{eq:forlemma1}. This severely limits the amount of information that can be encoded about the corresponding Pauli eigenvalue, resulting in an exponential factor $2^{n}$ in the required sample complexity. 
{Lastly, we note that the original analysis of Pauli channel learning in Ref. \cite{entanglement-PhysRevA.105.032435} establishes a lower bound of $\Omega(n2^{n})$ for the fixed choices $\epsilon=1/2$ and $\delta=1/3$ whereas our lower bound depends explicitly on $\delta$ and applies in the range $0<\delta < 1/\sqrt{8\pi e}$ in the asymptotic small error regime $\epsilon\to 0$. This leaves a multiplicative gap of order $n$ compared with our lower bound in Eq. \eqref{eq:paulisingleuse}. Nevertheless, Eq. \eqref{eq:paulisingleuse} is already sufficient to demonstrate that entanglement enables an exponential advantage over protocols restricted to single, non-entangled uses of the Pauli channel.}

\subsubsection{Separable scheme with multiple uses of Pauli channel with an unbounded ancilla}
We now show that Pauli-eigenvalue learning without entanglement requires a sample complexity that grows exponentially with the number of qubits $n$. We allow an arbitrarily large ancillary system, repeated uses of the Pauli channel, and general parameter-independent processing operations between consecutive channel uses. The only restriction is that the joint system--ancilla state remains separable throughout the protocol; see Fig.~\ref{fig:schematic}(c).

Since several parameter-dependent quantum states appear in the following
analysis, we make the state dependence of the QFIM explicit. For an arbitrary parameter-dependent state
$\hat{\rho}(\vb*{\lambda})$, we denote its QFIM by
$\mathbf{J}(\hat{\rho}(\vb*{\lambda}))$, whose matrix elements are defined as
\begin{align}
    \left[
    \mathbf{J}\bigl(\hat{\rho}(\vb*{\lambda})\bigr)
    \right]_{ab}
    :=
    \frac{1}{2}
    \Tr\left[
        \hat{\rho}(\vb*{\lambda})
        \left(
            \hat{L}_{a}\hat{L}_{b}
            +
            \hat{L}_{b}\hat{L}_{a}
        \right)
    \right],
\end{align}
where $\hat{L}_{a}$ is the SLD associated
with the state appearing in the argument of $\mathbf{J}$, defined through
\begin{align}
    \pdv{\hat{\rho}(\vb*{\lambda})}{\lambda_{a}}
    =
    \frac{1}{2}
    \left(
        \hat{L}_{a}\hat{\rho}(\vb*{\lambda})
        +
        \hat{\rho}(\vb*{\lambda})\hat{L}_{a}
    \right).
\end{align}
In particular, for the final encoded state
$\hat{\rho}_{\vb*{\lambda}}$, this notation is consistent with the
notation used in the preceding sections:
\begin{align}
    \mathbf{J}_{\vb*{\lambda}}
    :=
    \mathbf{J}\bigl(\hat{\rho}_{\vb*{\lambda}}\bigr).
\end{align}

We now consider a general entanglement free learning scheme. Since the protocol is entanglement-free, the initial probe state $\hat{\rho}_{0}$ is separable across the system and ancillary systems and can therefore be expressed as
\begin{align}
    \hat{\rho}_{0}
    =
    \sum_{j_{0}}p_{j_{0}}\,
    \hat{\rho}^{\mathrm{S}}_{0,j_{0}}
    \otimes
    \hat{\rho}^{\mathrm{A}}_{0,j_{0}},
\end{align}
where $\{p_{j_{0}}\}_{j_{0}}$ is a probability distribution. Between the $T$th and $(T+1)$th uses of the Pauli channel, with $1\leq T\leq N$, we consider a general $\vb*{\lambda}$-independent separable channel
\begin{align}
    \mathcal{C}_{T}
    =
    \sum_{j_{T}}
    \mathcal{A}^{\mathrm{S}}_{T,j_{T}}
    \otimes
    \mathcal{B}^{\mathrm{A}}_{T,j_{T}},
\end{align}
where $\mathcal{A}^{\mathrm{S}}_{T,j_{T}}$ and $\mathcal{B}^{\mathrm{A}}_{T,j_{T}}$ are completely positive and trace-non-increasing maps acting on the system and ancillary systems, respectively, while $\mathcal{C}_{T}$ is trace preserving. Thus, the label $j_{T}$ specifies a branch of the separable operation $\mathcal{C}_{T}$. We emphasize that this formulation encompasses general separable schemes, including history-dependent protocols with mid-circuit measurements and adaptive control, provided that no system--ancilla entanglement is generated \cite{entanglement-PhysRevLett.132.180805}.

For
\begin{align}
    \vb*{j}_{T}:=(j_{0},j_{1},\ldots,j_{T}),
\end{align}
the multi-index $\vb*{j}_{T}$ records the complete branch history up to the $T$th intermediate operation. We set
\begin{align}
    p_{\vb*{j}_{0}}(\vb*{\lambda})&:=p_{j_{0}},
    &
    \hat{\rho}^{\mathrm{S}}_{\vb*{j}_{0}}(\vb*{\lambda})&:=\hat{\rho}^{\mathrm{S}}_{0,j_{0}},
    &
    \hat{\rho}^{\mathrm{A}}_{\vb*{j}_{0}}&:=\hat{\rho}^{\mathrm{A}}_{0,j_{0}}.
\end{align}
After $T$ uses of the Pauli channel and the subsequent operations $\mathcal{C}_{1},\ldots,\mathcal{C}_{T}$, the joint state is
\begin{align}
    \hat{\rho}_{T}(\vb*{\lambda})
    &:={}
    \mathcal{C}_{T}
    \circ(\Lambda^{\mathrm{S}}\otimes\mathcal{I}^{\mathrm{A}})
    \circ\cdots\circ
    \mathcal{C}_{1}
    \circ(\Lambda^{\mathrm{S}}\otimes\mathcal{I}^{\mathrm{A}})
    (\hat{\rho}_{0})
    \nonumber\\
    &=
    \sum_{\vb*{j}_{T}}
    p_{\vb*{j}_{T}}(\vb*{\lambda})\,
    \hat{\rho}^{\mathrm{S}}_{\vb*{j}_{T}}(\vb*{\lambda})
    \otimes
    \hat{\rho}^{\mathrm{A}}_{\vb*{j}_{T}}.
\end{align}
Equivalently, this state obeys the recursion
\begin{align}
    &\hat{\rho}_{T}(\vb*{\lambda})
    =
    \mathcal{C}_{T}
    \circ(\Lambda^{\mathrm{S}}\otimes\mathcal{I}^{\mathrm{A}})
    \bigl(\hat{\rho}_{T-1}(\vb*{\lambda})\bigr)
    \nonumber\\
    &=
    \sum_{j_{T}}\sum_{\vb*{j}_{T-1}}
    p_{\vb*{j}_{T-1}}(\vb*{\lambda})
    \mathcal{A}^{\mathrm{S}}_{T,j_{T}}
    \left(
        \Lambda^{\mathrm{S}}
        \bigl(\hat{\rho}^{\mathrm{S}}_{\vb*{j}_{T-1}}(\vb*{\lambda})\bigr)
    \right)
    \otimes
    \mathcal{B}^{\mathrm{A}}_{T,j_{T}}
    \bigl(\hat{\rho}^{\mathrm{A}}_{\vb*{j}_{T-1}}\bigr).
\end{align}
For every branch of nonzero probability, the normalized conditional states are therefore
\begin{align}
    \hat{\rho}^{\mathrm{S}}_{\vb*{j}_{T}}(\vb*{\lambda})
    &:={}
    \frac{
        \mathcal{A}^{\mathrm{S}}_{T,j_{T}}
        \left(
            \Lambda^{\mathrm{S}}
            \bigl(\hat{\rho}^{\mathrm{S}}_{\vb*{j}_{T-1}}(\vb*{\lambda})\bigr)
        \right)
    }{
        \Tr\!\left[
            \mathcal{A}^{\mathrm{S}}_{T,j_{T}}
            \left(
                \Lambda^{\mathrm{S}}
                \bigl(\hat{\rho}^{\mathrm{S}}_{\vb*{j}_{T-1}}(\vb*{\lambda})\bigr)
            \right)
        \right]
    },
    \\
    \hat{\rho}^{\mathrm{A}}_{\vb*{j}_{T}}
    &:={}
    \frac{
        \mathcal{B}^{\mathrm{A}}_{T,j_{T}}
        \bigl(\hat{\rho}^{\mathrm{A}}_{\vb*{j}_{T-1}}\bigr)
    }{
        \Tr\!\left[
            \mathcal{B}^{\mathrm{A}}_{T,j_{T}}
            \bigl(\hat{\rho}^{\mathrm{A}}_{\vb*{j}_{T-1}}\bigr)
        \right]
    }.
\end{align}
Branches with zero probability can be omitted. Since the initial ancillary states and all maps $\mathcal{B}^{\mathrm{A}}_{T,j_{T}}$ are independent of $\vb*{\lambda}$, the conditional ancillary state $\hat{\rho}^{\mathrm{A}}_{\vb*{j}_{T}}$ is also independent of $\vb*{\lambda}$ for every branch.

The conditional probability of obtaining $j_{T}$ given the preceding branch $\vb*{j}_{T-1}$ is
\begin{align}
    c_{T,j_{T}\vert\vb*{j}_{T-1}}(\vb*{\lambda})
    &:={}
    \Tr\!\left[
        \mathcal{A}^{\mathrm{S}}_{T,j_{T}}
        \left(
            \Lambda^{\mathrm{S}}
            \bigl(\hat{\rho}^{\mathrm{S}}_{\vb*{j}_{T-1}}(\vb*{\lambda})\bigr)
        \right)
    \right]
    \nonumber\\
    &\quad\times
    \Tr\!\left[
        \mathcal{B}^{\mathrm{A}}_{T,j_{T}}
        \bigl(\hat{\rho}^{\mathrm{A}}_{\vb*{j}_{T-1}}\bigr)
    \right].
\end{align}
Accordingly,
\begin{align}
    p_{\vb*{j}_{T}}(\vb*{\lambda})
    =
    p_{\vb*{j}_{T-1}}(\vb*{\lambda})
    c_{T,j_{T}\vert\vb*{j}_{T-1}}(\vb*{\lambda}).
\end{align}
Because $\mathcal{C}_{T}$ is trace preserving, these conditional probabilities satisfy
\begin{align}
    \sum_{j_{T}}
    c_{T,j_{T}\vert\vb*{j}_{T-1}}(\vb*{\lambda})
    =1.
\end{align}

It is useful to express each conditional probability as the outcome probability of a POVM acting only on the system. Define
\begin{align}
    \hat{\Pi}^{\mathrm{S}}_{T,j_{T}\vert\vb*{j}_{T-1}}
    &:={}
    \Tr\!\left[
        \mathcal{B}^{\mathrm{A}}_{T,j_{T}}
        \bigl(\hat{\rho}^{\mathrm{A}}_{\vb*{j}_{T-1}}\bigr)
    \right]
    \mathcal{A}^{\mathrm{S}\dagger}_{T,j_{T}}
    \bigl(\hat{I}^{\mathrm{S}}\bigr).
\end{align}
Complete positivity of $\mathcal{A}^{\mathrm{S}}_{T,j_{T}}$ implies
$\hat{\Pi}^{\mathrm{S}}_{T,j_{T}\vert\vb*{j}_{T-1}}\succeq0$. Moreover, trace preservation of $\mathcal{C}_{T}$ gives
\begin{align}
    \sum_{j_{T}}
    \mathcal{A}^{\mathrm{S}\dagger}_{T,j_{T}}(\hat{I}^{\mathrm{S}})
    \otimes
    \mathcal{B}^{\mathrm{A}\dagger}_{T,j_{T}}(\hat{I}^{\mathrm{A}})
    =
    \hat{I}^{\mathrm{S}}\otimes\hat{I}^{\mathrm{A}}.
\end{align}
Taking the expectation value of the ancillary part in
$\hat{\rho}^{\mathrm{A}}_{\vb*{j}_{T-1}}$ yields
\begin{align}
    \sum_{j_{T}}
    \hat{\Pi}^{\mathrm{S}}_{T,j_{T}\vert\vb*{j}_{T-1}}
    =
    \hat{I}^{\mathrm{S}}.
\end{align}
Hence $\{\hat{\Pi}^{\mathrm{S}}_{T,j_{T}\vert\vb*{j}_{T-1}}\}_{j_{T}}$ is a valid POVM, and
\begin{align}
    c_{T,j_{T}\vert\vb*{j}_{T-1}}(\vb*{\lambda})
    =
    \Tr\!\left[
        \hat{\Pi}^{\mathrm{S}}_{T,j_{T}\vert\vb*{j}_{T-1}}
        \Lambda^{\mathrm{S}}
        \bigl(\hat{\rho}^{\mathrm{S}}_{\vb*{j}_{T-1}}(\vb*{\lambda})\bigr)
    \right].
\end{align}
Iterating the probability recursion gives
\begin{align}
    p_{\vb*{j}_{N}}(\vb*{\lambda})
    =
    \prod_{T=0}^{N}
    c_{T,j_{T}\vert\vb*{j}_{T-1}}(\vb*{\lambda}). \label{eq:pproducform}
\end{align}
After the final, $(N+1)$th, use of the Pauli channel, the encoded state is therefore
\begin{align}
    \hat{\rho}_{\vb*{\lambda}}
    =
    \sum_{\vb*{j}_{N}}
    p_{\vb*{j}_{N}}(\vb*{\lambda})
    \Lambda^{\mathrm{S}}
    \bigl(\hat{\rho}^{\mathrm{S}}_{\vb*{j}_{N}}(\vb*{\lambda})\bigr)
    \otimes
    \hat{\rho}^{\mathrm{A}}_{\vb*{j}_{N}}.
\end{align}

We next bound the diagonal elements of the QFIM of $\hat{\rho}_{\vb*{\lambda}}$. By the extended convexity of the QFIM $[\mathbf{J}_{\vb*{\lambda}}]_{aa}=[\mathbf{J}(\hat{\rho}_{\vb*{\lambda}})]_{aa}$ \cite{est-paris2009quantum},
\begin{align}
   [\mathbf{J}_{\vb*{\lambda}}]_{aa}
    \leq
    F^{\mathrm{cl}}_{aa}(\vb*{\lambda})
    +
    \sum_{\vb*{j}_{N}}
    p_{\vb*{j}_{N}}(\vb*{\lambda})
    \left[
        \mathbf{J}
        \left(
            \Lambda^{\mathrm{S}}
            \bigl(\hat{\rho}^{\mathrm{S}}_{\vb*{j}_{N}}(\vb*{\lambda})\bigr)
        \right)
    \right]_{aa},
\end{align}
where we used the fact that $\hat{\rho}^{\mathrm{A}}_{\vb*{j}_{N}}$ is independent of $\vb*{\lambda}$. The classical contribution is
\begin{align}
    F^{\mathrm{cl}}_{aa}(\vb*{\lambda})
    :=
    \sum_{\vb*{j}_{N}}
    \frac{
        \bigl(\partial_{\lambda_{a}}p_{\vb*{j}_{N}}(\vb*{\lambda})\bigr)^{2}
    }{
        p_{\vb*{j}_{N}}(\vb*{\lambda})
    }.
\end{align}
Using the product form of $p_{\vb*{j}_{N}}(\vb*{\lambda})$ in Eq. \eqref{eq:pproducform}, we have
\begin{align}
    \partial_{\lambda_{a}}
    \log p_{\vb*{j}_{N}}(\vb*{\lambda})
    =
    \sum_{T=1}^{N}
    \partial_{\lambda_{a}}
    \log c_{T,j_{T}\vert\vb*{j}_{T-1}}(\vb*{\lambda}).
\end{align}
For every fixed $\vb*{j}_{T-1}$,
\begin{align}
    \begin{split}
    &\sum_{j_{T}}
    c_{T,j_{T}\vert\vb*{j}_{T-1}}(\vb*{\lambda})
    \partial_{\lambda_{a}}
    \log c_{T,j_{T}\vert\vb*{j}_{T-1}}(\vb*{\lambda})\\
    &=
    \partial_{\lambda_{a}}
    \sum_{j_{T}}
    c_{T,j_{T}\vert\vb*{j}_{T-1}}(\vb*{\lambda})
    =0.
    \end{split}
\end{align}
Therefore, all cross terms between different values of $T$ vanish after averaging over the branch distribution, and the classical FIM decomposes exactly as
\begin{align}
    F^{\mathrm{cl}}_{aa}(\vb*{\lambda})
    =
    \sum_{T=1}^{N}
    \sum_{\vb*{j}_{T-1}}
    p_{\vb*{j}_{T-1}}(\vb*{\lambda})
    \sum_{j_{T}}
    \frac{
        \bigl(
            \partial_{\lambda_{a}}
            c_{T,j_{T}\vert\vb*{j}_{T-1}}(\vb*{\lambda})
        \bigr)^{2}
    }{
        c_{T,j_{T}\vert\vb*{j}_{T-1}}(\vb*{\lambda})
    }.
\end{align}
For each fixed $T$ and $\vb*{j}_{T-1}$, the innermost sum is the classical Fisher information obtained by measuring
$\Lambda^{\mathrm{S}}(\hat{\rho}^{\mathrm{S}}_{\vb*{j}_{T-1}}(\vb*{\lambda}))$
with the POVM
$\{\hat{\Pi}^{\mathrm{S}}_{T,j_{T}\vert\vb*{j}_{T-1}}\}_{j_{T}}$.
Hence, by Eq.~\eqref{eq:qfimfim},
\begin{align}
    F^{\mathrm{cl}}_{aa}(\vb*{\lambda})
    \leq
    \sum_{T=1}^{N}
    \sum_{\vb*{j}_{T-1}}
    p_{\vb*{j}_{T-1}}(\vb*{\lambda})
    \left[
        \mathbf{J}
        \left(
            \Lambda^{\mathrm{S}}
            \bigl(\hat{\rho}^{\mathrm{S}}_{\vb*{j}_{T-1}}(\vb*{\lambda})\bigr)
        \right)
    \right]_{aa}.
\end{align}

We now evaluate these bounds at the completely depolarizing Pauli channel,
\begin{align}
    \Lambda^{\mathrm{S}}\big\vert_{\vb*{\lambda}=\vb*{0}}
    :=
    \Lambda^{\mathrm{S}}_{0},
\end{align}
where $\vb*{0}=(0,0,\ldots,0)$. At this point, the output of every use of the Pauli channel is maximally mixed:
\begin{align}
    \Lambda^{\mathrm{S}}_{0}
    \left(
        \hat{\rho}^{\mathrm{S}}_{\vb*{j}_{T}}(\vb*{0})
    \right)
    =
    \frac{\hat{I}^{\mathrm{S}}}{2^{n}},
    \qquad
    0\leq T\leq N.
    \label{eq:rho}
\end{align}
For $1\leq a\leq4^{n}-1$, define
\begin{align}
    r_{\vb*{j}_{T},a}
    :=
    \Tr\!\left[
        \hat{\rho}^{\mathrm{S}}_{\vb*{j}_{T}}(\vb*{0})
        \hat{P}^{\mathrm{S}}_{a}
    \right].
\end{align}
Notice that $r_{\vb*{j}_{T},a}$ is the Pauli coefficient of the conditional system state entering the next use of the channel. Using the Pauli-channel parametrization and
$\Tr[\partial_{\lambda_{a}}\hat{\rho}^{\mathrm{S}}_{\vb*{j}_{T}}(\vb*{\lambda})]=0$, we obtain
\begin{align}
    \left.
    \partial_{\lambda_{a}}
    \Lambda^{\mathrm{S}}
    \left(
        \hat{\rho}^{\mathrm{S}}_{\vb*{j}_{T}}(\vb*{\lambda})
    \right)
    \right|_{\vb*{\lambda}=\vb*{0}}
    &=
    \frac{1}{2^{n}}
    r_{\vb*{j}_{T},a}
    \hat{P}^{\mathrm{S}}_{a}.
\end{align}
The possible $\vb*{\lambda}$-dependence of the input state produces no additional contribution at $\vb*{\lambda}=\vb*{0}$, because
\begin{align}
    \Lambda^{\mathrm{S}}_{0}
    \left(
        \left.
        \partial_{\lambda_{a}}
        \hat{\rho}^{\mathrm{S}}_{\vb*{j}_{T}}(\vb*{\lambda})
        \right|_{\vb*{\lambda}=\vb*{0}}
    \right)
    =
    \frac{\hat{I}^{\mathrm{S}}}{2^{n}}
    \Tr\!\left[
        \left.
        \partial_{\lambda_{a}}
        \hat{\rho}^{\mathrm{S}}_{\vb*{j}_{T}}(\vb*{\lambda})
        \right|_{\vb*{\lambda}=\vb*{0}}
    \right]
    =0.
\end{align}
For the branch state under consideration, the corresponding SLD operator is therefore
\begin{align}
    \left.\hat{L}_{a}\right|_{\vb*{\lambda}=\vb*{0}}
    =
    r_{\vb*{j}_{T},a}\hat{P}^{\mathrm{S}}_{a},
\end{align}
and hence
\begin{align}
    \left.
    \left[
        \mathbf{J}
        \left(
            \Lambda^{\mathrm{S}}
            \bigl(\hat{\rho}^{\mathrm{S}}_{\vb*{j}_{T}}(\vb*{\lambda})\bigr)
        \right)
    \right]_{aa}
    \right|_{\vb*{\lambda}=\vb*{0}}
    =
    \bigl(r_{\vb*{j}_{T},a}\bigr)^{2}.
\end{align}
Since $\hat{\rho}^{\mathrm{S}}_{\vb*{j}_{T}}(\vb*{0})$ is a normalized $n$-qubit state, its Pauli coefficients satisfy the purity constraint
\begin{align}
    \sum_{a=1}^{4^{n}-1}
    \bigl(r_{\vb*{j}_{T},a}\bigr)^{2}
    \leq
    2^{n}-1.
\end{align}
It follows that
\begin{align}
    \sum_{a=1}^{4^{n}-1}
    F^{\mathrm{cl}}_{aa}(\vb*{0})
    &\leq
    \sum_{T=1}^{N}
    \sum_{\vb*{j}_{T-1}}
    p_{\vb*{j}_{T-1}}(\vb*{0})
    \sum_{a=1}^{4^{n}-1}
    \bigl(r_{\vb*{j}_{T-1},a}\bigr)^{2}
    \nonumber\\
    &\leq
    N(2^{n}-1).
\end{align}
Similarly, the QFIM contribution from the final channel use satisfies
\begin{align}
    &\sum_{a=1}^{4^{n}-1}
    \sum_{\vb*{j}_{N}}
    p_{\vb*{j}_{N}}(\vb*{0})
    \left.
    \left[
        \mathbf{J}
        \left(
            \Lambda^{\mathrm{S}}
            \bigl(\hat{\rho}^{\mathrm{S}}_{\vb*{j}_{N}}(\vb*{\lambda})\bigr)
        \right)
    \right]_{aa}
    \right|_{\vb*{\lambda}=\vb*{0}}
    \nonumber\\
    &\hspace{30mm}
    =
    \sum_{\vb*{j}_{N}}
    p_{\vb*{j}_{N}}(\vb*{0})
    \sum_{a=1}^{4^{n}-1}
    \bigl(r_{\vb*{j}_{N},a}\bigr)^{2}
    \leq
    2^{n}-1.
\end{align}
Combining the preceding bounds gives
\begin{align}
    \sum_{a=1}^{4^{n}-1}
    [\mathbf{J}_{\vb*{\lambda}=\vb*{0}}]_{aa}
    \leq
    (N+1)(2^{n}-1).
\end{align}
Since $4^{n}-1=(2^{n}-1)(2^{n}+1)$, there exists at least one index $a$ such that
\begin{align}
    [\mathbf{J}_{\vb*{\lambda}=\vb*{0}}]_{aa}
    \leq
    \frac{N+1}{2^{n}+1}.
    \label{eq:smallfima}
\end{align}
This implies that for any final measurement on the encoded state $\hat{\rho}_{\vb*{\lambda}=\vb*{0}}$, the corresponding classical FIM satisfies
$[\mathbf{F}_{\vb*{0}}]_{aa}\leq[\mathbf{J}(\hat{\rho}_{\vb*{0}})]_{aa}$.
As a consequence, we have
\begin{align}
    \sup_{\vb*{\lambda}}
    \max_{a\in[4^{n}-1]}
    [\mathbf{F}^{-1}_{\vb*{\lambda}}]_{aa}
    &\geq
    [\mathbf{F}^{-1}_{\vb*{\lambda}=\vb*{0}}]_{aa}
    \geq
    \frac{1}{[\mathbf{F}_{\vb*{\lambda}=\vb*{0}}]_{aa}}
    \nonumber\\
    &\geq
    \frac{1}{[\mathbf{J}_{\vb*{\lambda}=\vb*{0}}]_{aa}}
    \geq
    \frac{2^{n}+1}{N+1}.
\end{align}
One execution of the protocol uses
$N_{\mathcal{E}}=N+1$ elementary Pauli channels. Therefore, Theorem~\ref{theorem:infinitelower} yields
\begin{align}
    N_{\mathrm{samp}}
    &\gtrsim
    N_{\mathcal{E}}
    W_{0}\!\left({\delta^{-2}}/{8\pi}\right)
    \epsilon^{-2}
    \sup_{\vb*{\lambda}}
    \max_{a\in[4^{n}-1]}
    [\mathbf{F}^{-1}_{\vb*{\lambda}}]_{aa}
    \nonumber\\
    &\geq
    W_{0}\!\left({\delta^{-2}}/{8\pi}\right)
    (2^{n}+1)\epsilon^{-2}.
\end{align}
This reproduces the exponential lower bound established in Ref.~\cite{entanglement-PhysRevLett.132.180805} by a different method. In contrast to that work, which fixes $\delta=1/3$, the present FIM-based bound retains the explicit dependence on $\delta$ and applies for
$0<\delta<1/\sqrt{8\pi e}$ in the asymptotic small-error regime $\epsilon\to0$.

\subsection{Application to Pauli expectation value estimation}\label{subsec:State}
Any $n$-qubit quantum state admits an expansion in the Pauli operator basis. Specifically, one can write
\begin{align}
    \hat{\rho}_{\vb*{c}}:=\frac{1}{2^{n}}\left(\hat{I}+\sum_{a=1}^{4^{n}-1}c_{a}\hat{P}_{a}\right),
\end{align}
where the real coefficients $\vb*{c}:=(c_{1},c_{2},\cdots,c_{4^{n}-1})^{\mathrm{T}}$ (the Pauli expectation values) uniquely specify $\hat{\rho}_{\vb*{c}}$. We now derive the bounds of the sample complexity required to learn the coefficients $\vb*{c}$ to accuracy $\epsilon$ in the $\ell_{\infty}$-distance. The sample complexity of estimating the Pauli expectation values $\vb*{c}$, both with and without quantum memory, was originally analyzed in Ref. \cite{memory-PhysRevLett.126.190505} using a different proof technique. {Here, quantum memory refers to the capability of simultaneously preparing multiple copies of an unknown quantum state and performing collective measurements across the states.}

\subsubsection{Quantum memory and collective measurement}
We first consider the setting in which multiple copies of the state $\hat{\rho}_{\vb*{c}}$ can be prepared simultaneously and arbitrary collective measurements across these copies are allowed. Realizing this scenario requires quantum memory. 

The estimation procedure proposed in Ref. \cite{memory-PhysRevLett.126.190505} proceeds in two stages. In the first stage, the absolute values $\{\abs{c_{a}}\}_{a}$ of the Pauli expectation values are estimated using a Bell measurement on two copies of $\hat{\rho}_{\vb*{c}}$. In the second stage, the signs of the coefficients are determined by performing measurements on additional copies of the state. Importantly, it is shown that estimating the absolute values $\abs{c_{a}}$ dominates the overall sample complexity, whereas determining the signs incurs only a subleading overhead \cite{memory-PhysRevLett.126.190505}. Motivated by this separation of costs, we restrict our attention to the task of estimating the absolute values of Pauli expectation values.

To estimate the absolute values of the Pauli expectation values, one prepares two identical copies of the state $\hat{\rho}_{\vb*{c}}$ and performs a Bell measurement on the joint system, obtaining an outcome $x$. {In the terminology of Sec.~\ref{subsec:parameterestimation}, this estimation protocol consumes two copies of the unknown state per run of the protocol, i.e., \(N_{\mathcal{E}} = 2\), hence \(N_{\mathrm{samp}} = 2M\).} The probability of observing outcome $x$ is given by
\begin{align}
    p_{\vb*{c}}(x):=\mathrm{Tr}[\hat{\Pi}_{x}(\hat{\rho}_{\vb*{c}}\otimes \hat{\rho}_{\vb*{c}})].
\end{align}

Substituting the Pauli expansion of $\hat{\rho}_{\vb*{c}}$ and using the explicit form of the Bell measurement projectors, 
\begin{align}
    \hat{\Pi}_x
    =
    \frac{1}{4^n}
    \sum_{k=0}^{4^n-1}
    (-1)^{\langle [x]_2,[k]_2\rangle}
    \hat P_k\otimes \hat P_k^T , \nonumber
\end{align}
this probability can be expressed as
\begin{align}
\begin{split}
    p_{\vb*{c}}(x)=&\frac{1}{16^{n}}\sum_{a,b=0}^{4^{n}-1}c_{a}c_{b}\\
    &\sum_{k}\mathrm{Tr}[(\hat{P}_{x}\otimes \hat{I})(\hat{P}_{k}\otimes \hat P_k^T)(\hat{P}_{x}\otimes \hat{I})(\hat{P}_{a}\otimes \hat{P}_{b})],
\end{split}
\end{align}
with $c_{0}=1$.
Evaluating the trace using the orthogonality and commutation relations of Pauli operators yields
\begin{align}
    p_{\vb*{c}}(x)
    =
    \frac{1}{4^n}
    \sum_{a=0}^{4^n-1}
    \eta_a c_a^2
    (-1)^{\langle [x]_2,[a]_2\rangle}, \label{eq:statetopaulientangle}
\end{align}
where $\eta_a := (-1)^{\sum_{k=1}^n a_{x,k}a_{z,k}}$.
Thus, the Bell-measurement distribution has the same Walsh--Hadamard structure as Eq. \eqref{eq:paulientangletostate}, with the variables \(C_a := \eta_a c_a^2\). Since the signs \(\eta_a=\pm1\) are known a priori, estimating \(C_a\) is equivalent to estimating \(c_a^2\). Consequently, the sample-complexity analysis for Pauli eigenvalue estimation can be directly translated to the estimation of the squared
Pauli expectation values.
In particular, estimating the known-sign variables \(C_a=\eta_a c_a^2\) is statistically equivalent to estimating Pauli eigenvalues, allowing all corresponding sample-complexity results to be transferred immediately to the problem of learning the magnitudes of Pauli expectation values. Moreover, estimating $\abs{c_{a}}^{2}$ within additive error $\epsilon^{2}$ implies estimating $\abs{c_{a}}$ within additive error $\epsilon$.

Finally, by invoking Theorem~\ref{theorem:infiniteupper}, we conclude that the sample complexity $N_{\mathrm{samp}}$ required to satisfy the $(\epsilon,\ell_{\infty},\delta)$-criterion for Pauli expectation value learning with quantum memory is upper bounded by
\begin{align}
    N_{\mathrm{samp}} \lesssim {2}W_{0}(8\pi^{-1}\delta^{-2}4^{2n})\epsilon^{-4},
\end{align}
where {$W_{0}(8\pi^{-1}\delta^{-2}4^{2n}) \le \log \frac{8}{\pi} + 2n\log 4 + 2\log \delta^{-1}$. Here, we note that the factor of \(2\) arises from \(N_{\mathcal{E}} = 2\).}
{This exhibits a polynomial (linear) dependence on the number of qubits $n$, consistent with the upper bound established in Ref. \cite{memory-PhysRevLett.126.190505}, which scales as $O(({n}+\log{\delta^{-1}})\epsilon^{-4})$. }

\subsubsection{Single copy of the state}
Next, we consider the setting in which only a single copy of the state is available per measurement {where $N_{\mathcal{E}}=1$, equivalently, $N_{\mathrm{samp}}=M$}. Let $\{\hat{\Pi}_{x}\}_{x}$ be a general POVM satisfying $\sum_{x}\hat{\Pi}_{x}=\hat{I}$. We analyze the estimation problem at the maximally mixed state, i.e., $\vb*{c}=\vb*{0}$. Since each POVM element is positive semidefinite, it admits a spectral decomposition
\begin{align}
    \hat{\Pi}_{x}=\sum_{j}\lambda_{xj}\dyad{\phi_{xj}},
\end{align}
where $\lambda_{xj}\geq 0$. For the maximally mixed state $\hat{\rho}_{\vb*{c}=\vb*{0}}=\hat{I}/2^{n}$, the probability of obtaining outcome $x$ is
\begin{align}
    p_{\vb*{0}}(x)
    =\mathrm{Tr}[\hat{\Pi}_{x}\hat{\rho}_{\vb*{c}=\vb*{0}}]
    =\frac{1}{2^{n}}\mathrm{Tr}[\hat{\Pi}_{x}]
    =\frac{1}{2^{n}}\sum_{j}\lambda_{xj}.
\end{align}
The derivative of this probability with respect to $c_{a}$, evaluated at $\vb*{c}=\vb*{0}$, is given by
\begin{align}
    \pdv{p_{\vb*{c}}(x)}{c_{a}}\bigg|_{\vb*{c}=\vb*{0}}
    =\frac{1}{2^{n}}\mathrm{Tr}[\hat{\Pi}_{x}\hat{P}_{a}]
    =\frac{1}{2^{n}}\sum_{j}\lambda_{xj}
    \langle \phi_{xj} \vert \hat{P}_{a} \vert \phi_{xj}\rangle.
\end{align}
We now consider the diagonal components of the corresponding FIM evaluated at $\vb*{c}=\vb*{0}$.
The diagonal element corresponding to $c_{a}$ takes the form
\begin{align}
    [\mathbf{F}_{\vb*{c}=\vb*{0}}]_{aa}
    =\frac{1}{2^{n}}
    \sum_{x}\frac{\mathrm{Tr}[\hat{\Pi}_{x}\hat{P}_{a}]^{2}}{\mathrm{Tr}[\hat{\Pi}_{x}]}. \label{eq:qfisingle}
\end{align}
Using the Cauchy--Schwarz inequality,
\begin{align}
    \left(\sum_{j}c_{j}a_{j}b_{j}\right)^{2}
    \leq
    \left(\sum_{j}c_{j}a_{j}^{2}\right)
    \left(\sum_{j}c_{j}b_{j}^{2}\right),
\end{align}
we obtain 
\begin{align}
    \begin{split}
    &\left(\mathrm{Tr}[\hat{\Pi}_{x}\hat{P}_{a}]\right)^{2}
    =
    \left(\sum_{j}\lambda_{xj}
    \langle \phi_{xj} \vert \hat{P}_{a} \vert \phi_{xj}\rangle\right)^{2} \\
    &\leq
    \left(\sum_{j}\lambda_{xj}\right)
    \left(\sum_{j}\lambda_{xj}
    \langle \phi_{xj} \vert \hat{P}_{a} \vert \phi_{xj}\rangle^{2}\right).
    \end{split}
\end{align}
Substituting this bound into Eq.~\eqref{eq:qfisingle} yields
\begin{align}
    [\mathbf{F}_{\vb*{c}=\vb*{0}}]_{aa}
    \leq
    \frac{1}{2^{n}}
    \sum_{x,j}\lambda_{xj}
    \langle \phi_{xj} \vert \hat{P}_{a} \vert \phi_{xj}\rangle^{2}.
\end{align}
Summing over all Pauli indices, we obtain
\begin{align}
    \begin{split}
    &\sum_{a=1}^{4^{n}-1}[\mathbf{F}_{\vb*{c}=\vb*{0}}]_{aa}
    \le 
    \frac{1}{2^{n}}
    \sum_{x,j}\lambda_{xj}
    \sum_{a=1}^{4^{n}-1}
    \langle \phi_{xj} \vert \hat{P}_{a} \vert \phi_{xj}\rangle^{2} \\
    &\leq
    \frac{2^{n}-1}{2^{n}}
    \sum_{x,j}\lambda_{xj}
    =
    \frac{2^{n}-1}{2^{n}}
    \sum_{x}\mathrm{Tr}[\hat{\Pi}_{x}] = 2^{n}-1.
    \end{split}
\end{align}
Here, we have used the fact that each $\dyad{\phi_{xj}}$ is a pure state and therefore satisfies the purity constraint in Eq.~\eqref{eq:forlemma1}
\begin{align}
    \sum_{a=1}^{4^{n}-1}
    \mathrm{Tr}[\dyad{\phi_{xj}}\hat{P}_{a}]^{2}
    \leq 2^{n}-1.
\end{align}
Therefore, there exists $a$ such that 
\begin{align}
    [\mathbf{F}_{\vb*{c}=\vb*{0}}]_{aa} \leq \frac{2^{n}-1}{4^{n}-1}.
\end{align}
Using the basic relation 
\begin{align}
   \sup_{\vb*{c}}\max_{a}[\mathbf{F}^{-1}_{\vb*{c}}]_{aa}  \ge [\mathbf{F}^{-1}_{\vb*{c}=\vb*{0}}]_{aa} \geq 1/[\mathbf{F}_{\vb*{c}=\vb*{0}}]_{aa},
\end{align}
together with Theorem~\ref{theorem:infinitelower},
we conclude that the sample complexity $N_{\mathrm{samp}}$ required to satisfy the $(\epsilon,\ell_{\infty},\delta)$-criterion for Pauli expectation value learning without quantum memory is lower bounded by
\begin{align}
    N_{\mathrm{samp}} \gtrsim W_{0}({\delta^{-2}}/{8\pi}){\epsilon^{-2}}{2^{n}},
\end{align}
{where $1 < W_{0}(\delta^{-2}/8\pi) \le 2\log \delta^{-1}-\log{8\pi}$ for $0 < \delta < 1/(\sqrt{8\pi e})$.} {Here, each measurement round consumes a single copy of the unknown state, so the state-copy count coincides with the sample complexity $M$.}
{This matches the exponential lower bound in the number of qubits $n$, namely $\Omega(2^{n}\epsilon^{-2})$, a result originally established in Ref. \cite{memory-PhysRevLett.126.190505} using a different proof techniques from our FIM approach. The difference is that Ref. \cite{memory-PhysRevLett.126.190505} fixes $\epsilon=1/2$ and $\delta=1/3$, whereas our lower bound depends explicitly on $\delta$ and applies in the range $0<\delta < 1/\sqrt{8\pi e}$ in the asymptotic small error regime $\epsilon\to 0$.}

Lastly, it is worth investigating the corresponding QFIM. As in Sec.~\ref{separablePauli}, one may ask whether an exponentially growing lower bound can already be inferred directly from the QFIM, independently of measurement restrictions. From Eq.~\eqref{eq:pauliqfim}, the diagonal component of the inverse of the QFIM with respect to $c_{a}$ is given by
\begin{align}
    [\mathbf{J}^{-1}_{\vb*{c}}]_{aa}=1-c^{2}_{a}.
\end{align}
In particular, at the maximally mixed state $\vb*{c}=\vb*{0}$, this reduces to 
\begin{align}
    [\mathbf{J}^{-1}_{\vb*{c}=\vb*{0}}]_{aa}=1,\quad \forall a.
\end{align}
Consequently, the QFIM alone does not yield an exponentially large lower bound. This observation highlights an important distinction between the FIM and QFIM analyses in the present setting. While the diagonal components of the QFIM quantify the ultimate sensitivity achievable when estimating a single parameter with an optimal measurement tailored to that parameter, they do not capture the incompatibility between the optimal measurements for different Pauli expectation values. The exponential lower bound derived above therefore does not originate from a lack of quantum sensitivity at the state level, but rather from the fundamental incompatibility of simultaneously estimating many non-commuting parameters using a single measurement strategy.

\section{Main result 2: Sample complexity of $\ell_{2}$-distance based learning}\label{sec:main2}
In this section, we establish an upper bound on the sample complexity under the $\ell_{2}$-criterion. In addition, we show that the lower bound on the sample complexity for learning with respect to the $\ell_{2}$-distance can be characterized by the largest eigenvalue of the inverse FIM.

\subsection{Upper bound on $\ell_{2}$-distance learning}\label{subsec:upper2}
Let us assume that the assumptions (A1)–(A2) and the standard regularity conditions (R1)-(R5) in Appendix \ref{appen:sraftpr} are satisfied. We establish an upper bound on the minimal sample complexity required to guarantee the $(\epsilon,\ell_{2},\delta)$-criterion.

{
\begin{theorem}[Simplified small-error upper bound for $\ell_2$ learning]
\label{theorem:twoupper}
For fixed $0<\delta\le 1$ and $d<\infty$, the minimal number of repetitions
$M=M(\epsilon)$ required to guarantee that the MLE satisfies
\begin{align}
    \Pr\!\left[
    \| \tilde{\vb*{\theta}}^{\mathrm{ML}}
    -\vb*{\theta} \|_{2}
    \le \epsilon
    \right]
    \ge 1-\delta
    \quad
    \text{for all } \vb*{\theta}\in\Theta
    \label{eq:PACtwo}
\end{align}
is upper bounded, in the small-error limit $\epsilon\to0$, as
\begin{align}
    M
    \lesssim
    d\,
    W_0(8\pi^{-1}\delta^{-2}d^2)
    \sup_{\vb*{\theta}\in\Theta}
    \max_{a\in[d]}
    [\mathbf F_{\vb*{\theta}}^{-1}]_{aa}
    \epsilon^{-2}.
    \label{eq:theorem3_small_errorm}
\end{align}
Equivalently, the minimal sample complexity is upper bounded as
\begin{align}
    N_{\mathrm{samp}}
    \lesssim
    N_{\mathcal{E}}d\,
    W_0(8\pi^{-1}\delta^{-2}d^2)
    \sup_{\vb*{\theta}\in\Theta}
    \max_{a\in[d]}
    [\mathbf F_{\vb*{\theta}}^{-1}]_{aa}
    \epsilon^{-2}.
    \label{eq:theorem3_small_errors}
\end{align}
\end{theorem}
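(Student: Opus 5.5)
The plan is to reduce the $\ell_{2}$ criterion to the $\ell_{\infty}$ criterion and then reuse Theorem~\ref{theorem:infiniteupper} with a rescaled error. The key observation is elementary: for any $\vb*{v}\in\mathbb{R}^{d}$,
\begin{align}
    \|\vb*{v}\|_{2}\le \sqrt{d}\,\|\vb*{v}\|_{\infty},
\end{align}
so the event $\{\|\tilde{\vb*{\theta}}^{\mathrm{ML}}-\vb*{\theta}\|_{\infty}\le \epsilon/\sqrt{d}\}$ is contained in $\{\|\tilde{\vb*{\theta}}^{\mathrm{ML}}-\vb*{\theta}\|_{2}\le \epsilon\}$. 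Consequently, any number of repetitions $M$ for which the MLE satisfies the $(\epsilon/\sqrt{d},\ell_{\infty},\delta)$-criterion uniformly over $\Theta$ also satisfies the $(\epsilon,\ell_{2},\delta)$-criterion uniformly over $\Theta$. The minimal $M$ for the $\ell_{2}$ task is therefore at most the minimal $M$ for the $\ell_{\infty}$ task at error $\epsilon' := \epsilon/\sqrt{d}$.

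Next I would invoke Theorem~\ref{theorem:infiniteupper} with $\epsilon'$ in place of $\epsilon$. Since $d$ is fixed and finite, $\epsilon\to0$ is equivalent to $\epsilon'\to0$, so the asymptotic statement transfers directly:
\begin{align}
    M_{\ell_{2}}(\epsilon)\le M_{\ell_{\infty}}(\epsilon/\sqrt{d})
    \lesssim W_0(8\pi^{-1}\delta^{-2}d^2)\,\mathfrak{F}\,(\epsilon/\sqrt{d})^{-2}
    = d\,W_0(8\pi^{-1}\delta^{-2}d^2)\,\mathfrak{F}\,\epsilon^{-2},
\end{align}
with $\mathfrak{F}$ as in Remark~\ref{remark}. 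In terms of the $\limsup$ definition of $\lesssim$, $\limsup_{\epsilon\to0} M_{\ell_{2}}(\epsilon)/B_{\epsilon}\le \limsup_{\epsilon'\to0} M_{\ell_{\infty}}(\epsilon')/B'_{\epsilon'}\le 1$, because $B_{\epsilon}=B'_{\epsilon/\sqrt d}$ identically. Multiplying by $N_{\mathcal{E}}$ then yields Eq.~\eqref{eq:theorem3_small_errors}.

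The only point needing care, and the place I expect friction, is the admissible range of finite $\epsilon$ in the regularity conditions (R1)--(R5) of Appendix~\ref{appen:sraftpr}. The non-asymptotic version of Theorem~\ref{theorem:infiniteupper} presumably holds only for $\epsilon$ below some threshold tied to neighborhoods where the Hessian and third-derivative bounds are controlled. I would check that $\epsilon/\sqrt{d}$ lies in that range whenever $\epsilon$ is small enough, which is automatic as $\epsilon\to0$. I would also confirm that the finite-$\epsilon$ bound from Appendix~\ref{appensec:upperinfinite} is monotone in $\epsilon$, so the substitution is legitimate. If a sharper constant were desired, one could instead apply a union bound directly to Gaussian tails with per-coordinate threshold $\epsilon/\sqrt d$, but this reproduces the same expression, so the reduction suffices.
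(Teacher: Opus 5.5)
Your proposal is correct and follows the paper's proof: the paper also uses $\|\vb*{v}\|_2\le\sqrt{d}\,\|\vb*{v}\|_\infty$ to reduce to the $\ell_\infty$ event at threshold $\epsilon/\sqrt{d}$. It then reruns the $\ell_\infty$ argument with $\tau_-$, $\tau_{0-}$, and $D$ equal to exactly the $\ell_\infty$ quantities evaluated at $\epsilon/\sqrt{d}$, which is the same as your black-box use of Theorem~\ref{theorem:infiniteupper}. You do not need to check monotonicity of the finite-$\epsilon$ bound: the event inclusion alone gives $M_{\ell_2}(\epsilon)\le M_{\ell_\infty}(\epsilon/\sqrt{d})$, and $\epsilon/\sqrt{d}<\epsilon_0$ holds automatically once $\epsilon$ is small.
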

}
The proof of Theorem~\ref{theorem:twoupper} is provided in Appendix~\ref{appensec:uppertwo}, where we also derive a fully explicit non-asymptotic finite-\(\epsilon\) bound.

{Theorem~\ref{theorem:twoupper} shows that, in the small-error regime \(\epsilon \to 0\), the sample complexity under the \(\ell_2\)-criterion contains an additional multiplicative factor \(d\) compared with the \(\ell_\infty\)-criterion in Theorems \ref{theorem:infiniteupper} and \ref{theorem:infinitelower}. The additional factor \(d\) in Eqs.~\eqref{eq:theorem3_small_errorm} and \eqref{eq:theorem3_small_errors} originates from the norm conversion used to reduce the \(\ell_2\)-criterion to the \(\ell_\infty\)-criterion: it is sufficient to require \(\|\tilde{\vb*{\theta}}^{\mathrm{ML}}-\vb*{\theta}\|_\infty \le \epsilon/\sqrt d\). Since the FIM-based sample complexity scales as the inverse square of the target accuracy, this replacement produces the multiplicative factor \(d\). This factor is therefore distinct from the logarithmic \(d\)-dependence in the \(\ell_\infty\) bounds, which arises from simultaneously controlling the failure probability over all coordinates.}

\subsection{Lower bound on $\ell_{2}$-distance learning}\label{subsec:lower2}
We establish a lower bound on the minimal sample complexity required to guarantee the $(\epsilon,\ell_{2},\delta)$-criterion under the the assumptions (A1)–(A2) and the standard regularity conditions (R1)-(R5) in Appendix \ref{appen:sraftpr} are satisfied.
{
\begin{theorem}[Simplified small-error lower bound for $\ell_2$ learning]
\label{theorem:lowertwo}
For fixed $0<\delta<1/\sqrt{8\pi e}$ and $d<\infty$, the minimal
number of repetitions $M=M(\epsilon,\delta,d)$ required to guarantee that
the MLE satisfies
\begin{align}
    \Pr\!\left[
    \| \tilde{\vb*{\theta}}^{\mathrm{ML}}
    -\vb*{\theta} \|_{2}
    \le \epsilon
    \right]
    \ge 1-\delta
    \quad
    \text{for all } \vb*{\theta}\in\Theta
\end{align}
is lower bounded, in the small-error limit $\epsilon\to0$, as
\begin{align}
    M
    \gtrsim
    W_0\left(\delta^{-2}/8\pi\right)
    \sup_{\vb*{\theta}\in\Theta}
    \mu_{\max}\!\left(\mathbf F_{\vb*{\theta}}^{-1}\right)
    \epsilon^{-2},
    \label{eq:theorem4_small_errorm}
\end{align}
where $ \mu_{\max}\!\left(\mathbf F_{\vb*{\theta}}^{-1}\right)$ is the largest eigenvalue of $\mathbf F_{\vb*{\theta}}^{-1}$.
Equivalently, the minimal sample complexity is lower bounded as
\begin{align}
    N_{\mathrm{samp}}
    \gtrsim
    N_{\mathcal{E}}\,
    W_0\!\left(\delta^{-2}/8\pi\right)
    \sup_{\vb*{\theta}\in\Theta}
    \mu_{\max}\!\left(\mathbf F_{\vb*{\theta}}^{-1}\right)
    \epsilon^{-2}.
    \label{eq:theorem4_small_errors}
\end{align}
\end{theorem}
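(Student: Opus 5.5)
The plan is to reduce the $\ell_2$ lower bound to a one-dimensional lower bound along the worst eigendirection of $\mathbf F_{\vb*{\theta}}^{-1}$, and then reuse the Gaussian-tail argument behind Theorem~\ref{theorem:infinitelower}. Fix $\vb*{\theta}\in\Theta$ and let $\vb*{u}$ be a unit eigenvector of $\mathbf F_{\vb*{\theta}}^{-1}$ with eigenvalue $\mu_{\max}(\mathbf F_{\vb*{\theta}}^{-1})$. By Cauchy--Schwarz, $|\vb*{u}^{\mathrm T}(\tilde{\vb*{\theta}}^{\mathrm{ML}}-\vb*{\theta})|\le\|\tilde{\vb*{\theta}}^{\mathrm{ML}}-\vb*{\theta}\|_2$. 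Hence the $(\epsilon,\ell_2,\delta)$-criterion implies
\begin{align}
\Pr\!\left[\,|\vb*{u}^{\mathrm T}(\tilde{\vb*{\theta}}^{\mathrm{ML}}-\vb*{\theta})|\le\epsilon\right]\ge1-\delta .
\end{align}
It therefore suffices to lower bound the $M$ needed for this scalar event. This is exactly the situation treated in the proof of Theorem~\ref{theorem:infinitelower}, with the coordinate functional $\vb*{e}_a^{\mathrm T}$ replaced by $\vb*{u}^{\mathrm T}$. The variance $[\mathbf F^{-1}_{\vb*{\theta}}]_{aa}$ is replaced by $\vb*{u}^{\mathrm T}\mathbf F_{\vb*{\theta}}^{-1}\vb*{u}=\mu_{\max}(\mathbf F_{\vb*{\theta}}^{-1})$.

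The key steps follow the $\ell_\infty$ lower-bound proof.
First, under (A1)--(A2) and (R1)--(R5), expand the score around $\vb*{\theta}$ to write
\begin{align}
\sqrt M(\tilde{\vb*{\theta}}^{\mathrm{ML}}-\vb*{\theta})=\mathbf F_{\vb*{\theta}}^{-1}\frac{1}{\sqrt M}\sum_{i=1}^M\nabla\log p_{\vb*{\theta}}(x_i)+R_M .
\end{align}
The remainder $R_M$ is controlled by the third-derivative bound in (R1)--(R5) on a high-probability event.
Second, project onto $\vb*{u}$. The leading term $\vb*{u}^{\mathrm T}\mathbf F^{-1}_{\vb*{\theta}}\sum_i\nabla\log p_{\vb*{\theta}}(x_i)/\sqrt M$ is a normalized sum of i.i.d.\ mean-zero scalars with variance $\mu_{\max}$. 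Apply a Berry--Esseen (or anti-concentration) estimate, exactly as in Appendix~\ref{appensec:lowerinfinite}, to get
\begin{align}
\Pr\!\left[\,|\vb*{u}^{\mathrm T}(\tilde{\vb*{\theta}}^{\mathrm{ML}}-\vb*{\theta})|>\epsilon\right]\ge2\bar\Phi\!\left(\epsilon\sqrt{M/\mu_{\max}}\right)-\eta_M(\epsilon).
\end{align}
Here $\eta_M(\epsilon)$ collects the Berry--Esseen and remainder errors.
Third, use the Gaussian tail lower bound $\bar\Phi(t)\ge\frac{t}{(1+t^2)\sqrt{2\pi}}e^{-t^2/2}$. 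Requiring this failure probability to be at most $\delta$ gives an inequality of the form $t^2e^{t^2}\gtrsim\delta^{-2}/8\pi$ with $t^2=\epsilon^2M/\mu_{\max}$. Inverting it with the Lambert function yields $M\gtrsim W_0(\delta^{-2}/8\pi)\,\mu_{\max}(\mathbf F^{-1}_{\vb*{\theta}})\,\epsilon^{-2}$. The condition $\delta<1/\sqrt{8\pi e}$ makes the argument exceed $e$, so $W_0>1$ and the inversion lies in the relevant branch. This is the same algebra as in Theorem~\ref{theorem:infinitelower}.
Finally, since the guarantee must hold for every $\vb*{\theta}\in\Theta$, take the supremum over $\vb*{\theta}$, and multiply by $N_{\mathcal E}$ for the $N_{\mathrm{samp}}$ statement.

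The main obstacle is showing that the error terms $\eta_M(\epsilon)$ and $R_M$ are negligible in the regime $\epsilon\to0$ with $M\sim\epsilon^{-2}$. With $t$ fixed, this regime sends $M\to\infty$, so the Berry--Esseen error $O(M^{-1/2})$ and the Taylor remainder $O(\epsilon^2\sqrt M)=O(\epsilon)$ both vanish. They must vanish relative to the fixed target $\delta$, which is what produces the $(1+o(1))$ in the $\gtrsim$ statement.

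The supremum over $\Theta$ also needs care, since it need not be attained. I would pick a near-maximizer $\vb*{\theta}_\epsilon$ with a fixed slack and let the slack go to zero after $\epsilon\to0$. This relies on the uniformity over $\Theta$ of the constants in (R1)--(R5), as used in Appendix~\ref{appensec:lowerinfinite}.
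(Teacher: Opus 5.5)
Your proposal follows the paper's proof essentially step for step. Appendix~\ref{appensec:lowertwo} likewise projects the fixed-point expansion of $\vb*{\Delta}^{\mathrm{ML}}$ onto a top eigenvector $\vb*{u}_{\vb*{\theta}}$ of $\mathbf{F}_{\vb*{\theta}}^{-1}$, uses $|\vb*{u}_{\vb*{\theta}}^{\mathrm{T}}\vb*{\Delta}^{\mathrm{ML}}|\le\|\vb*{\Delta}^{\mathrm{ML}}\|_2$, reruns the Berry--Esseen, Mills-ratio and Lambert-$W_0$ argument of the $\ell_\infty$ case with variance $\mu_{\max}(\mathbf{F}_{\vb*{\theta}}^{-1})/M$, and ends with the same near-supremum step.

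The one thing to tighten is your reduction to the scalar event. The paper keeps the event $\{\|\vb*{\Delta}^{\mathrm{ML}}\|_2\le\epsilon\}\cap\mathcal{G}$ and bounds $\Pr[\|\vb*{\Delta}^{\mathrm{ML}}\|_2\le\epsilon]\le\Pr[|\vb*{u}_{\vb*{\theta}}^{\mathrm{T}}\mathbf{F}_{\vb*{\theta}}^{-1}\vb*{S}_{\vb*{\theta}}|\le\epsilon']+\delta_H+\delta_R$. It needs $\|\vb*{\Delta}^{\mathrm{ML}}\|_2\le\epsilon$ here, not just $|\vb*{u}^{\mathrm{T}}\vb*{\Delta}^{\mathrm{ML}}|\le\epsilon$, which says nothing about the other components. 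That $\ell_2$ bound is what controls $\mathbf{F}_{\vb*{\theta}}^{-1}(\mathbf{H}_{\vb*{\theta}}+\mathbf{F}_{\vb*{\theta}})\vb*{\Delta}^{\mathrm{ML}}$ and $\mathbf{F}_{\vb*{\theta}}^{-1}\vb*{r}_{\vb*{\theta}}(\vb*{\Delta}^{\mathrm{ML}})$, and it keeps the (R4) envelope local. Your own $O(\epsilon^2\sqrt{M})$ remainder estimate already tacitly uses it.
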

}
The proof of Theorem~\ref{theorem:lowertwo} is provided in Appendix~\ref{appensec:lowertwo}, where we also derive a fully explicit non-asymptotic finite-\(\epsilon\) bound.
The $\ell_{2}$ lower bound is determined by the largest eigenvalue of the inverse FIM. This captures the most statistically ill-conditioned direction in the parameter space, which dictates the minimal sample size required to control the overall quadratic error.

\subsection{Application to Pauli channel learning}\label{subsec:pauli2}
\subsubsection{Entanglement assisted scheme}
We consider the same estimation protocol as in Sec.~\ref{sec:entanglementPauli}, for which we have $N_{\mathcal{E}}=1$ and
\begin{align}
    \sup_{\vb*{\lambda}}\max_{a \in [d]}[\mathbf{F}^{-1}_{\vb*{\lambda}}]_{aa} = 1.
\end{align}
Combining this with Theorem~\ref{theorem:twoupper}, to learn $\vb*{\lambda}$ under the $(\epsilon,\ell_{2},\delta)$-criterion, the minimal sample complexity $N_{\mathrm{samp}}$ is upper bounded as
\begin{align}
    N_{\mathrm{samp}} \lesssim 4^{n}W_{0}(8\pi^{-1}\delta^{-2}4^{2n})\epsilon^{-2}.
\end{align}
Next, let us derive the lower bound. As shown in Sec.~\ref{sec:entanglementPauli}, when a maximally entangled state is employed and the Bell measurement is performed, the probability of obtaining outcome $x$ is given by
\begin{align}
    p_{\vb*{\lambda}}(x):=\frac{1}{4^{n}}\sum_{a=0}^{4^{n}-1}\lambda_{a}(-1)^{\langle [x]_{2},[a]_{2}\rangle}. \label{eq:paulientangletostatel2distance}
\end{align}
Our goal here is to determine the maximum eigenvalue of the inverse FIM with respect to $\vb*{\lambda}$.

Although \(\lambda_0\) is fixed by normalization, it is convenient to introduce the following auxiliary full information matrix \(\mathbf{F}_0\). We first compute the FIM with respect to the full parameter set $(\lambda_{0},\lambda_{1},\cdots,\lambda_{4^{n}-1})$, from which the FIM corresponding to $\vb*{\lambda}$ can be readily obtained. The FIM associated with this parameterization, which we denote by $\mathbf{F}_{0}$, is defined according to Eq.~\eqref{eq:defFIM} and admits the decomposition
\begin{align}
    \mathbf{F}_{0}=\mathbf{U}^{\mathrm{T}}\mathbf{D}\mathbf{U}. \label{eq:fisherfull}
\end{align}
Here, $\mathbf{U} \in \mathbb{R}^{4^{n} \times 4^{n}}$ and $\mathbf{D} \in \mathbb{R}^{4^{n} \times 4^{n}}$ are given elementwise by
\begin{align}
    [\mathbf{U}]_{ab}:=\frac{1}{\sqrt{4^{n}}}(-1)^{\langle [a]_{2},[b]_{2}\rangle}, \quad 
    [\mathbf{D}]_{ab}:=\frac{1}{p_{a}4^{n}}\delta_{ab}.
\end{align}
The matrix $\mathbf{U}$ is orthogonal and corresponds to the normalized Walsh--Hadamard transform, while $\mathbf{D}$ is diagonal, with entries given by the inverse of the Pauli error rates. As a consequence, the eigenvalues of $\mathbf{F}_{0}$ are given by the diagonal elements of $\mathbf{D}$.

For notational simplicity, for any {positive-definite matrix} $\mathbf A$ we denote its eigenvalues in nondecreasing order by
$0 < \mu_{1}(\mathbf A)\le \mu_{2}(\mathbf A)\le \cdots$.

The FIM with respect to the reduced parameter vector $\vb*{\lambda}$ is obtained by restricting to the subspace orthogonal to the all-ones direction (equivalently, by removing the coordinate corresponding to $\lambda_{0}$). Concretely, the resulting matrix $\mathbf{F}$ is the principal submatrix of $\mathbf{F}_{0}$ obtained by deleting the first row and the first column. Hence, combining Eq.~\eqref{eq:fisherfull} with the eigenvalue interlacing theorem for principal submatrices \cite{inter-HornJohnsonMatrixAnalysis}, we have
\begin{align}
    \mu_{1}(\mathbf F_0)\le \mu_{1}(\mathbf F)\le \mu_{2}(\mathbf F_0),
\end{align}
and therefore
\begin{align}
    \mu_{\max}(\mathbf F^{-1})
    =\mu_{1}^{-1}(\mathbf F)
    \ge \mu_{2}^{-1}(\mathbf F_0).
    \label{eq:lambda_min_F_ent_assisted_lb}
\end{align}
Since $\mathbf F_0=\mathbf U^{\mathrm T}\mathbf D\mathbf U$ with $\mathbf U$ orthogonal, the eigenvalues of $\mathbf F_0$ coincide with the diagonal entries of $\mathbf D$, i.e., $\{1/(4^n p_a)\}_a$. Let $\{p_{(1)}\ge p_{(2)}\ge \cdots\}$ be the Pauli rates sorted in nonincreasing order. Then
$\mu_{2}(\mathbf F_0)=1/(4^n p_{(2)})$, and thus
\begin{align}
    \mu_{\max}(\mathbf F^{-1}) \ge 4^n\, p_{(2)}.
    \label{eq:lambda_min_F_ent_assisted}
\end{align}
To obtain an explicit lower bound, consider the valid choice of Pauli rates
\begin{align}
p_{(1)}=\frac{1}{2},\quad
p_{(2)}=\frac{1}{4},\quad
p_{(a)}=\frac{1}{4(4^n-2)}
\quad \text{for all }a>2. \label{eq:p2specific}
\end{align}
For this choice, Eq.~\eqref{eq:p2specific} yields
\begin{align}
\mu_{\max}(\mathbf F^{-1})\ge 4^{n-1}.
\end{align}
As a consequence, according to Theorem~\ref{theorem:lowertwo}, the lower bound is
 \begin{align}
    N_{\mathrm{samp}} \gtrsim W_{0}({\delta^{-2}}/{8\pi}){\epsilon^{-2}}4^{n-1}.
\end{align}
Therefore, even when entanglement is employed, the estimation protocol still requires an exponentially large sample complexity.

\section{Discussion}\label{sec:discussion}
In this work, we derive task-independent upper and lower bounds on the sample complexity required to learn the parameters of a given quantum system within asymptotic additive error $\epsilon \to 0$ and with success probability at least $1-\delta$, under both the $\ell_{\infty}$- and $\ell_{2}$-criteria. Our bounds apply to general learning protocols and recover tight task-specific bounds, particularly for Pauli eigenvalue learning and Pauli expectation-value learning \cite{memory-PhysRevLett.126.190505, entanglement-PhysRevA.105.032435, entanglement-PhysRevLett.132.180805}, in the asymptotic limit $\epsilon \to 0$. However, the finite $\epsilon$ bounds presented in Theorems \ref{appentheorem1}-\ref{appentheorem4} in the Appendices are not as tight as the corresponding bounds previously obtained in Refs. \cite{memory-PhysRevLett.126.190505, entanglement-PhysRevA.105.032435, entanglement-PhysRevLett.132.180805}. It would therefore be an interesting direction for future work to investigate whether tighter yet still task-independent bounds can be established without invoking the asymptotic limit $\epsilon \to 0$.

Notably, these bounds are governed by the FIM, a central quantity in quantum metrology. Quantum metrology has a longer history than quantum learning theory, and a wide range of techniques for maximizing Fisher information has been developed. For instance, prior work has demonstrated how symmetries of quantum systems can be exploited to enhance the Fisher information \cite{discuss-PhysRevLett.133.040202}. In addition, the application of quantum error correction to protect Fisher information in noisy settings has been extensively studied \cite{qec-zhou2018achieving, qec-PhysRevLett.112.080801, qec-PhysRevLett.112.150801, qec-PhysRevLett.112.150802, qec-PhysRevLett.122.040502, qec-PhysRevX.7.041009, qec-PRXQuantum.2.010343, qec-rojkov2022bias, qec-zhuang2020distributed, qec-kwon2025restoring}. It would be worthwhile to investigate whether such metrological tools can systematically inform and strengthen the theoretical foundations of quantum learning.

{Our analysis focuses exclusively on sample complexity. Query complexity, namely, the number of times the quantum system must be accessed or interrogated during the learning process, or the total interrogation time, also constitutes a fundamental resource. Hamiltonian learning from dynamical queries provides a representative setting for such an extension. More concretely, one may consider learning the parameters of an unknown Hamiltonian \(\hat{H}_{\vb*{\theta}}=\sum_a\theta_a \hat{H}_a\) through controlled or uncontrolled queries to the time evolution \(\hat{U}_{\vb*{\theta}}(t)=e^{-i\hat{H}_{\vb*{\theta}} t}\) \cite{learningex-j7b8-pb77, learningex-PhysRevLett.130.200403}. In this context, a central open question is how the query complexity scales with the target precision $\epsilon$ and failure probability $\delta$. Exploring whether query complexity can be characterized through the FIM represents another promising direction for future research.}

In establishing our results, we focus primarily on the regime in which the FIM is invertible \cite{bias-kwon2025, bias-PhysRevX.10.031023}. The singular FIM case is treated separately in Appendix~\ref{appensec:singular}. This corresponds to situations in which not all parameters are unbiasedly estimable. In such cases, we consider learning restricted to the subspace of parameters that admit unbiased estimation. In this setting, we show that the inverse FIM appearing in the sample complexity bounds can be naturally replaced by its Moore--Penrose pseudoinverse.

Lastly, during the preparation of this manuscript, we became aware of closely related work \cite{related-chen2026instan}. Our work and that recent study were conducted independently and without mutual influence. Although both investigate sample complexity in quantum estimation problems, their scope and technical assumptions differ. In particular, Ref.~\cite{related-chen2026instan} considers shadow tomography without assuming a specific estimator in the lower bound, whereas our analysis addresses general quantum parameter estimation and derives explicit bounds under the assumption of maximum-likelihood estimation. Moreover, the proof techniques employed in the two works are different.

\section{Acknowledgements}
H.K. is supported by the IITP (RS-2025-25464252, RS-2025-02219034, RS-2024-00437191) and the NRF (RS-2026-25476454, RS-2025-25464492, RS-2024-00442710) funded by the Ministry of Science and ICT (MSIT), Korea. S.H.L. is supported by the 2025 Research Fund (1.250007.01) of Ulsan National Institute of Science \& Technology (UNIST), Institute of Information \& Communications Technology Planning \& Evaluation (IITP) Grants (RS-2023-00227854, RS-2025-02283189) and National Research Foundation of Korea (RS-2025-25464492). L.J. acknowledges support from the ARO(W911NF-23-1-0077), ARO MURI (W911NF-21-1-0325), AFOSR MURI (FA9550-21-1-0209, FA9550-23-1-0338), DARPA (HR0011-24-9-0359, HR0011-24-9-0361), NSF (ERC-1941583, OMA-2137642, OSI-2326767, CCF-2312755, OSI-2426975), and the Packard Foundation (2020-71479).

\appendix
\section{Notations}\label{appensec:notation}
In this section, we introduce the notation used throughout the manuscript.

\subsection{Single-sample derivatives and sample averages}\label{appensubsec:notations}
We begin by introducing the log-likelihood function associated with a single measurement outcome \(x\). For a given parameter vector \(\vb*{\vartheta} \in \mathbb{R}^{d}\), the log-likelihood is defined as
\begin{align}
    \ell_{\vb*{\vartheta}}(x) := \log p_{\vb*{\vartheta}}(x),
\end{align}
where \(p_{\vb*{\vartheta}}(x)\) denotes the probability of observing the measurement outcome \(x\) conditioned on the parameter vector \(\vb*{\vartheta}\).
We then denote the first-, second-, and third-order derivatives of the log-likelihood function with respect to the parameter vector \(\vb*{\vartheta}\in\mathbb{R}^d\) as
\begin{align}
    &\ell^{(1)}_{\vb*{\vartheta}}(x)
    := \nabla_{\vb*{\vartheta}} \ell_{\vb*{\vartheta}}(x)
    \in \mathbb{R}^d,\\
    &\ell^{(2)}_{\vb*{\vartheta}}(x)
    := \nabla_{\vb*{\vartheta}}^{2} \ell_{\vb*{\vartheta}}(x)
    \in \mathbb{R}^{d\times d},\\
    &\ell^{(3)}_{\vb*{\vartheta}}(x)
    := \nabla_{\vb*{\vartheta}}^{3} \ell_{\vb*{\vartheta}}(x)
    \in \mathbb{R}^{d\times d\times d}.
\end{align}
We now extend these definitions to the case of multiple observations. Let \(\vb*{x}:=\{x_{i}\}_{i=1}^M\) denote an independent and identically distributed (i.i.d.) sample drawn according to \(p_{\vb*{\vartheta}}(x)\).
The total log-likelihood associated with the sample is then given by
\begin{align}
    \ell_{\vb*{\vartheta}}(\vb*{x}) := \sum_{i=1}^M \ell_{\vb*{\vartheta}}(x_i).
\end{align}
Because the total log-likelihood is additive over samples, its derivatives naturally decompose into sums of the corresponding measurement outcome quantities.

Motivated by this observation, we define the empirical score function, the empirical Hessian, and the empirical third-derivative tensor as
\begin{align}
\vb*{S}_{\vb*{\vartheta}}(\vb*{x})
&:= \frac{1}{M}\nabla_{\vb*{\vartheta}} \ell_{\vb*{\vartheta}}(\vb*{x})
= \frac{1}{M}\sum_{i=1}^M \ell^{(1)}_{\vb*{\vartheta}}(x_i), \label{appeneq:score}\\
\mathbf{H}_{\vb*{\vartheta}}(\vb*{x})
&:= \frac{1}{M}\nabla_{\vb*{\vartheta}}^{2} \ell_{\vb*{\vartheta}}(\vb*{x})
= \frac{1}{M}\sum_{i=1}^M \ell^{(2)}_{\vb*{\vartheta}}(x_i),\\
\mathbf{R}_{\vb*{\vartheta}}(\vb*{x})
&:= \frac{1}{M}\nabla_{\vb*{\vartheta}}^{3} \ell_{\vb*{\vartheta}}(\vb*{x})
= \frac{1}{M}\sum_{i=1}^M \ell^{(3)}_{\vb*{\vartheta}}(x_i).
\end{align}
Each of these empirical quantities represents an average over the measurement outcomes and converges, under suitable regularity conditions, to the corresponding expectation value in the large-sample limit.

We next summarize the expectation values of the single measurement outcome derivatives, which encode the fundamental statistical structure of the model. Taking the expectation with respect to the probability distribution of the measurement outcomes, we obtain
\begin{align}
     &\mathbb{E}\!\left[
\ell^{(1)}_{\vb*{\vartheta}}(x)
\right] = \vb*{0},\\
&\mathbb{E}\!\left[
\ell^{(1)}_{\vb*{\vartheta}}(x)\,
\ell^{(1)}_{\vb*{\vartheta}}(x)^{\mathrm{T}}
\right] = \mathbf{F}_{\vb*{\vartheta}}(\{\hat{\Pi}_{x}\}_{x}), \\
&\mathbb{E}\!\left[\ell^{(2)}_{\vb*{\vartheta}}(x)\right]
= -\,\mathbf{F}_{\vb*{\vartheta}}(\{\hat{\Pi}_{x}\}_{x}).
\end{align}
Here, \(\mathbb{E}\) denotes the expectation with respect to the probability distribution of the measurement outcomes \(x\).
For notational simplicity, we henceforth suppress the explicit dependence on the measurement outcomes, and denote
\begin{align}
    &\vb*{S}_{\vb*{\vartheta}}(\vb*{x}):=\vb*{S}_{\vb*{\vartheta}}, \label{appeneq:score}\\
    &\mathbf{H}_{\vb*{\vartheta}}(\vb*{x}):=\mathbf{H}_{\vb*{\vartheta}},\label{appeneq:hessian}\\
    &\mathbf{R}_{\vb*{\vartheta}}(\vb*{x}):=\mathbf{R}_{\vb*{\vartheta}},\label{appeneq:residual}\\
    &\mathbf{F}_{\vb*{\vartheta}}(\{\hat{\Pi}_{x}\}_{x}):=\mathbf{F}_{\vb*{\vartheta}}.
\end{align}

\subsection{Norm conventions}\label{appensubsec:normconven}
Throughout the manuscript, \(\langle \vb*{u}, \vb*{v}\rangle := \vb*{u}^{\mathrm{T}}\vb*{v}\) denotes the standard inner product on \(\mathbb{R}^d\), and
\(\|\vb*{v}\|_2 := \sqrt{\langle \vb*{v},\vb*{v}\rangle}\) denotes the corresponding Euclidean norm.

We first recall the operator norm for linear maps. For a matrix \(\mathbf{B}\in\mathbb{R}^{d\times d}\), the operator norm induced by \(\|\cdot\|_2\) is defined as
\begin{align}
\|\mathbf{B}\|_{\mathrm{op}}
:= \sup_{\vb*{v}\in\mathbb{R}^d:\,\|\vb*{v}\|_2=1}\|\mathbf{B}\vb*{v}\|_2 .
\end{align}
This norm quantifies the maximal amplification of a unit vector under the action of \(\mathbf{B}\).

We now extend these conventions to third-order tensors. Let \(\mathbf{T}\in\mathbb{R}^{d\times d\times d}\) be a third-order tensor. With respect to the standard basis
\(\{\vb*{e}_i\}_{i=1}^d\) of \(\mathbb{R}^d\), we define its components by
\begin{align}
T_{ijk} := \mathbf{T}[\vb*{e}_i,\vb*{e}_j,\vb*{e}_k], \quad i,j,k\in\{1,\dots,d\}.
\end{align}
Accordingly, for any \(\vb*{u},\vb*{v},\vb*{w}\in\mathbb{R}^d\), the tensor \(\mathbf{T}\) induces a trilinear form defined by
\begin{align}
\mathbf{T}[\vb*{u},\vb*{v},\vb*{w}]
:= \sum_{i,j,k=1}^d T_{ijk}\,u_i v_j w_k ,
\end{align}
where the coordinates are given by \(u_i:=\langle \vb*{e}_i,\vb*{u}\rangle\), \(v_j:=\langle \vb*{e}_j,\vb*{v}\rangle\), and
\(w_k:=\langle \vb*{e}_k,\vb*{w}\rangle\).

In direct analogy with the matrix case, we define the operator norm of the tensor \(\mathbf{T}\) as
\begin{align}
\|\mathbf{T}\|_{\mathrm{op}}
:= \sup_{\substack{\vb*{u},\vb*{v},\vb*{w}\in\mathbb{R}^d\\
\|\vb*{u}\|_2=\|\vb*{v}\|_2=\|\vb*{w}\|_2=1}}
\left| \mathbf{T}[\vb*{u},\vb*{v},\vb*{w}] \right|.
\end{align}
This norm captures the maximal magnitude of the trilinear form evaluated on unit vectors.

Next, we introduce a partially contracted form of the tensor.
For fixed vectors \(\vb*{v},\vb*{w}\in\mathbb{R}^d\), we define the vector-valued contraction
\(\mathbf{T}[\vb*{v},\vb*{w}]\in\mathbb{R}^d\) implicitly by the relation
\begin{align}
\langle \vb*{u},\mathbf{T}[\vb*{v},\vb*{w}]\rangle
:= \mathbf{T}[\vb*{u},\vb*{v},\vb*{w}]
\quad \forall \vb*{u}\in\mathbb{R}^d .
\end{align}
By construction, this definition ensures that \(\mathbf{T}[\vb*{u},\vb*{v},\vb*{w}]\) depends linearly on each of its arguments.

In coordinates, the components of the contracted vector are given explicitly by
\begin{align}
\langle \vb*{e}_i, \mathbf{T}[\vb*{v},\vb*{w}] \rangle
= \sum_{j,k=1}^d T_{ijk}\,v_j w_k,
\quad i=1,\dots,d .
\end{align}
As a direct consequence of the definition of the operator norm, the Euclidean norm of the contraction satisfies
\begin{align}
\|\mathbf{T}[\vb*{v},\vb*{w}]\|_2
= \sup_{\vb*{u}\in\mathbb{R}^d:\,\|\vb*{u}\|_2=1}
\left|\mathbf{T}[\vb*{u},\vb*{v},\vb*{w}]\right|
\le \|\mathbf{T}\|_{\mathrm{op}}\,\|\vb*{v}\|_2\,\|\vb*{w}\|_2 .
\label{eq:tensor_contraction_basic}
\end{align}
This inequality will be used repeatedly to control tensor contractions in subsequent proofs.

\section{Frequently exploited proof tools}\label{appensec:tools}
\subsection{Berry--Esseen theorem}\label{appensubsec:berryesseen}
For details, see Refs.~\cite{berryessen-berry1941accuracy, berryessen-durrett2019probability, berryessen-feller1991introduction}.
The Berry--Esseen theorem is a quantitative refinement of the central limit theorem: it provides an explicit rate at which the distribution of a normalized sum of independent random variables approaches the standard normal distribution. Let \(\vb*{X}:=(X_1, X_2, \cdots, X_M)\) be i.i.d.\ random variables with
\begin{align}
    \mathbb{E}[X_1]=0,\quad \mathrm{Var}(X_1)=\sigma^2>0,\quad 
\rho := \mathbb{E}\!\left[|X_1|^3\right] < \infty.
\end{align}
Next, consider the standardized sum
\begin{align}
S(\vb*{X}) \;=\; \frac{1}{\sigma\sqrt{M}}\sum_{i=1}^M X_i.
\end{align}
The Berry--Esseen theorem states that there exists a universal constant \(C>0\) such that for all \(M\ge 1\),
\begin{align}
\sup_{x\in\mathbb{R}}
\left|
\Pr[S \le x] - \Phi(x)
\right|
\;\le\;
C\,\frac{\rho}{\sigma^3\sqrt{M}}.
\end{align}
Here, \(\Phi(x)\) denotes the cumulative distribution function of the standard normal distribution \(\mathcal{N}(0,1)\),
\begin{align}
\Phi(x) = \int_{-\infty}^{x} \phi(t)\, dt, \quad \phi(t) = \frac{1}{\sqrt{2\pi}} e^{-t^2/2}.
\end{align}

\subsection{Mills ratio inequality}\label{appensubsec:mills}
The Mills ratio inequality provides sharp bounds on the Gaussian tail \cite{mills-Birnbaum1942MillsRatioInequality, mills-gordon1941values, mills-GrimmettStirzaker2001PRP, mills-mills1926table}.  
For all \(x > 0\), one has
\begin{align}
\frac{x}{1 + x^2} \, \phi(x)
\;<\;
1 - \Phi(x)
\;<\;
\frac{\phi(x)}{x}.
\end{align}

\subsection{Lambert \(W_0\) function}\label{appensubsec:lambert}
\subsubsection{Definition}
See Refs.~\cite{lambert-corless1996lambert, lambert-olver2010nist} for details.
In mathematics, the Lambert \(W\) function is defined as the inverse of the map
\begin{align}
f(w) = w e^{w},
\end{align}
where \(w\) is a complex number. The principal branch, denoted by \(W_0(z)\), is the single-valued branch that is real-valued on its maximal real domain. By definition, \(W_0(z)\) satisfies
\begin{align}
W_0(z)\, e^{W_0(z)} = z.
\end{align}
When restricted to real variables, the equation
\begin{align}
y e^{y} = x
\end{align}
admits real solutions if and only if
\begin{align}
x \ge -\frac{1}{e}.
\end{align}
On this domain, the principal branch gives the real solution
\begin{align}
y = W_0(x), \quad x \ge -\frac{1}{e},
\end{align}
with
\begin{align}
W_0\!\left(-\frac{1}{e}\right) = -1, 
\quad
W_0(0) = 0, \quad W_0(e) = 1.
\end{align}
For \(x \ge e\), the principal branch satisfies the inequality
\begin{align}
W_0(x) \le \log x.
\end{align}
Indeed, if \(y = W_0(x)\), then \(x = y e^{y}\).
If \(y = \log x\), then
\begin{align}
y e^{y} = (\log x)\, x \ge x \quad (x \ge e),
\end{align}
which implies \(y \le \log x\) by the monotonicity of \(y e^{y}\) for \(y \ge 1\).

\subsubsection{Asymptotic regime}
For large positive \(z\), the principal branch grows logarithmically. The Lambert $W_{0}$ function has the asymptotic expansion (in the regime \(z \to \infty\))
\begin{align}
W_0(z)
=
\log z - \log \log z
+
\frac{\log \log z}{\log z}
+
O\!\left(\frac{(\log\log z)^2}{(\log z)^2}\right).
\end{align}
To leading order,
\begin{align}
W_0(z) \sim \log z - \log \log z.
\end{align}

\subsubsection{Concavity}
We first consider the first derivative of \(W_{0}\).
By differentiating \(W e^{W}=z\) with respect to \(z\), we obtain
\begin{align}
    \frac{d}{dz}\bigl(W e^{W}\bigr)
    &= W' e^{W} + W e^{W} W'
     = W' e^{W}(1+W) = 1,
\end{align}
hence
\begin{equation}
    W'(z) = \frac{1}{e^{W(z)}(1+W(z))}.
\end{equation}
Using \(e^{W(z)}=z/W(z)\) (from \(W e^{W}=z\)), this becomes
\begin{equation}
    W'(z)=\frac{W(z)}{z\,(1+W(z))}.
    \label{eq:Wprime}
\end{equation}
We then calculate the second derivative.
By differentiating \(W'(z)=\dfrac{1}{z}\dfrac{W}{1+W}\) with respect to \(z\), we obtain
\begin{align}
    \begin{split}
    W''(z)
    &= \frac{d}{dz}\left(\frac{1}{z}\right)\frac{W}{1+W}
      + \frac{1}{z}\frac{d}{dz}\left(\frac{W}{1+W}\right) \\
      &= -\frac{1}{z^2}\frac{W}{1+W}
      + \frac{1}{z}\left(\frac{d}{dW}\frac{W}{1+W}\right)W' \\
    &= -\frac{1}{z^2}\frac{W}{1+W}
      + \frac{1}{z}\left(\frac{1}{(1+W)^2}\right)\left(\frac{W}{z(1+W)}\right) \\
      &= -\frac{W}{z^2(1+W)} + \frac{W}{z^2(1+W)^3} \\
    &= \frac{W}{z^2}\left(-\frac{1}{1+W}+\frac{1}{(1+W)^3}\right)
     = -\frac{W^2(W+2)}{z^2(1+W)^3}.
     \end{split}
    \label{eq:Wbis}
\end{align}
For \(z>0\) on the principal branch, \(W_0(z)>0\), hence \(W_0(z)^2>0\) and \(W_0(z)+2>0\).
Moreover, \(z^2(1+W_0(z))^3>0\). Therefore, by \eqref{eq:Wbis},
\begin{equation}
    W_0''(z)
    = -\frac{W_0(z)^2\bigl(W_0(z)+2\bigr)}{z^2\bigl(1+W_0(z)\bigr)^3}
    < 0 \quad (z>0),
\end{equation}
so \(W_0\) is strictly concave on \((0,\infty)\).

\subsection{Standard regularity assumptions for the proof of main theorems}\label{appen:sraftpr}
For the proofs of the main theorems, we impose, in addition to assumptions (A1) and (A2) introduced in the main text,
\begin{enumerate}
    \item[(A1)] \textbf{Unique maximizer and stationary point:} $\ell_{\vb*{\theta}}(\vb*{x})$ has a unique maximizer $\tilde{\vb*{\theta}}^{\mathrm{ML}}$ in the interior of the parameter domain $\Theta$, which is also the unique stationary point.
    \item[(A2)] \textbf{Smoothness:} $\ell_{\vb*{\theta}}(\vb*{x})$ is three times continuously differentiable with respect to $\vb*{\theta}$ on the parameter domain $\Theta$.
\end{enumerate}
and the following additional standard regularity conditions, which are used in the proofs of the main theorems:
\begin{enumerate}
    \item[(R1)] \textbf{Local containment of the Taylor neighborhood in the ambient parameter domain:}
    Let $\widetilde{\Theta}$ be an open ambient parameter domain containing
    the parameter space $\Theta$, i.e., $\Theta \subseteq \widetilde{\Theta}$.
    There exists a constant $r_{0}>0$ such that
    \begin{align}
        \boldsymbol{\theta}+\boldsymbol{\Delta}
        \in \widetilde{\Theta}
        \quad
        \text{for all }
        \boldsymbol{\theta}\in\Theta
        \text{ and }
        \|\boldsymbol{\Delta}\|_{2}\le r_{0}.
    \end{align}

    \item[(R2)] \textbf{Uniform non-singularity of the Fisher information matrix:}
    The Fisher information matrix $\mathbf{F}_{\boldsymbol{\theta}}$ is
    non-singular for every $\boldsymbol{\theta}\in\Theta$, and its inverse
    is uniformly bounded in operator norm:
    \begin{align}
        \sup_{\boldsymbol{\theta}\in\Theta}
        \left\|
            \mathbf{F}_{\boldsymbol{\theta}}^{-1}
        \right\|_{\mathrm{op}}
        <\infty.
    \end{align}

    \item[(R3)] \textbf{Uniform second-moment bound for the centered Hessian:}
    The centered Hessian of the log-likelihood has a uniformly bounded
    second moment:
    \begin{align}
        \sup_{\boldsymbol{\theta}\in\Theta}
        \mathbb{E}_{\boldsymbol{\theta}}
        \left[
            \operatorname{Tr}
            \left\{
                \left(
                    \ell_{\boldsymbol{\theta}}^{(2)}(X)
                    +\mathbf{F}_{\boldsymbol{\theta}}
                \right)^{2}
            \right\}
        \right]
        <\infty.
    \end{align}

    \item[(R4)] \textbf{Uniform second-moment bound for the local third-derivative envelope:}
    The local supremum of the third derivative of the log-likelihood has
    a uniformly bounded second moment:
    \begin{align}
        \sup_{\boldsymbol{\theta}\in\Theta}
        \mathbb{E}_{\boldsymbol{\theta}}
        \left[
            \left(
                \sup_{\|\boldsymbol{\Delta}\|_{2}\le r_{0}}
                \left\|
                    \ell_{\boldsymbol{\theta}+\boldsymbol{\Delta}}^{(3)}(X)
                \right\|_{\mathrm{op}}
            \right)^{2}
        \right]
        <\infty.
    \end{align}

    \item[(R5)] \textbf{Uniform bound on the standardized third absolute moment
of the projected score:}
The standardized third absolute moment of the projected
Fisher-preconditioned score is uniformly bounded over all unit directions:
\begin{align}
    \sup_{\boldsymbol{\theta}\in\Theta}
    \sup_{\|\boldsymbol{u}\|_{2}=1}
    \frac{
        \mathbb{E}_{\boldsymbol{\theta}}
        \left[
            \left|
                \boldsymbol{u}^{\mathsf{T}}
                \mathbf{F}_{\boldsymbol{\theta}}^{-1}
                \ell_{\boldsymbol{\theta}}^{(1)}(X)
            \right|^{3}
        \right]
    }{
        \left(
            \boldsymbol{u}^{\mathsf{T}}
            \mathbf{F}_{\boldsymbol{\theta}}^{-1}
            \boldsymbol{u}
        \right)^{3/2}
    }
    <\infty.
\end{align}
\end{enumerate}

The regularity conditions above admit the following intuitive interpretations. Condition (R1) guarantees that a uniform neighborhood around each parameter value remains within the ambient parameter domain, thereby allowing local Taylor expansions to be carried out without boundary issues. Condition (R2) ensures that the Fisher information matrix remains uniformly nondegenerate, so that all parameter directions are identifiable and the estimation problem does not become arbitrarily ill-conditioned. Condition (R3) controls the fluctuations of the observed Hessian around its mean, while condition (R4) limits the local variation of the Hessian by imposing a uniform moment bound on the third derivative of the log-likelihood. Finally, condition (R5) prevents the standardized projected score from having excessively heavy tails and provides the moment control required for quantitative Gaussian approximations, such as Berry--Esseen-type bounds.

\section{Proof of the upper bound for \(\ell_{\infty}\)}\label{appensec:upperinfinite}
\begin{widetext}
Assume that conditions (A1)–(A2) and the standard regularity conditions (R1)–(R5) stated in Appendix \ref{appen:sraftpr} hold. Then the following theorem applies.
\setcounter{theorem}{0}
\begin{theorem}\label{appentheorem1}
Fix \(d<\infty\). There exists a constant \(\epsilon_{0}>0\) such that, for every \(0<\epsilon<\epsilon_{0}\) and \(0<\delta\le 1\), the following holds. Let \(M_{0}(\epsilon,\delta,d)\) denote the minimal number of repetitions such that
\begin{align}
    \forall\, M\ge M_0(\epsilon,\delta,d):\quad
    \Pr\!\left[
    \| \tilde{\vb*{\theta}}^{\mathrm{ML}}
    -\vb*{\theta} \|_{\infty}
    \le \epsilon
    \right]
    \ge 1-\delta
    \quad
    \text{for all } \vb*{\theta}\in\Theta .
\end{align}
\(M_{0}\) is then upper bounded as
\begin{align}
M_{0} \le
\sup_{\vb*{\theta}\in\Theta}
\max \left\{
\left(\frac{D}{\tau_{0-}}\right)^{2},
\left(\frac{2d \eta}{\delta}\right)^{2},
\left(
\frac{d\eta}{\delta}
+\frac{D}{2\tau_{0-}}
+\frac{\sigma \sqrt{W_{0}}}{2\tau_{0-}}
+
\sqrt{
\left(
\frac{d\eta}{\delta}
+\frac{D}{2\tau_{0-}}
+\frac{\sigma \sqrt{W_{0}}}{2\tau_{0-}}
\right)^{2}
-\frac{2d\eta}{\delta}\frac{D}{\tau_{0-}}
}
\right)^{2}
\right\},
\label{appeneq:theoremupper}
\end{align}
where
\begin{align}
&W_{0}:=W_{0}(8\pi^{-1}\delta^{-2}d^{2}), \\
&\tau_{0-}:=\left(1-\epsilon\frac{d\mu_{R}}{2}\|\mathbf{F}^{-1}_{\vb*{\theta}}\|_{\mathrm{op}}\right)\epsilon,\\
&D:=\left(\|\mathbf{F}^{-1}_{\vb*{\theta}}\|_{\mathrm{op}}\,\sqrt{\frac{4V_H}{\delta}}\,\sqrt{d}
+\frac{1}{2}\|\mathbf{F}^{-1}_{\vb*{\theta}}\|_{\mathrm{op}}\,\sqrt{\frac{4V_{R}}{\delta}}\,d\,\epsilon\right)\epsilon, \\
&\eta:=  \frac{1}{d}\sum_{a=1}^{d} \frac{2C \rho}{\sigma^{3}_{\vb*{\theta},a}},\\
&\rho := \sup_{\vb*{\theta}\in\Theta}\max_{a\in[d]}\mathbb{E}\!\left[|\mathbf{e}_{a}^{\mathrm{T}}\mathbf{F}^{-1}_{\vb*{\theta}}\ell^{(1)}_{\vb*{\theta}}(x)|^3\right],\\
&\sigma := \sqrt{\sup_{\vb*{\theta} \in \Theta}\max_{a\in [d]}[\mathbf{F}^{-1}_{\vb*{\theta}}]_{aa}}.
\end{align}
In the small-error limit \(\epsilon \to 0\), Eq.~\eqref{appeneq:theoremupper} reduces to
\begin{align}
    M_{0}
    \lesssim
    W_0(8\pi^{-1}\delta^{-2}d^2)
    \sup_{\vb*{\theta}\in\Theta}
    \max_{a\in[d]}
    [\mathbf F_{\vb*{\theta}}^{-1}]_{aa}
    \epsilon^{-2},
\end{align}
where the notation $A_\epsilon\lesssim B_\epsilon$ denotes
\begin{align}
    \limsup_{\epsilon\to0}
    \frac{A_\epsilon}{B_\epsilon}
    \le 1,
\end{align}
i.e., \(A_\epsilon \le (1+o(1))\, B_\epsilon\) as \(\epsilon \to 0\) with fixed $\delta$ and $d$. 
\end{theorem}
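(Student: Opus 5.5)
We follow the classical route for asymptotic normality of the MLE, made quantitative. Fix $\vb*{\theta}\in\Theta$ and write $\vb*{\Delta}:=\tilde{\vb*{\theta}}^{\mathrm{ML}}-\vb*{\theta}$. By (A1) the MLE is the unique stationary point, so $\vb*{S}_{\tilde{\vb*{\theta}}^{\mathrm{ML}}}=\vb*{0}$. By (A2) and (R1), a second-order Taylor expansion of the score around $\vb*{\theta}$ with integral remainder is available whenever $\|\vb*{\Delta}\|_2\le r_0$:
\begin{align}
    \vb*{0}=\vb*{S}_{\vb*{\theta}}+\mathbf{H}_{\vb*{\theta}}\vb*{\Delta}+\tfrac{1}{2}\bar{\mathbf{R}}[\vb*{\Delta},\vb*{\Delta}],
\end{align}
where $\bar{\mathbf{R}}$ is an averaged third-derivative tensor at intermediate points. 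Writing $\mathbf{H}_{\vb*{\theta}}=-\mathbf{F}_{\vb*{\theta}}+(\mathbf{H}_{\vb*{\theta}}+\mathbf{F}_{\vb*{\theta}})$ and multiplying by $\mathbf{F}^{-1}_{\vb*{\theta}}$ gives
\begin{align}
    \vb*{\Delta}=\mathbf{F}^{-1}_{\vb*{\theta}}\vb*{S}_{\vb*{\theta}}+\mathbf{F}^{-1}_{\vb*{\theta}}(\mathbf{H}_{\vb*{\theta}}+\mathbf{F}_{\vb*{\theta}})\vb*{\Delta}+\tfrac{1}{2}\mathbf{F}^{-1}_{\vb*{\theta}}\bar{\mathbf{R}}[\vb*{\Delta},\vb*{\Delta}].
\end{align}
The leading term is the linear estimator $\vb*{Z}:=\mathbf{F}^{-1}_{\vb*{\theta}}\vb*{S}_{\vb*{\theta}}$, an average of i.i.d.\ mean-zero vectors whose $a$-th coordinate has variance $[\mathbf{F}^{-1}_{\vb*{\theta}}]_{aa}/M\le\sigma^2/M$. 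The strategy is to show that on a high-probability event the remainder terms are at most a quantity $D/\sqrt{M}$-type perturbation, and then apply a coordinatewise Gaussian tail bound to $\vb*{Z}$.

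First, we control the remainders. By Chebyshev/Markov with (R3), $\Pr[\|\mathbf{H}_{\vb*{\theta}}+\mathbf{F}_{\vb*{\theta}}\|_{\mathrm{op}}>\sqrt{4V_H/(\delta M)}]\le\delta/4$, where $V_H$ is the (R3) supremum. Similarly (R4) together with Markov gives $\sup_{\|\vb*{\Delta}\|\le r_0}\|\mathbf{R}_{\vb*{\theta}+\vb*{\Delta}}\|_{\mathrm{op}}\le\mu_R+\sqrt{4V_R/(\delta M)}$ with probability $\ge1-\delta/4$, where $\mu_R$ bounds the mean envelope. On this event, if $\|\vb*{\Delta}\|_\infty\le\epsilon$ then $\|\vb*{\Delta}\|_2\le\sqrt d\epsilon$. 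Using Eq.~\eqref{eq:tensor_contraction_basic}, the remainder is at most $D/\sqrt M$ plus $\epsilon^2 d\mu_R\|\mathbf{F}^{-1}_{\vb*{\theta}}\|_{\mathrm{op}}/2$. This is precisely the structure of $D$ and of $\tau_{0-}=\epsilon-\tfrac{d\mu_R}{2}\|\mathbf{F}^{-1}\|_{\mathrm{op}}\epsilon^2$.

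To close the loop and exclude the possibility that $\vb*{\Delta}$ jumps outside the $\epsilon$-box, we use a continuity/uniqueness argument. Consider the map $\vb*{\Delta}\mapsto\vb*{Z}+(\text{remainders})$ on the box $\|\vb*{\Delta}\|_\infty\le\epsilon$. If $\|\vb*{Z}\|_\infty\le\tau_{0-}-D/\sqrt M$, the map sends the box into itself. Brouwer's theorem then gives a stationary point of the log-likelihood inside the box, and by (A1) this must be the MLE. This is the step I expect to be most delicate: the residuals must be uniform over the box (hence the local envelope in (R4)), and $\epsilon<\epsilon_0$ is needed so that $\tau_{0-}>0$ and $\sqrt d\epsilon\le r_0$.

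Next, we bound the linear term. For each coordinate $a$, Berry--Esseen together with (R5) gives
\begin{align}
    \Pr\left[|Z_a|>t\right]\le 2\left(1-\Phi\!\left(t\sqrt M/\sigma_{\vb*{\theta},a}\right)\right)+\frac{2C\rho}{\sigma^3_{\vb*{\theta},a}\sqrt M}.
\end{align}
The Mills ratio upper bound then turns the Gaussian tail into $\sqrt{2/\pi}\,e^{-s^2/2}/s$ with $s=t\sqrt M/\sigma$. A union bound over $d$ coordinates yields a failure probability of at most $\frac{d\eta}{\sqrt M}+d\sqrt{2/\pi}\,e^{-s^2/2}/s$. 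We allocate half of the budget $\delta/2$ to each term. The condition $\frac{d\eta}{\sqrt M}\le\delta/4$ produces the entry $(2d\eta/\delta)^2$. The condition $d\sqrt{2/\pi}\,e^{-s^2/2}/s\le\delta/4$ is solved exactly through $s^2e^{s^2}=8d^2/(\pi\delta^2)$, i.e.\ $s^2=W_0(8\pi^{-1}\delta^{-2}d^2)$. This explains the appearance of the Lambert function (up to the precise split of $\delta$).

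Setting $t=\tau_{0-}-D/\sqrt M$, the requirement becomes a quadratic inequality in $\sqrt M$. We use the combined form $\sqrt M\tau_{0-}-D\ge\sigma\sqrt{W_0}$ together with the Berry--Esseen slack. Solving it gives the third entry of the maximum, and $M\ge(D/\tau_{0-})^2$ guarantees $t>0$. Taking the supremum over $\vb*{\theta}$ gives the uniform bound. Finally, for the limit $\epsilon\to0$: $D=O(\epsilon)$ and $\tau_{0-}=\epsilon(1+O(\epsilon))$, so the first two entries are $O(1)$ and $O(1)$. The third entry is $(\sigma\sqrt{W_0}/\tau_{0-})^2(1+o(1))=W_0\,\sigma^2\epsilon^{-2}(1+o(1))$, which dominates and yields the stated asymptotic bound.
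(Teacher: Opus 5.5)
Your skeleton is the same as the paper's: Taylor expansion of the score, good events for $\|\mathbf{H}_{\theta}+\mathbf{F}_{\theta}\|_{\mathrm{op}}$ and the third-derivative envelope at level $\delta/4$ each, Brouwer on the $\ell_\infty$ cube with uniqueness from (A1), then Berry--Esseen, Mills ratio, a union bound, Lambert $W_0$ and a quadratic in $\sqrt{M}$. Your identification of $\tau_{0-}$ and $D$ is also correct.

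The gap is in how you share the remaining budget $\delta/2$ between the Berry--Esseen error and the Gaussian tail, and the arithmetic you give there is wrong.
\begin{itemize}
\item With a $\delta/4+\delta/4$ split, $d\eta/\sqrt{M}\le\delta/4$ gives $M\ge(4d\eta/\delta)^2$, not $(2d\eta/\delta)^2$.
\item Likewise, $d\sqrt{2/\pi}\,e^{-s^2/2}/s\le\delta/4$ is equivalent to $s^2e^{s^2}\ge 32d^2/(\pi\delta^2)$, i.e.\ $s^2\ge W_0(32\pi^{-1}\delta^{-2}d^2)$, not $W_0(8\pi^{-1}\delta^{-2}d^2)$.
\end{itemize}
Any fixed split of this kind inflates the argument of $W_0$ by a constant factor. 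It therefore proves neither the explicit finite-$\epsilon$ formula nor the asymptotic statement, which as a $\limsup\le 1$ claim is sensitive to exactly this constant. Also, the inequality $\sqrt{M}\tau_{0-}-D\ge\sigma\sqrt{W_0}$ on its own only yields $M\ge((D+\sigma\sqrt{W_0})/\tau_{0-})^2$. It cannot produce the $d\eta/\delta$ terms or the cross term $-\frac{2d\eta}{\delta}\frac{D}{\tau_{0-}}$ in the third entry of the maximum.

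The missing idea is the paper's coupled treatment, which does not split the budget.
\begin{itemize}
\item Give the Gaussian tail the whole per-coordinate budget minus the Berry--Esseen error, $\delta':=\delta/(2d)-\eta/\sqrt{M}$. The requirement becomes $\sqrt{2/\pi}\,e^{-s^2/2}/s\le\delta'$ with $s=\sqrt{M}\tau_-/\sigma$, i.e.\ $s^2\ge W_0(2\pi^{-1}\delta'^{-2})$.
\item The entry $(2d\eta/\delta)^2$ is then only the solvability condition $\delta'\ge 0$, not a budget share.
\item Since $\delta'\le\delta/(2d)$ and $W_0$ is concave with $W_0(0)=0$, you have $W_0(\lambda x)\le\lambda W_0(x)$ for $\lambda\ge 1$. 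Hence $W_0(2\pi^{-1}\delta'^{-2})\le\frac{\delta^2}{4d^2\delta'^2}W_0(8\pi^{-1}\delta^{-2}d^2)$.
\item Substituting $\sqrt{M}\tau_-=\sqrt{M}\tau_{0-}-D$ and $2d\delta'/\delta=1-2d\eta/(\delta\sqrt{M})$, and assuming $\sqrt{M}\tau_{0-}>D$ and $\delta'\ge 0$ (the first two entries), the sufficient condition becomes
\begin{equation*}
\bigl(\sqrt{M}\,\tau_{0-}-D\bigr)\Bigl(\sqrt{M}-\frac{2d\eta}{\delta}\Bigr)\ge\sigma\sqrt{W_0}\,\sqrt{M}.
\end{equation*}
Its larger root in $\sqrt{M}$ is exactly the third entry of the maximum.
\end{itemize}
Because $1-2d\eta/(\delta\sqrt{M})\to 1$ as $M\sim\epsilon^{-2}\to\infty$, the leading constant is exactly $W_0(8\pi^{-1}\delta^{-2}d^2)\,\sigma^2$. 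For the asymptotic statement alone, you could instead give the Berry--Esseen term a share $\gamma$ and send $\gamma\to 0$ after $\epsilon\to 0$. That would not reproduce the stated finite-$\epsilon$ bound.
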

\end{widetext}

\subsection{Proof sketch}\label{appensubsec:proofupper}
Before beginning the proof, we specify a valid uniform choice of the small-error radius. Define
\begin{align}
K_{F}
:=
\sup_{\vb*{\theta}\in\Theta}
\|\mathbf{F}_{\vb*{\theta}}^{-1}\|_{\mathrm{op}},
\end{align}
and
\begin{align}
\overline{\mu}_{R}
:=
\sup_{\vb*{\theta}\in\Theta}
\mathbb{E}
\left[
\sup_{\|\vb*{u}\|_{2}\le r_{0}}
\left\|
\ell_{\vb*{\theta}+\vb*{u}}^{(3)}(X)
\right\|_{\mathrm{op}}
\right].
\end{align}
By condition (R2), \(K_{F}<\infty\). Moreover, condition (R4) and the Cauchy--Schwarz inequality imply that
\begin{align}
\overline{\mu}_{R}
\le
\left\{
\sup_{\vb*{\theta}\in\Theta}
\mathbb{E}
\left[
\left(
\sup_{\|\vb*{u}\|_{2}\le r_{0}}
\left\|
\ell_{\vb*{\theta}+\vb*{u}}^{(3)}(X)
\right\|_{\mathrm{op}}
\right)^{2}
\right]
\right\}^{1/2}
<\infty .
\end{align}
We consider
\begin{align}
\epsilon_{0}
:=
\min
\left\{
\frac{r_{0}}{\sqrt d},
\frac{1}{dK_{F}\overline{\mu}_{R}}
\right\}
>0,
\end{align}
where the second term is understood as \(+\infty\) when
\(\overline{\mu}_{R}=0\). This choice depends only on
\(d\) and the uniform regularity constants, and is
independent of \(\theta\), \(\delta\), and \(M\).
{Let us define the following accuracy event on the sample space of \(M\) measurement outcomes:
\begin{align}
A(\epsilon)
:=
\left\{
\vb*{x}:
\|
\tilde{\vb*{\theta}}^{\mathrm{ML}}(\vb*{x})
-
\vb*{\theta}
\|_{\infty}
\le \epsilon
\right\},
\label{eq:accuracy_event}
\end{align}
where \(\vb*{x}:=\{x_{i}\}_{i=1}^M\) denotes an i.i.d.\ sample of measurement outcomes drawn according to the distribution \(p_{\vb*{\theta}}(\vb*{x})=\prod_{i=1}^{M}p_{\vb*{\theta}}(x_i)\).
That is, \(A(\epsilon)\) is the event that the MLE constructed
from the observed data \(\vb*{x}\) lies within the \(\ell_\infty\)-ball of
radius \(\epsilon\) around the true parameter \(\vb*{\theta}\).}
Our goal is to determine the minimal number of repetitions \(M_0(\epsilon,\delta,d)\) such that, for a prescribed accuracy \(\epsilon>0\) and confidence level \(1-\delta\), with \(0 < \delta < 1\), the MLE \(\tilde{\boldsymbol{\theta}}^{\mathrm{ML}}\) satisfies
\begin{equation}
\forall\, M \ge M_0(\epsilon,\delta,d):\quad
\Pr\!\left[A(\epsilon)\right]\ge 1-\delta,
\end{equation}
or equivalently,
\begin{equation}
\forall\, M \ge M_0(\epsilon,\delta,d):\quad
\Pr\!\left[A(\epsilon)^{\mathsf{c}}\right]  \le \delta.
\label{eq:linfty-guarantee}
\end{equation}

In this section, we derive an upper bound on \(M_{0}\) by following the procedure below.
First, we identify an event \(U(\epsilon)\) such that \(A(\epsilon)^{\mathsf{c}} \subseteq U(\epsilon)\), which implies
\begin{align}
    \Pr\!\left[A(\epsilon)^{\mathsf{c}}\right] \le \Pr\!\left[U(\epsilon)\right].
\label{eq:subset-prob}
\end{align}
We then determine the minimal \(M_{U}(\epsilon,\delta,d)\) such that
\begin{equation}
\forall\, M \ge M_{U}(\epsilon,\delta,d):\quad
\Pr\!\left[U(\epsilon)\right]\le \delta.
\label{eq:MU-def}
\end{equation}
Combining \eqref{eq:subset-prob} and \eqref{eq:MU-def} yields, for all \(M \ge M_{U}(\epsilon,\delta,d)\),
\begin{align}
    \Pr\!\left[A(\epsilon)^{\mathsf{c}}\right] \le \Pr\!\left[U(\epsilon)\right] \le \delta,
\end{align}
which implies that the guarantee in Eq.~\eqref{eq:linfty-guarantee} holds whenever \(M \ge M_{U}(\epsilon,\delta,d)\).
Therefore, \(M_{U}(\epsilon,\delta,d)\) is an upper bound on the minimal number of repetitions \(M_{0}(\epsilon,\delta,d)\).

\subsection{Taylor expansion of the score function}\label{appensubsec:taylor}
{The proof follows the standard MLE strategy based on a Taylor expansion of the score function around the true parameter. Related finite number of measurement outcomes likelihood results are available in the classical statistics literature \cite{csac-spokoiny2012parametric, csac-van2000asymptotic, csac-ibragimov2013statistical}. We keep the proof self-contained to make explicit the Fisher-information dependence relevant to quantum-learning sample complexity.}

We begin by introducing the MLE associated with the observed measurement outcomes \(\vb*{x}\).
Let
\begin{align}
\tilde{\vb*{\theta}}^{\mathrm{ML}}(\vb*{x})
:= \arg\max_{\vb*{\vartheta}} \ell_{\vb*{\vartheta}}(\vb*{x})
\end{align}
denote the MLE, which we assume to be the unique maximizer of the log-likelihood function from the assumption (A1) in the main text.
Throughout the proof, we denote the true parameter value by \(\vb*{\theta}\) and define the estimation error vector as
\begin{align}
\vb*{\Delta}^{\mathrm{ML}}
:= \tilde{\vb*{\theta}}^{\mathrm{ML}} - \vb*{\theta}.
\end{align}
To this end, we apply Taylor’s theorem with integral remainder to the empirical score function around the point \(\vb*{\theta}\). For an arbitrary estimation error vector \(\vb*{\Delta}\in\mathbb{R}^{d}\), the score function admits the expansion
\begin{align}
\vb*{S}_{\tilde{\vb*{\theta}}^{\mathrm{ML}} }=\vb*{S}_{\vb*{\theta}+\vb*{\Delta}}
=
\vb*{S}_{\vb*{\theta}}
+\mathbf{H}_{\vb*{\theta}}\,\vb*{\Delta}
+\vb*{r}_{\vb*{\theta}}(\vb*{\Delta}),
\end{align}
where the remainder term \(\vb*{r}_{\vb*{\theta}}(\vb*{\Delta})\in\mathbb{R}^d\) captures the contribution of third-order derivatives and is given explicitly by
\begin{align}
\vb*{r}_{\vb*{\theta}}(\vb*{\Delta})
:=
\int_0^1 (1-t)\,
\mathbf{R}_{\vb*{\theta}+t\vb*{\Delta}}[\vb*{\Delta},\vb*{\Delta}]
\,dt .
\label{eq:rM_def}
\end{align}
(See Eqs. \eqref{appeneq:score}-\eqref{appeneq:residual}.)
We now specialize this expansion to the MLE.
Since \(\tilde{\vb*{\theta}}^{\mathrm{ML}}\) is assumed to be the unique stationary point of the log-likelihood from the assumption (A1) in the main text, it satisfies the first-order optimality condition
\begin{align}
\vb*{S}_{\vb*{\theta}+\vb*{\Delta}^{\mathrm{ML}}}
=
\vb*{0}.
\end{align}
Consequently, substituting \(\vb*{\Delta}=\vb*{\Delta}^{\mathrm{ML}}\) into the Taylor expansion of the score function yields
\begin{align}
\vb*{0}
=
\vb*{S}_{\vb*{\theta}}
+\mathbf{H}_{\vb*{\theta}}\,\vb*{\Delta}^{\mathrm{ML}}
+\vb*{r}_{\vb*{\theta}}(\vb*{\Delta}^{\mathrm{ML}}) .
\label{eq:score_eq_with_remainder}
\end{align}
We next rearrange \eqref{eq:score_eq_with_remainder} in order to isolate the estimation error vector as
\begin{align}
\vb*{\Delta}^{\mathrm{ML}}
=
\mathbf{F}^{-1}_{\vb*{\theta}}\vb*{S}_{\vb*{\theta}}
+\mathbf{F}^{-1}_{\vb*{\theta}}\!\left(\mathbf{H}_{\vb*{\theta}}+\mathbf{F}_{\vb*{\theta}}\right)\vb*{\Delta}^{\mathrm{ML}}
+\mathbf{F}^{-1}_{\vb*{\theta}}\vb*{r}_{\vb*{\theta}}(\vb*{\Delta}^{\mathrm{ML}}).
\label{eq:fixed_point_form}
\end{align}
Motivated by this structure, we define a mapping \(\mathcal{F}:\mathbb{R}^d\to\mathbb{R}^d\) by
\begin{align}
\mathcal{F}(\vb*{\Delta})
:=
\mathbf{F}^{-1}_{\vb*{\theta}}\vb*{S}_{\vb*{\theta}}
+\mathbf{F}^{-1}_{\vb*{\theta}}\!\left(\mathbf{H}_{\vb*{\theta}}+\mathbf{F}_{\vb*{\theta}}\right)\vb*{\Delta}
+\mathbf{F}^{-1}_{\vb*{\theta}}\vb*{r}_{\vb*{\theta}}(\vb*{\Delta}).
\end{align}
By construction, the estimation error \(\vb*{\Delta}^{\mathrm{ML}}\) is a fixed point of the map \(\mathcal{F}\), a fact that will be exploited in subsequent steps of the proof.

\subsection{Application of Chebyshev’s inequality to bound the norm}\label{appensubsec:acib}
For \(\epsilon>0\), let us consider the closed cube
\begin{align}
\mathcal{C}^{(\infty)}_\epsilon := \{\vb*{\Delta}\in\mathbb{R}^d:\ \|\vb*{\Delta}\|_\infty\le \epsilon\}.
\end{align}
Controlling the estimation error \(\|\vb*{\Delta}^{\mathrm{ML}}\|_{\infty}\) therefore reduces to showing that the fixed-point equation Eq.~\eqref{eq:fixed_point_form}, derived in Sec.~\ref{appensec:upperinfinite}, admits a solution inside \(\mathcal{C}^{(\infty)}_\epsilon\). Equivalently, it suffices to verify that the maximum-likelihood error vector satisfies \(\vb*{\Delta}^{\mathrm{ML}} \in \mathcal{C}^{(\infty)}_\epsilon\).
Using the fixed-point representation \eqref{eq:fixed_point_form}, this condition can be rewritten as
\begin{align}
\bigl\|\mathcal{F}(\vb*{\Delta}^{\mathrm{ML}})
\bigr\|_{\infty}
\le \epsilon .
\end{align}
To establish this inequality, we derive a convenient upper bound on the map \(\mathcal{F}\).
Applying the triangle inequality to the definition of \(\mathcal{F}\), we obtain
\begin{align}
\begin{split}
&\bigl\|
\mathcal{F}(\vb*{\Delta})
\bigr\|_{\infty} \\
&\le
\|\mathbf{F}^{-1}_{\vb*{\theta}}\vb*{S}_{\vb*{\theta}}\|_{\infty}
+\|\mathbf{F}^{-1}_{\vb*{\theta}}\!\left(\mathbf{H}_{\vb*{\theta}}+\mathbf{F}_{\vb*{\theta}}\right)\vb*{\Delta}\|_{\infty}
+\|\mathbf{F}^{-1}_{\vb*{\theta}}\vb*{r}_{\vb*{\theta}}(\vb*{\Delta})\|_{\infty}. 
\end{split}\label{uppergg}
\end{align}
Inequality \eqref{uppergg} serves as the starting point for the sample complexity analysis.
In the remainder of this step, we derive uniform bounds for the second and third terms on the right-hand side over \(\mathcal{C}^{(\infty)}_{\epsilon}\), while the first term will be handled separately.

\begin{enumerate}
\item[(1)] Upper bound for
\(\|\mathbf{F}^{-1}_{\vb*{\theta}}(\mathbf{H}_{\vb*{\theta}}+\mathbf{F}_{\vb*{\theta}})\vb*{\Delta}\|_{\infty}\)
on a \emph{good event}.

Invoking the inequality \(\|\cdot\|_{\infty}\le \|\cdot\|_{2}\), together with the submultiplicativity of the operator norm, we obtain
\begin{align}
    \begin{split}
\|\mathbf{F}^{-1}_{\vb*{\theta}}(\mathbf{H}_{\vb*{\theta}}+\mathbf{F}_{\vb*{\theta}})\vb*{\Delta}\|_{\infty}
&\le
\|\mathbf{F}^{-1}_{\vb*{\theta}}(\mathbf{H}_{\vb*{\theta}}+\mathbf{F}_{\vb*{\theta}})\vb*{\Delta}\|_{2} \\
&\le
\|\mathbf{F}^{-1}_{\vb*{\theta}}\|_{\mathrm{op}}
\,\|\mathbf{H}_{\vb*{\theta}}+\mathbf{F}_{\vb*{\theta}}\|_{\mathrm{op}}
\,\|\vb*{\Delta}\|_{2}.
    \end{split}\label{boundgood}
\end{align}
For any \(\vb*{\Delta} \in \mathcal{C}^{(\infty)}_{\epsilon}\), the relation between the \(\ell_2\)- and \(\ell_\infty\)-norms implies \(\|\vb*{\Delta}\|_{2}\le \sqrt{d}\,\epsilon\).
Therefore, the only remaining random quantity in Eq.~\eqref{boundgood} is the operator norm \(\|\mathbf{H}_{\vb*{\theta}}+\mathbf{F}_{\vb*{\theta}}\|_{\mathrm{op}}\), which captures the deviation of the empirical Hessian from its expectation. We note that \(\mathbb{E}[\mathbf{H}_{\vb*{\theta}}+\mathbf{F}_{\vb*{\theta}}]=0\).
To control this deviation, we introduce the following \emph{good event}:
\begin{align}
\bigl\{
\|\mathbf{H}_{\vb*{\theta}}+\mathbf{F}_{\vb*{\theta}}\|_{\mathrm{op}} < c_{H}
\bigr\},
\end{align}
where \(c_H>0\) is a small constant to be specified.
On the event \(\bigl\{
\|\mathbf{H}_{\vb*{\theta}}+\mathbf{F}_{\vb*{\theta}}\|_{\mathrm{op}} < c_{H}
\bigr\}\), the bound \eqref{boundgood} implies
\begin{align}
\sup_{\vb*{\Delta}\in\mathcal{C}^{(\infty)}_{\epsilon}}
\|\mathbf{F}^{-1}_{\vb*{\theta}}(\mathbf{H}_{\vb*{\theta}}+\mathbf{F}_{\vb*{\theta}})\vb*{\Delta}\|_{\infty}
\le
\|\mathbf{F}^{-1}_{\vb*{\theta}}\|_{\mathrm{op}}
\,c_{H}
\,\sqrt{d}\,\epsilon.
\label{finalbound1}
\end{align}
It therefore remains to control the probability of the complement event
\(\bigl\{
\|\mathbf{H}_{\vb*{\theta}}+\mathbf{F}_{\vb*{\theta}}\|_{\mathrm{op}} < c_{H}
\bigr\}^{\mathsf{c}}\),
which corresponds to atypically large fluctuations of the empirical Hessian around its expectation.
To this end, we bound the complement event
\(\bigl\{
\|\mathbf{H}_{\vb*{\theta}}+\mathbf{F}_{\vb*{\theta}}\|_{\mathrm{op}} < c_{H}
\bigr\}^{\mathsf{c}}\)
using Markov's inequality.
We define the random matrices
\begin{align}
\mathbf{B}_{\vb*{\theta}}(x_i)
:= \ell^{(2)}_{\vb*{\theta}}(x_i) + \mathbf{F}_{\vb*{\theta}},
\quad
\mathbb{E}[\mathbf{B}_{\vb*{\theta}}(x_i)]=0,
\end{align}
so that
\begin{align}
\mathbf{H}_{\vb*{\theta}}+\mathbf{F}_{\vb*{\theta}}
=
\frac{1}{M}\sum_{i=1}^{M} \mathbf{B}_{\vb*{\theta}}(x_i).
\end{align}
We assume that the second moment
\begin{align}
V_{H}
:=
\mathbb{E}\!\left[
\mathrm{Tr}\!\left(\mathbf{B}_{\vb*{\theta}}(x_1)^2\right)
\right]
< \infty
\label{eq:VH_def}
\end{align}
is finite.
Under this assumption, the second moment of the empirical average admits a simple expression. Indeed, since the measurement outcomes \(\{x_{i}\}_{i=1}^{M}\) are independent and satisfy \(\mathbb{E}[\mathbf{B}_{\vb*{\theta}}(x_i)]=0\), all cross terms vanish when taking expectations. As a result, only the diagonal contributions remain, and we obtain
\begin{align}
    \begin{split}
&\mathbb{E}\!\left[
\mathrm{Tr}\!\left(
(\mathbf{H}_{\vb*{\theta}}+\mathbf{F}_{\vb*{\theta}})^2
\right)
\right]
=
\frac{1}{M^2}
\sum_{i=1}^{M}
\mathbb{E}\!\left[
\mathrm{Tr}\!\left(\mathbf{B}_{\vb*{\theta}}(x_i)^2\right)
\right] \\
&=
\frac{1}{M}\,
\mathbb{E}\!\left[
\mathrm{Tr}\!\left(\mathbf{B}_{\vb*{\theta}}(x_1)^2\right)
\right]
=
\frac{V_H}{M}.
    \end{split}
\end{align}
Since \(\mathbf{H}_{\vb*{\theta}}+\mathbf{F}_{\vb*{\theta}}\) is symmetric, the inequality
\(\|\mathbf{X}\|_{\mathrm{op}}^2 \le \mathrm{Tr}(\mathbf{X}^2)\) holds. Applying this inequality together with Markov's inequality, we obtain
\begin{align}
\begin{split}
&\Pr\!\left[
\|\mathbf{H}_{\vb*{\theta}}+\mathbf{F}_{\vb*{\theta}}\|_{\mathrm{op}}
\ge c_H
\right]
\\
&\le
\frac{
\mathbb{E}\!\left[
\mathrm{Tr}\!\left(
(\mathbf{H}_{\vb*{\theta}}+\mathbf{F}_{\vb*{\theta}})^2
\right)
\right]
}{c_H^2}
=
\frac{V_H}{M c_H^2}.
\end{split}
\label{eq:hess_markov}
\end{align}
Consequently, for any \(\delta_H\in(0,1)\), choosing
\begin{align}
c_H := \sqrt{\frac{V_H}{M\delta_H}}
\label{eq:cH_def}
\end{align}
ensures that
\begin{align}
\Pr\!\left[
\|\mathbf{H}_{\vb*{\theta}}+\mathbf{F}_{\vb*{\theta}}\|_{\mathrm{op}}
\ge c_H
\right]
\le \delta_H .
\end{align}
This establishes a high-probability bound on the deviation of the empirical Hessian from its expectation.

\item[(2)] Upper bound for
\(\|\mathbf{F}^{-1}_{\vb*{\theta}}\vb*{r}_{\vb*{\theta}}(\vb*{\Delta})\|_{\infty}\)
on a \emph{good event}.

We next control the nonlinear remainder term arising from the third-order derivatives in the Taylor expansion. We begin by reducing the \(\ell_{\infty}\)-norm to the \(\ell_{2}\)-norm.
Using the inequality \(\|\cdot\|_{\infty}\le \|\cdot\|_{2}\), together with the submultiplicativity of the operator norm, we obtain
\begin{align}
\|\mathbf{F}^{-1}_{\vb*{\theta}}\vb*{r}_{\vb*{\theta}}(\vb*{\Delta})\|_\infty
\le
\|\mathbf{F}^{-1}_{\vb*{\theta}}\vb*{r}_{\vb*{\theta}}(\vb*{\Delta})\|_2
\le
\|\mathbf{F}^{-1}_{\vb*{\theta}}\|_{\mathrm{op}}
\|\vb*{r}_{\vb*{\theta}}(\vb*{\Delta})\|_2.
\end{align}
We now bound the \(\ell_{2}\)-norm of the remainder term.
Recalling its integral representation and applying Minkowski's inequality, we have
\begin{align}
    \begin{split}
\|\vb*{r}_{\vb*{\theta}}(\vb*{\Delta})\|_2
&=\left\|\int_0^1(1-t)\,\mathbf{R}_{\vb*{\theta}+t\vb*{\Delta}}[\vb*{\Delta},\vb*{\Delta}]\,dt\right\|_2 \\
&\le \int_0^1(1-t)\,\|\mathbf{R}_{\vb*{\theta}+t\vb*{\Delta}}[\vb*{\Delta},\vb*{\Delta}]\|_2\,dt \\
&\le \int_0^1(1-t)\,\|\mathbf{R}_{\vb*{\theta}+t\vb*{\Delta}}\|_{\mathrm{op}}\,\|\vb*{\Delta}\|_2^2\,dt .
    \end{split}
\label{boundrm}
\end{align}
For the true parameter \(\vb*{\theta}\), define the
nonnegative envelope
\begin{align}
r(x)
:=
\sup_{\|\vb*{u}\|_{2}\le r_{0}}
\left\|
\ell_{\vb*{\theta}+\vb*{u}}^{(3)}(x)
\right\|_{\mathrm{op}},
\end{align}
where the dependence of \(r\) on \(\vb*{\theta}\) is suppressed
for notational simplicity. By condition (R1), all points
\(\vb*{\theta}+\vb*{u}\) appearing in this supremum belong to the
ambient parameter domain, and condition (R4) guarantees
that \(r(X)\) has a finite second moment.

Since \(0<\epsilon<\epsilon_{0}\le r_{0}/\sqrt d\),
we have
\begin{align}
\sup_{\|\vb*{\Delta}\|_{2}\le\sqrt d\,\epsilon}
\left\|
\ell_{\vb*{\theta}+\vb*{\Delta}}^{(3)}(x)
\right\|_{\mathrm{op}}
\le r(x).
\end{align}
In particular, this bound applies to \(t\vb*{\Delta}\) for every
\(\vb*{\Delta}\in C_{\epsilon}^{(\infty)}\) and \(t\in[0,1]\).
We define
\begin{align}
\mu_{R}
:=
\mathbb{E}[r(X)]<\infty,
\quad
V_{R}
:=
\mathbb{V}[r(X)]<\infty.
\end{align}
By construction,
\(\mu_{R}\le\overline{\mu}_{R}\), and condition (R4)
also implies that \(V_{R}\) is uniformly bounded over
\(\vb*{\theta}\in\Theta\).

Let us further bound \(\|\mathbf{R}_{\vb*{\theta}+t\vb*{\Delta}}\|_{\mathrm{op}}\) using Chebyshev's inequality.
Let \(R(\vb*{x}):=\frac{1}{M}\sum_{i=1}^M r(x_{i})\). Then \(\mathbb{E}[R(\vb*{x})]=\mu_{R}\) and \(\mathbb{V}[R(\vb*{x})]=V_R/M\).
By Chebyshev's inequality, for any \(\delta_{R}\in(0,1)\),
\begin{align}
\Pr\!\left[\abs{ R(\vb*{x})-\mu_{R}} \ge \sqrt{\frac{V_{R}}{M\delta_{R}}}\right]\le \delta_{R}.
\end{align}
In particular, since
\begin{align}
\left\{R(\vb*{x})-\mu_R \ge \sqrt{\frac{V_{R}}{M\delta_{R}}} \right\}
\subseteq
\left\{|R(\vb*{x})-\mu_R|\ge \sqrt{\frac{V_{R}}{M\delta_{R}}}\right\},
\end{align}
we obtain the one-sided bound
\begin{align}
\Pr\!\left[
R(\vb*{x})
\ge
c_{R}
\right]
\le
\delta_R,
\end{align}
where
\begin{align}
c_{R}:=\mu_{R}+\sqrt{\frac{V_{R}}{M\delta_{R}}}.
\label{eq:cR_def}
\end{align}
Indeed, for every \(\vb*{\Delta}\in C_{\epsilon}^{(\infty)}\)
and \(t\in[0,1]\),
\begin{align}
\|R_{\vb*{\theta}+t\vb*{\Delta}}\|_{\mathrm{op}}
\le
\frac{1}{M}
\sum_{i=1}^{M}
\left\|
\ell_{\vb*{\theta}+t\vb*{\Delta}}^{(3)}(x_i)
\right\|_{\mathrm{op}}
\le R(x).
\end{align}
Consequently, on the event \(\{R(\vb*{x}) \le c_{R}\}\), inequality~\eqref{boundrm} yields
\begin{align}
        \begin{split}
\|\vb*{r}_{\vb*{\theta}}(\vb*{\Delta})\|_2
&\le \int_0^1(1-t)\,\|\mathbf{R}_{\vb*{\theta}+t\vb*{\Delta}}\|_{\mathrm{op}}\,\|\vb*{\Delta}\|_2^2\,dt \\
&\le \int_0^1(1-t)\,c_{R}\,\|\vb*{\Delta}\|_2^2\,dt \\
&= c_{R}\|\vb*{\Delta}\|_2^2\int_0^1(1-t)\,dt \\
&= \frac{1}{2}c_{R}\|\vb*{\Delta}\|_2^2 .
    \end{split}
\end{align}
Therefore, we conclude that, on the \emph{good event} $\{R(\vb*{x}) \leq c_{R}\}$,
\begin{align}
\sup_{\vb*{\Delta}\in\mathcal{C}^{(\infty)}_{\epsilon}}\|\mathbf{F}^{-1}_{\vb*{\theta}}\vb*{r}_{\vb*{\theta}}(\vb*{\Delta})\|_{\infty}
\le
\frac{1}{2}\|\mathbf{F}^{-1}_{\vb*{\theta}}\|_{\mathrm{op}}
\,c_{R}
\,d\,\epsilon^{2}. \label{finalbound2}
\end{align}
\end{enumerate}
\subsection{Brouwer fixed-point theorem}\label{appensubsec:bfpt}
We now combine the bounds obtained in Sec.~\ref{appensubsec:acib} to conclude the proof.
To this end, we introduce the finite number of measurement outcomes margin
\begin{align}
\tau_{-}
:=\epsilon
-\|\mathbf{F}^{-1}_{\vb*{\theta}}\|_{\mathrm{op}}\,c_H\,\sqrt{d}\,\epsilon
-\frac{1}{2}\|\mathbf{F}^{-1}_{\vb*{\theta}}\|_{\mathrm{op}}\,c_R\,d\,\epsilon^2,
\label{eq:tau_def}
\end{align}
which quantifies the residual budget available for the score term after accounting for the linear and nonlinear correction terms.
We next define the event
\begin{align}
\begin{split}
\mathcal{E}
:=
\left\{\|\mathbf{F}^{-1}_{\vb*{\theta}}\vb*{S}_{\vb*{\theta}}\|_{\infty}\le \tau_{-}\right\}
&\cap
\left\{\|\mathbf{H}_{\vb*{\theta}}+\mathbf{F}_{\vb*{\theta}}\|_{\mathrm{op}}\le c_H\right\}
\\
&\cap
\left\{R(\vb*{x}) \le c_R\right\}.
\end{split}
\label{goodevent}
\end{align}
On this event, the bounds established in Eqs.~\eqref{finalbound1} and \eqref{finalbound2} imply that
\begin{align}
    \sup_{\vb*{\Delta}\in\mathcal{C}^{(\infty)}_{\epsilon}}
    \bigl\|
\mathcal{F}(\vb*{\Delta})
\bigr\|_{\infty} \leq \epsilon.
\end{align}
Equivalently, on the event \(\mathcal{E}\), the map \(\mathcal{F}\) satisfies
\begin{align}
    \mathcal{F}(\mathcal{C}^{(\infty)}_{\epsilon}) \subseteq \mathcal{C}^{(\infty)}_{\epsilon},
\end{align}
that is, \(\mathcal{F}\) maps \(\mathcal{C}^{(\infty)}_{\epsilon}\) into itself.

We now verify the conditions required to apply Brouwer's fixed-point theorem. The set \(\mathcal{C}^{(\infty)}_\epsilon\) is nonempty, closed, and bounded in \(\mathbb{R}^d\); hence it is compact by the Heine--Borel theorem. Moreover, it is convex, since it is an intersection of \(2d<\infty\) closed half-spaces:
\begin{align}
\mathcal{C}^{(\infty)}_{\epsilon}
=
\bigcap_{a=1}^d
\{\vb*{\Delta}:\ \Delta_a\le \epsilon\}
\cap
\{\vb*{\Delta}:\ -\Delta_a\le \epsilon\}.
\end{align}
In addition, the map \(\mathcal{F}:\mathbb{R}^{d} \to \mathbb{R}^{d}\) is continuous by construction. Consequently, on the event \(\mathcal{E}\), Brouwer's fixed-point theorem guarantees the existence of a fixed point
\begin{align}
    \exists\, \vb*{\Delta}^{*}\in\mathcal{C}^{(\infty)}_{\epsilon}
    \quad\text{such that}\quad
    \vb*{\Delta}^{*}=\mathcal{F}(\vb*{\Delta}^{*}).
    \label{Brouwel}
\end{align}
Finally, we identify this fixed point.
By assumption, the log-likelihood function admits a unique stationary point at \(\tilde{\vb*{\theta}}^{\mathrm{ML}}\). Since fixed points of \(\mathcal{F}\) correspond precisely to stationary points of \(\vb*{S}_{\vb*{\vartheta}}\), the solution \(\vb*{\Delta}^{*}\) in Eq.~\eqref{Brouwel} must coincide with the MLE error, \(\vb*{\Delta}^{*}=\vb*{\Delta}^{\mathrm{ML}}\). As a result, we conclude that
\begin{align}
    \mathcal{E}
    \subseteq
    \{\|\vb*{\Delta}^{\mathrm{ML}}\|_{\infty} \leq \epsilon\}.
\end{align}
Taking complements and probabilities yields
\begin{widetext}
\begin{align}
    \begin{split}
    \Pr[\|\vb*{\Delta}^{\mathrm{ML}}\|_{\infty} \geq \epsilon] \leq \Pr[\mathcal{E}^{\mathsf{c}}]&=\Pr\bigg[\left\{\|\mathbf{F}^{-1}_{\vb*{\theta}}\vb*{S}_{\vb*{\theta}}\|_{\infty}\ge \tau_{-}\right\}
\cup\left\{\|\mathbf{H}_{\vb*{\theta}}+\mathbf{F}_{\vb*{\theta}}\|_{\mathrm{op}}\ge c_H\right\}
\cup\left\{R \ge c_R\right\}\bigg]\\
& \le \Pr[\left\{\|\mathbf{F}^{-1}_{\vb*{\theta}}\vb*{S}_{\vb*{\theta}}\|_{\infty}\ge \tau_{-}\right\}] + \delta_{H}+\delta_{R}.
    \end{split} \label{mlboundupper}
\end{align}
Here, by applying the union bound, we further obtain
\begin{align}
    \Pr[\left\{\|\mathbf{F}^{-1}_{\vb*{\theta}}\vb*{S}_{\vb*{\theta}}\|_{\infty}\ge \tau_{-}\right\}] = \Pr\left[ \bigcup_{a=1}^{d} \abs{\vb*{e}^{\mathrm{T}}_{a} \mathbf{F}^{-1}_{\vb*{\theta}}\vb*{S}_{\vb*{\theta}}} \ge \tau_{-}   \right] \le \sum_{a=1}^{d}\Pr\left[  \abs{\vb*{e}^{\mathrm{T}}_{a} \mathbf{F}^{-1}_{\vb*{\theta}}\vb*{S}_{\vb*{\theta}}} \ge \tau_{-}   \right]. \label{unionforlinfy}
\end{align}
\end{widetext}
To this end, let us set $\delta_{H}=\delta_{R}=\delta/4$. $\tau_{-}$ is then given by
\begin{align}
\tau_{-} = \tau_{0-}- D M^{-\frac{1}{2}},
\end{align}
where
\begin{align}
    &\tau_{0-}:=\left(1-\epsilon\frac{d\mu_{R}}{2}\|\mathbf{F}^{-1}_{\vb*{\theta}}\|_{\mathrm{op}}\right)\epsilon,\\
    &D:=\left(\|\mathbf{F}^{-1}_{\vb*{\theta}}\|_{\mathrm{op}}\,\sqrt{{\frac{4V_H}{\delta}}}\,\sqrt{d}
+\frac{1}{2}\|\mathbf{F}^{-1}_{\vb*{\theta}}\|_{\mathrm{op}}\,\sqrt{\frac{4V_{R}}{\delta}}\,d\,\epsilon\right)\epsilon.
\end{align}
From Eqs. \eqref{mlboundupper} and \eqref{unionforlinfy}, we obtain
\begin{align}
    \begin{split}
    \Pr[\|\vb*{\Delta}^{\mathrm{ML}}\|_{\infty} \geq \epsilon] \le \sum_{a=1}^{d}\Pr\left[  \abs{\vb*{e}^{\mathrm{T}}_{a} \mathbf{F}^{-1}_{\vb*{\theta}}\vb*{S}_{\vb*{\theta}}} \ge \tau_{-}  \right] + \frac{\delta}{2}.
    \end{split} \label{mlboundupperclosed}
\end{align}

\subsection{Application of the Berry--Esseen theorem}\label{appensubsec:aoubupper}
To refine the upper bound in Eq.~\eqref{mlboundupperclosed}, we invoke the Berry--Esseen theorem, which quantifies the rate of convergence in the central limit theorem. Specifically, it provides a uniform bound of order \(O(M^{-1/2})\) on the deviation between the cumulative distribution function of a normalized sum of independent random variables and that of the standard normal distribution.

By definition in Eq.~\eqref{appeneq:score}, \(\vb*{S}_{\vb*{\theta}}\) is a sum of independent random variables. In addition, we have
\begin{align}
    \mathbb{E}\!\left[\vb*{e}^{\mathrm{T}}_{a} \mathbf{F}^{-1}_{\vb*{\theta}}\vb*{S}_{\vb*{\theta}}\right]=0,\quad
    \mathbb{V}\!\left[\vb*{e}^{\mathrm{T}}_{a} \mathbf{F}^{-1}_{\vb*{\theta}}\vb*{S}_{\vb*{\theta}}\right]=\frac{[\mathbf{F}^{-1}_{\vb*{\theta}}]_{aa}}{M} =: \frac{\sigma^{2}_{\vb*{\theta},a}}{M},
\end{align}
where we have defined \(\sigma_{\vb*{\theta},a}:=\sqrt{[\mathbf{F}^{-1}_{\vb*{\theta}}]_{aa}}\). Let us assume that the third absolute moment of the projected score is finite:
\begin{align}
    \rho := \sup_{\vb*{\theta}\in\Theta}\max_{a\in[d]}\mathbb{E}\!\left[|\mathbf{e}_{a}^{\mathrm{T}}\mathbf{F}^{-1}_{\vb*{\theta}}\ell^{(1)}_{\vb*{\theta}}(x_i)|^3\right] < \infty.
\end{align}
Consequently, for any \(\tau_- > 0\), we obtain
\begin{align}
\begin{split}
    &\Pr\!\left[  \abs{\vb*{e}^{\mathrm{T}}_{a} \mathbf{F}^{-1}_{\vb*{\theta}}\vb*{S}_{\vb*{\theta}}} \ge \tau_{-}   \right] \\
    &= \Pr\!\left[   \abs{\frac{\sqrt{M}}{\sigma_{\vb*{\theta},a}}\vb*{e}^{\mathrm{T}}_{a} \mathbf{F}^{-1}_{\vb*{\theta}}\vb*{S}_{\vb*{\theta}}} \ge \frac{\sqrt{M}\tau_{-}}{\sigma_{\vb*{\theta},a}}   \right] \\
    &\le 2\int_{\frac{\sqrt{M}\tau_{-}}{\sigma_{\vb*{\theta},a}} }^{\infty} \frac{1}{\sqrt{2 \pi}} e^{-\frac{t^{2}}{2}} dt + \frac{2C \rho}{\sigma^{3}_{\vb*{\theta},a}\sqrt{M}}.
    \end{split}
\label{dsfsdf}
\end{align}
See Sec.~\ref{appensec:tools}.

The above small-error restriction guarantees that \(\tau_{0,-}\) is uniformly positive. Indeed, since \(\mu_{R}\le\overline{\mu}_{R}\) and \(\|\mathbf{F}_{\vb*{\theta}}^{-1}\|_{\mathrm{op}}\le K_{F}\), we have
\begin{align}
\begin{split}
\tau_{0-}
&=
\epsilon
\left(
1-
\frac{\epsilon d\mu_{R}}{2}
\|\mathbf{F}_{\vb*{\theta}}^{-1}\|_{\mathrm{op}}
\right)                                                     \\
&\ge
\epsilon
\left(
1-
\frac{\epsilon d\overline{\mu}_{R}K_{F}}{2}
\right)
>
\frac{\epsilon}{2}
>0 ,
\end{split}
\end{align}

uniformly over \(\vb*{\theta}\in\Theta\). Consequently,
\(\tau_{-}=\tau_{0-}-DM^{-1/2}>0\) whenever
\begin{align}
M>
\left(
\frac{D}{\tau_{0-}}
\right)^{2}.\label{Mfortau}
\end{align}

We then further upper bound the integral term in Eq.~\eqref{dsfsdf} by maximizing
\(\sigma_{\vb*{\theta},a}\) over \(\vb*{\theta} \in \Theta\) and \(a\in [d]\), and then applying the Mills ratio inequality in Sec.~\ref{appensec:tools}:
\begin{align}
    \begin{split}
    &\int_{\frac{\sqrt{M}\tau_{-}}{\sigma_{\vb*{\theta},a}} }^{\infty} \frac{1}{\sqrt{2 \pi}} e^{-\frac{t^{2}}{2}} dt \\
    &\le \int_{\frac{\sqrt{M}\tau_{-}}{\sigma} }^{\infty} \frac{1}{\sqrt{2 \pi}} e^{-\frac{t^{2}}{2}} dt \\
    &\le \frac{1}{\sqrt{2 \pi}} \frac{\sigma}{\sqrt{M}\tau_{-}}e^{-\frac{M \tau^{2}_{-}}{2 \sigma^{2}}},
    \end{split}
\label{gaussianboundfinal}
\end{align}
where
\begin{align}
    \sigma := \sup_{\vb*{\theta} \in \Theta}\max_{a \in [d]}\sigma_{\vb*{\theta},a}
    =
    \sqrt{\sup_{\vb*{\theta} \in \Theta}\max_{a \in [d]} [\mathbf{F}^{-1}_{\vb*{\theta}}]_{aa}}.
\end{align}
Combining Eqs.~\eqref{mlboundupper}, \eqref{unionforlinfy}, and
\eqref{gaussianboundfinal}, we obtain
\begin{align}
    \begin{split}
    &\Pr[\|\vb*{\Delta}^{\mathrm{ML}}\|_{\infty} \geq \epsilon] \\
    &\le
    \frac{2d}{\sqrt{2 \pi}} \frac{\sigma}{\sqrt{M}\tau_{-}}
    e^{-\frac{M \tau^{2}_{-}}{2 \sigma^{2}}}
    + \sum_{a=1}^{d} \frac{2C \rho}{\sigma^{3}_{\vb*{\theta},a}\sqrt{M}}
    + \frac{\delta}{2}.
    \end{split}
\end{align}
Therefore, it is sufficient to choose \(M\) such that
\begin{align}
      \frac{2d}{\sqrt{2 \pi}} \frac{\sigma}{\sqrt{M}\tau_{-}}
      e^{-\frac{M \tau^{2}_{-}}{2 \sigma^{2}}}
      + \sum_{a=1}^{d} \frac{2C \rho}{\sigma^{3}_{\vb*{\theta},a}\sqrt{M}}
      + \frac{\delta}{2}
      \le \delta.
\label{sufficient_condition_M}
\end{align}
Eq.~\eqref{sufficient_condition_M} then reduces to
\begin{align}
    \frac{2}{\sqrt{2 \pi}} \frac{\sigma}{\sqrt{M}\tau_{-}}
    e^{-\frac{M \tau^{2}_{-}}{2 \sigma^{2}}}
    \le \delta',
\label{sdfewfewf}
\end{align}
where
\begin{align}
    \delta' := \frac{\delta}{2d}-\frac{\eta}{\sqrt{M}},
    \qquad
    \eta:=  \frac{1}{d}\sum_{a=1}^{d} \frac{2C \rho}{\sigma^{3}_{\vb*{\theta},a}} .
\end{align}
Eq.~\eqref{sdfewfewf} admits a solution only if
\(\delta' \ge 0\), equivalently
\begin{align}
    M \ge \left(\frac{2d\eta}{\delta}\right)^2 .
\label{Mfordelta}
\end{align}
Using the Lambert \(W\) function, Eq.~\eqref{sdfewfewf} can be re-expressed as
\begin{align}
    M
    \ge
    \sigma^{2}\tau^{-2}_{-}
    W_{0}\!\left(
        2\delta'^{-2}\pi^{-1}
    \right),
\end{align}
where \(W_{0}\) denotes the principal branch of the Lambert \(W\) function. See Sec.~\ref{appensec:tools} for details of \(W_{0}\).
Note that \(\delta'^{-2} \ge 4\delta^{-2}d^{2}\).
Since \(W_{0}(x)\) is increasing and concave for \(x>0\) (see Sec.~\ref{appensec:tools}), it follows that
\begin{align}
    W_{0}(\delta'^{-2})
    \le
    \frac{\delta^{2}}{4d^{2}\delta'^{2}}
    W_{0}(4\delta^{-2}d^{2}).
\end{align}
Consequently,
\begin{align}
    M \ge
    \sigma^{2}\tau^{-2}_{-}
    \frac{\delta^{2}}{4d^{2}\delta'^{2}}
    W_{0}(8\pi^{-1}\delta^{-2}d^{2})
\label{sdfsdfddddd}
\end{align}
guarantees
\begin{align}
    \Pr[\|\vb*{\Delta}^{\mathrm{ML}}\|_{\infty} \geq \epsilon] \le \delta .
\end{align}
Here, Eq.~\eqref{sdfsdfddddd} can be reduced to the quadratic inequality
\begin{align}
    \tau_{0-}M
    -
    \left(\frac{2d\eta}{\delta}\tau_{0-}
    +D+\sigma\sqrt{W_{0}}\right)\sqrt{M}
    + \frac{2d\eta}{\delta}D
    \ge 0 .
\label{appen:quadratic1}
\end{align}
\begin{widetext}
Eq.~\eqref{appen:quadratic1} holds uniformly for \(M\) such that
\begin{align}
    M \ge
    \left(
    \frac{d\eta}{\delta}
    +\frac{D}{2\tau_{0-}}
    + \frac{\sigma \sqrt{W_{0}}}{2\tau_{0-}}
    + \sqrt{
    \left(
    \frac{d\eta}{\delta}
    +\frac{D}{2\tau_{0-}}
    + \frac{\sigma \sqrt{W_{0}}}{2\tau_{0-}}
    \right)^{2}
    -\frac{2d\eta}{\delta}\frac{D}{\tau_{0-}}
    }
    \right)^{2}.
\label{Msolution}
\end{align}
Considering Eqs.~\eqref{Mfortau}, \eqref{Mfordelta}, and \eqref{Msolution}, we obtain
\begin{align}
    M \ge
    \max \left\{
    \left(\frac{D}{\tau_{0-}}\right)^{2},
    \left(\frac{2d \eta}{\delta}\right)^{2},
    \left(
    \frac{d\eta}{\delta}
    +\frac{D}{2\tau_{0-}}
    + \frac{\sigma \sqrt{W_{0}}}{2\tau_{0-}}
    + \sqrt{
    \left(
    \frac{d\eta}{\delta}
    +\frac{D}{2\tau_{0-}}
    + \frac{\sigma \sqrt{W_{0}}}{2\tau_{0-}}
    \right)^{2}
    -\frac{2d\eta}{\delta}\frac{D}{\tau_{0-}}
    }
    \right)^{2}
    \right\}.
\end{align}
Since the upper bound must hold for all \(\vb*{\theta} \in \Theta\), it suffices to take
\begin{align}
    M \ge M_{U}
    &:=
    \sup_{\vb*{\theta}\in\Theta}
    \max \left\{
    \left(\frac{D}{\tau_{0-}}\right)^{2},
    \left(\frac{2d \eta}{\delta}\right)^{2},
    \left(
    \frac{d\eta}{\delta}
    +\frac{D}{2\tau_{0-}}
    + \frac{\sigma \sqrt{W_{0}}}{2\tau_{0-}}
    + \sqrt{
    \left(
    \frac{d\eta}{\delta}
    +\frac{D}{2\tau_{0-}}
    + \frac{\sigma \sqrt{W_{0}}}{2\tau_{0-}}
    \right)^{2}
    -\frac{2d\eta}{\delta}\frac{D}{\tau_{0-}}
    }
    \right)^{2}
    \right\}.
\end{align}
Since \(M_U\) is sufficient to guarantee the desired bound uniformly over \(\Theta\), we have \(M_0 \le M_U\), which completes the proof.
\end{widetext}

\subsection{Asymptotic limit $\epsilon \to 0$}
We now derive the simplified small-error form stated in
Theorem~\ref{theorem:infiniteupper} in the main text. Fix $0<\delta\le 1$ and
$d<\infty$. 

From the definitions of $\tau_{0,-}$ and $D$, for each
$\vb*{\theta}\in\Theta$ we have, as $\epsilon\to0$,
\begin{align}
    \tau_{0,-}
    &=
    \epsilon
    \left(
    1-
    \epsilon
    \frac{d\mu_R}{2}
    \|\mathbf F_{\vb*{\theta}}^{-1}\|_{\rm op}
    \right)
    =
    \epsilon(1+O(\epsilon)),
    \\
    D
    &=
    \left(
    \sqrt{\frac{4V_H}{\delta}}\sqrt d
    +
    \frac12
    \sqrt{\frac{4V_R}{\delta}}\,d\epsilon
    \right)
    \|\mathbf F_{\vb*{\theta}}^{-1}\|_{\rm op}\epsilon
    =
    O(\epsilon).
\end{align}
Here the $O(\epsilon)$ terms are uniform over
$\vb*{\theta}\in\Theta$ under the boundedness assumptions of the
theorem. Consequently,
\begin{align}
    \left(\frac{D}{\tau_{0,-}}\right)^2=O(1),
    \qquad
    \left(\frac{2d\eta}{\delta}\right)^2=O(1).
\end{align}
These two terms are therefore subleading compared with the
$\epsilon^{-2}$ term below.

It remains to expand the third term in the maximum. Define
\begin{align}
    Q_\epsilon(\vb*{\theta})
    :=
    \frac{d\eta}{\delta}
    +
    \frac{D}{2\tau_{0,-}}
    +
    \frac{\sigma\sqrt{W_0}}{2\tau_{0,-}},
    \qquad
    R_\epsilon(\vb*{\theta})
    :=
    \frac{2d\eta}{\delta}\frac{D}{\tau_{0,-}} .
\end{align}
Then
\begin{align}
    y^*
    =
    \left(
    Q_\epsilon(\vb*{\theta})
    +
    \sqrt{
    Q_\epsilon(\vb*{\theta})^2
    -
    R_\epsilon(\vb*{\theta})
    }
    \right)^2 .
\end{align}
Since $D/\tau_{0,-}=O(1)$ and $\eta=O(1)$, we have
$R_\epsilon(\vb*{\theta})=O(1)$. On the other hand,
\begin{align}
    Q_\epsilon(\vb*{\theta})
    =
    \frac{\sigma\sqrt{W_0}}{2\epsilon}
    \left(1+O(\epsilon)\right).
\end{align}
Therefore,
\begin{align}
    \sqrt{
    Q_\epsilon(\vb*{\theta})^2
    -
    R_\epsilon(\vb*{\theta})
    }
    =
    Q_\epsilon(\vb*{\theta})
    \left(1+O(\epsilon^2)\right),
\end{align}
and hence
\begin{align}
    y^*
    =
    \sigma^2 W_0\,\epsilon^{-2}
    \left(1+O(\epsilon)\right).
\end{align}
Using
\begin{align}
    \sigma^2
    =
    \sup_{\vb*{\theta}\in\Theta}
    \max_{a\in[d]}
    [\mathbf F_{\vb*{\theta}}^{-1}]_{aa}
\end{align}
we obtain
\begin{align}
    M_0(\epsilon,\delta,d)
    \le
    W_0(8\pi^{-1}\delta^{-2}d^2)
    \sup_{\vb*{\theta}\in\Theta}
    \max_{a\in[d]}
    [\mathbf F_{\vb*{\theta}}^{-1}]_{aa}
    \epsilon^{-2}
    \left(1+O(\epsilon)\right).
\end{align}
Equivalently,
\begin{align}
    \limsup_{\epsilon\to0}
    \epsilon^2 M_0(\epsilon,\delta,d)
    \le
    W_0(8\pi^{-1}\delta^{-2}d^2)
    \sup_{\vb*{\theta}\in\Theta}
    \max_{a\in[d]}
    [\mathbf F_{\vb*{\theta}}^{-1}]_{aa}.
\end{align}
This is precisely the simplified small-error upper bound stated in
Theorem~\ref{theorem:infiniteupper} in the main text.

\section{Proof of lower bound for $\ell_{\infty}$}\label{appensec:lowerinfinite}
\begin{widetext}
Assume that conditions (A1)–(A2) and the standard regularity conditions (R1)–(R5) stated in in Appendix \ref{appen:sraftpr} hold. Then the following theorem applies.
{\begin{theorem}
For $0<\epsilon$, $0<\delta < 1/\sqrt{8\pi e}$ and $d<\infty$, let
$M_0(\epsilon,\delta,d)$ denote the minimal number of repetitions such that
\begin{align}
    \forall M\ge M_0(\epsilon,\delta,d):
    \quad
    \Pr\!\left[
    \|\tilde{\boldsymbol{\theta}}^{\rm ML}
    -\boldsymbol{\vartheta}\|_\infty
    \le \epsilon
    \right]
    \ge 1-\delta  \quad
    \text{for all } \vb*{\vartheta}\in\Theta .
\end{align}
\(M_{0}\) is then lower bounded as
\begin{align}
    M_{0}
    \ge
    \max
    \left\{
    M_2(\vb*{\theta},a),
    \left(
    \frac{
    \max\{
    \sigma_{\vb*{\theta},a}-D,0\}
    }{
    \tau_{0+}
    }
    \right)^2,
    \left(
    \frac{
    \eta
    }{
    \frac{1}{\sqrt{2\pi e}}-2\delta
    }
    \right)^2
    \right\},
    \label{eq:theorem_ML}
\end{align}
for any $\boldsymbol{\theta}\in\Theta$ and $a\in[d]$.
Here
\begin{align}
    M_2(\vb*{\theta},a)
    :=
    \begin{cases}
    \displaystyle
    \left(
    \frac{
    B_{\vb*{\theta},a}
    +
    \sqrt{\Delta_{\vb*{\theta},a}}
    }{2}
    \right)^2,
    &
    B_{\vb*{\theta},a}>0
    \ \text{and}\
    \Delta_{\vb*{\theta},a}>0,
    \\[1.2em]
    0,
    &
    \text{otherwise}.
    \end{cases}
\end{align}
and
\begin{align}
    \tau_{0,+}
&:=
\left(
1+\epsilon\,\frac{d\mu_R}{2}
\|F_\theta^{-1}\|_{\mathrm{op}}
\right)\epsilon , \\
    B_{\vb*{\theta},a}
    &:=
    -
    \frac{D}{\tau_{0+}}
    -
    \frac{\eta}{2\delta}
    +
    \frac{
    \sqrt{W_0\!\left(\delta^{-2}/8\pi\right)}
    \sigma_{\vb*{\theta},a}
    }{
    \tau_{0+}
    },
    \\
    \Delta_{\vb*{\theta},a}
    &:=
    B_{\vb*{\theta},a}^{2}
    -
    \frac{2\eta D}{\delta\tau_{0+}},
    \\
    D
    &:=
    \left(
    \sqrt{\frac{2V_H}{\delta}}\sqrt d
    +
    \frac{1}{2}
    \sqrt{\frac{2V_R}{\delta}}\,d\epsilon
    \right)
    \|\mathbf{F}^{-1}_{\vb*{\theta}}\|_{\rm op}\epsilon,
    \\
    \eta
    &:=
    \frac{2C\rho}{\sigma_{\boldsymbol{\theta},a}^3},
    \\
    \rho &:= \sup_{\vb*{\theta}\in\Theta}\max_{a\in[d]}\mathbb{E}\!\left[|\mathbf{e}_{a}^{\mathrm{T}}\mathbf{F}^{-1}_{\vb*{\theta}}\ell^{(1)}_{\vb*{\theta}}(x)|^3\right],\\
    \sigma_{\boldsymbol{\theta},a}
    &:=
    \sqrt{[\mathbf{F}^{-1}_{\vb*{\theta}}]_{aa}}.
    \end{align}
In the small-error limit \(\epsilon\to0\), this lower bound reduces to
\begin{align}
    M_0(\epsilon,\delta,d)
    \gtrsim
    W_0\!\left(\delta^{-2}/8\pi\right)
    \sup_{\vb*{\theta}\in\Theta}
    \max_{a\in [d]}
    \left[
    \mathbf{F}_{\vb*{\theta}}^{-1}
    \right]_{aa}
    \epsilon^{-2}.
    \label{eq:theorem_small_error_lower}
\end{align}
where the notation $A_\epsilon\gtrsim B_\epsilon$ denotes
\begin{align}
    \liminf_{\epsilon\to0}
    \frac{A_\epsilon}{B_\epsilon}
    \ge 1,
\end{align}
i.e., \(A_\epsilon \ge (1+o(1))\, B_\epsilon\) as \(\epsilon \to 0\) with fixed $\delta$ and $d$. 
\end{theorem}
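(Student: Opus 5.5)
Fix an arbitrary $\vb*{\theta}\in\Theta$ and a coordinate $a\in[d]$. The strategy mirrors the upper-bound proof in reverse: from the failure event $A(\epsilon)^{\mathsf c}$ we extract a \emph{subevent} whose probability can be bounded from \emph{below}. Then $\Pr[A(\epsilon)^{\mathsf c}]\le\delta$ forces $M$ to be large. The starting point is again the fixed-point representation \eqref{eq:fixed_point_form}, read coordinatewise:
\begin{align}
\Delta^{\mathrm{ML}}_a=\vb*{e}_a^{\mathrm T}\mathbf F^{-1}_{\vb*{\theta}}\vb*{S}_{\vb*{\theta}}+\vb*{e}_a^{\mathrm T}\mathbf F^{-1}_{\vb*{\theta}}(\mathbf H_{\vb*{\theta}}+\mathbf F_{\vb*{\theta}})\vb*{\Delta}^{\mathrm{ML}}+\vb*{e}_a^{\mathrm T}\mathbf F^{-1}_{\vb*{\theta}}\vb*{r}_{\vb*{\theta}}(\vb*{\Delta}^{\mathrm{ML}}).
\end{align}

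On the event $A(\epsilon)$ we have $\|\vb*{\Delta}^{\mathrm{ML}}\|_2\le\sqrt d\,\epsilon$. We intersect with the good events $\{\|\mathbf H_{\vb*{\theta}}+\mathbf F_{\vb*{\theta}}\|_{\mathrm{op}}<c_H\}$ and $\{R(\vb*{x})\le c_R\}$, now with $\delta_H=\delta_R=\delta/2$. This explains the $\sqrt{2V_H/\delta}$ and $\sqrt{2V_R/\delta}$ inside $D$. On this intersection the two correction terms are bounded exactly as in Eqs.~\eqref{finalbound1} and \eqref{finalbound2}, which gives
\begin{align}
|\vb*{e}_a^{\mathrm T}\mathbf F^{-1}_{\vb*{\theta}}\vb*{S}_{\vb*{\theta}}|\le \epsilon+\|\mathbf F^{-1}_{\vb*{\theta}}\|_{\mathrm{op}}c_H\sqrt d\epsilon+\tfrac12\|\mathbf F^{-1}_{\vb*{\theta}}\|_{\mathrm{op}}c_R d\epsilon^2=\tau_{0+}+DM^{-1/2}=:\tau_+ .
\end{align}
Contrapositively, we obtain the inclusion
\begin{align}
\{|\vb*{e}_a^{\mathrm T}\mathbf F^{-1}\vb*{S}|>\tau_+\}\subseteq A(\epsilon)^{\mathsf c}\cup\{\text{Hessian bad}\}\cup\{R\text{ bad}\}.
\end{align}
Taking probabilities yields
\begin{align}
\Pr[|\vb*{e}_a^{\mathrm T}\mathbf F^{-1}\vb*{S}|>\tau_+]\le \Pr[A(\epsilon)^{\mathsf c}]+\delta ,
\end{align}
so any admissible $M$ must satisfy $\Pr[|\vb*{e}_a^{\mathrm T}\mathbf F^{-1}\vb*{S}|>\tau_+]\le 2\delta$.

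Next we lower bound the left-hand side with the Berry--Esseen theorem, applied to the scalar i.i.d. sum with variance $\sigma_{\vb*{\theta},a}^2/M$ and third moment at most $\rho$. This gives
\begin{align}
\Pr[\,\cdot\,]\ge 2\bigl(1-\Phi(\sqrt M\tau_+/\sigma_{\vb*{\theta},a})\bigr)-\eta/\sqrt M ,
\end{align}
with $\eta=2C\rho/\sigma_{\vb*{\theta},a}^3$. The Gaussian tail is then bounded from below with the left-hand Mills inequality. Writing $z:=\sqrt M\tau_+/\sigma_{\vb*{\theta},a}$, the bound is $\frac{z}{1+z^2}\phi(z)$, and for $z\ge1$ this is at least $\frac{1}{2z}\phi(z)$.

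Two regimes must be handled separately, and they produce the other two entries of the maximum in \eqref{eq:theorem_ML}. If $z\le 1$, then the tail is at least $1-\Phi(1)$, and a cruder bound by $\phi(1)=1/\sqrt{2\pi e}$ is available. Requiring this to be at most $2\delta+\eta/\sqrt M$ forces $M\ge(\eta/(1/\sqrt{2\pi e}-2\delta))^2$. This step is where the hypothesis $\delta<1/\sqrt{8\pi e}$ enters, since it guarantees the denominator is positive. The condition $z\ge 1$ itself rearranges to $\sqrt M\tau_{0+}+D\ge\sigma_{\vb*{\theta},a}$, which is the middle entry.

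In the main regime $z\ge1$, the constraint reads
\begin{align}
\frac{1}{z\sqrt{2\pi}}e^{-z^2/2}\le 2\delta+\eta/\sqrt M .
\end{align}
We invert it with Lambert $W_0$, exactly as in the upper-bound proof: $z^2\ge W_0\bigl(1/(2\pi(2\delta+\eta/\sqrt M)^2)\bigr)$. When $\eta/\sqrt M$ is absorbed by monotonicity and concavity of $W_0$, the leading value is $W_0(\delta^{-2}/8\pi)$. Substituting $z=\sqrt M(\tau_{0+}+DM^{-1/2})/\sigma_{\vb*{\theta},a}$ and clearing denominators gives a quadratic inequality in $\sqrt M$ with linear coefficient $B_{\vb*{\theta},a}$ and discriminant $\Delta_{\vb*{\theta},a}$. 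Its larger root yields $M_2(\vb*{\theta},a)$.

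I expect the \textbf{main obstacle} to be this inversion step. The $\eta/\sqrt M$ term sits inside the argument of $W_0$, and the error $D$ enters additively. Handling both requires a careful linearization, for instance using $(2\delta+\eta/\sqrt M)^{-2}\ge(2\delta)^{-2}(1-\eta/(\delta\sqrt M))$. This linearization must preserve the direction of the inequality. I would first try to establish it crudely and accept slightly weaker explicit constants.

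Finally, the asymptotics. Since $\tau_{0+}=\epsilon(1+O(\epsilon))$ and $D=O(\epsilon)$ uniformly in $\vb*{\theta}$, we get $B_{\vb*{\theta},a}\sim\sqrt{W_0}\,\sigma_{\vb*{\theta},a}/\epsilon$ and $\Delta_{\vb*{\theta},a}=B_{\vb*{\theta},a}^2(1+O(\epsilon^2))$. Therefore $M_2=W_0(\delta^{-2}/8\pi)[\mathbf F^{-1}_{\vb*{\theta}}]_{aa}\epsilon^{-2}(1+O(\epsilon))$. Because $\vb*{\theta}$ and $a$ were arbitrary, we take the supremum over $\vb*{\theta}$ and the maximum over $a$. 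The constants are uniform by (R2)--(R5), which allows the limit $\epsilon\to0$ to be exchanged with the supremum by an approximating sequence $\vb*{\theta}_k$, and this gives \eqref{eq:theorem_small_error_lower}.
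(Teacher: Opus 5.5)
Your skeleton is the paper's: the coordinatewise fixed-point identity, good events with $\delta_H=\delta_R=\delta/2$, the threshold $\tau_+=\tau_{0+}+DM^{-1/2}$, two-sided Berry--Esseen with the lower Mills bound, the split at $z=1$, Lambert inversion, a quadratic in $\sqrt M$, and an approximating sequence for the supremum. This route can be completed to the $\epsilon\to0$ statement, but two steps do not deliver the explicit finite bound as stated.

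The first is the Lambert step you flagged. Your linearization $(2\delta+\eta/\sqrt M)^{-2}\ge(2\delta)^{-2}(1-\eta/(\delta\sqrt M))$ leads to $(\tau_{0+}\sqrt M+D)^2\ge\sigma_{\vb*{\theta},a}^2W_0(\delta^{-2}/8\pi)\,(1-\eta/(\delta\sqrt M))$. After clearing $\sqrt M$ this is cubic, and it never produces $B_{\vb*{\theta},a}$ and $\Delta_{\vb*{\theta},a}$. No linearization is needed. Concavity of $W_0$ together with $W_0(0)=0$ gives $W_0(tz)\ge tW_0(z)$ for $t\in[0,1]$. Take $t=(2\delta/\delta')^2$ with $\delta'=2\delta+\eta/\sqrt M$. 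Then $z^2\ge W_0(\delta'^{-2}/2\pi)$ implies $\sqrt M\,\delta'\tau_+\ge2\delta\sqrt{W_0(\delta^{-2}/8\pi)}\,\sigma_{\vb*{\theta},a}$. Multiplying by $\sqrt M$ gives $(2\delta\sqrt M+\eta)(\tau_{0+}\sqrt M+D)\ge2\delta\sqrt{W_0(\delta^{-2}/8\pi)}\,\sigma_{\vb*{\theta},a}\sqrt M$. Dividing by $2\delta\tau_{0+}$ yields exactly $y^2-B_{\vb*{\theta},a}y+\eta D/(2\delta\tau_{0+})\ge0$, whose discriminant is $\Delta_{\vb*{\theta},a}$.

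The second problem is the $z\le1$ regime, where your inequality runs the wrong way. From $2(1-\Phi(z))\ge\phi(1)$ and admissibility you get $\phi(1)-\eta/\sqrt M\le2\delta$, i.e.\ $\sqrt M\le\eta/(1/\sqrt{2\pi e}-2\delta)$. This is an \emph{upper} bound on admissible $M$ with $z\le1$; it does not force $M\ge(\eta/(1/\sqrt{2\pi e}-2\delta))^2$. Read correctly, it says that any admissible $M$ beyond that value must have $z>1$.

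That is also the only role this quantity plays in the paper. There it appears as the condition guaranteeing $\delta'\le f(1)$. The left branch is discarded by a different argument: on $x\le1$, $f(x)-\delta'$ is strictly increasing in $M$, so a crossing there cannot be the eventual threshold.

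With the correct orientation, you get two conclusions (up to integer rounding):
\begin{itemize}
\item $M_0\ge M_2(\vb*{\theta},a)$ whenever $M_2$ exceeds the branch entry $((\sigma_{\vb*{\theta},a}-D)/\tau_{0+})^2$. To see this, take $M$ just below the larger root; there $z\ge1$ and the necessary condition fails.
\item $M_0$ is at least the branch entry whenever that entry exceeds the Berry--Esseen entry.
\end{itemize}
The hypothesis $\delta<1/\sqrt{8\pi e}$ gives $W_0(\delta^{-2}/8\pi)>1$, so the first case holds for small $\epsilon$, which is all the asymptotic claim needs. This, and not only the positivity of $1/\sqrt{2\pi e}-2\delta$, is where the hypothesis on $\delta$ is used.

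The third entry, however, is not a lower bound by itself, and no argument can make it one. Consider the unit-variance Gaussian location model ($\Theta=\mathbb R$, MLE equal to the sample mean), which meets all hypotheses. With $\delta=0.05$ and $\epsilon=1/2$ one has $M_0=16$. Yet $\rho=2\sqrt{2/\pi}$ and any valid constant $C\ge0.4097$ give $(\eta/(1/\sqrt{2\pi e}-2\delta))^2>80$.
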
}
\end{widetext}

\subsection{Proof sketch}\label{appensubsec:proofsketchlower}
Our goal is to characterize the minimal number of repetitions \(M_0(\epsilon,\delta,d)\) required to guarantee that, for a prescribed accuracy \(\epsilon>0\) and confidence level \(1-\delta\), with \(0<\delta< 1/\sqrt{8\pi e}\), the MLE \(\tilde{\boldsymbol{\theta}}^{\mathrm{ML}}\) satisfies
\begin{equation}
\forall\, M \ge M_0(\epsilon,\delta,d):\quad
\Pr\!\left[
\left\|
\tilde{\boldsymbol{\theta}}^{\mathrm{ML}}
-
\boldsymbol{\theta}
\right\|_{\infty}
\le \epsilon
\right]
\ge 1-\delta.
\label{eq:linfty-guarantee2}
\end{equation}

In this section, we derive a lower bound on the required number of repetitions. The argument proceeds by comparing the accuracy event with a larger auxiliary event. Let us again consider the accuracy event defined in Eq.~\eqref{eq:accuracy_event}:
\begin{align}
    A(\epsilon)
    :=
    \left\{
    \left\|
    \tilde{\boldsymbol{\theta}}^{\mathrm{ML}}
    -
    \boldsymbol{\theta}
    \right\|_{\infty}
    \le \epsilon
    \right\}.
\end{align}
We first introduce an auxiliary event \(L(\epsilon)\) satisfying
\begin{align}
    A(\epsilon)\subseteq L(\epsilon),
\end{align}
which immediately yields
\begin{align}
    \Pr\!\left[L(\epsilon)\right]
    \ge
    \Pr\!\left[A(\epsilon)\right].
\label{eq:subset-prob2}
\end{align}

Next, we characterize the minimal number of repetitions \(M_{L}(\epsilon,\delta,d)\) such that
\begin{equation}
\forall\, M \ge M_{L}(\epsilon,\delta,d):\quad
\Pr\!\left[L(\epsilon)\right]\ge 1-\delta.
\label{eq:MU-def2}
\end{equation}

Because \(A(\epsilon)\subseteq L(\epsilon)\), any repetition threshold \(M\) satisfying
\(\Pr[A(\epsilon)]\ge 1-\delta\)
necessarily also satisfies
\(\Pr[L(\epsilon)]\ge 1-\delta\).
Accordingly, the corresponding minimal thresholds obey
\begin{align}
    M_{L}(\epsilon,\delta,d)
    \le
    M_{0}(\epsilon,\delta,d).
\end{align}
Therefore, a necessary condition for guaranteeing
\(\Pr[A(\epsilon)] \ge 1-\delta\)
is
\begin{align}
    M \ge M_{L}(\epsilon,\delta,d).
\end{align}

\subsection{Taylor expansion of the score function}
\label{appensubsec:taysf}
We begin with the expansion
\begin{align}
\vb*{\Delta}^{\mathrm{ML}}
=
\mathbf{F}^{-1}_{\vb*{\theta}}\vb*{S}_{\vb*{\theta}}
+\mathbf{F}^{-1}_{\vb*{\theta}}\!\left(\mathbf{H}_{\vb*{\theta}}+\mathbf{F}_{\vb*{\theta}}\right)\vb*{\Delta}^{\mathrm{ML}}
+\mathbf{F}^{-1}_{\vb*{\theta}}\vb*{r}_{\vb*{\theta}}(\vb*{\Delta}^{\mathrm{ML}}).
\end{align}
Taking the inner product with the basis vector \(\vb*{e}_{a}\), we obtain
\begin{align}
    \begin{split}
    &\vb*{e}^{\mathrm{T}}_{a}\mathbf{F}^{-1}_{\vb*{\theta}}\vb*{S}_{\vb*{\theta}}
\\
&=\vb*{\Delta}^{\mathrm{ML}}_{a}
-\vb*{e}^{\mathrm{T}}_{a}\mathbf{F}^{-1}_{\vb*{\theta}}
\!\left(\mathbf{H}_{\vb*{\theta}}+\mathbf{F}_{\vb*{\theta}}\right)
\vb*{\Delta}^{\mathrm{ML}}
-\vb*{e}^{\mathrm{T}}_{a}\mathbf{F}^{-1}_{\vb*{\theta}}
\vb*{r}_{\vb*{\theta}}(\vb*{\Delta}^{\mathrm{ML}}).
    \end{split}
\end{align}
Applying the triangle inequality and standard operator-norm bounds yields
\begin{widetext}
\begin{align}
    \begin{split}
    \abs{\vb*{e}^{\mathrm{T}}_{a}\mathbf{F}^{-1}_{\vb*{\theta}}\vb*{S}_{\vb*{\theta}}}
    &\le
    \abs{\vb*{\Delta}^{\mathrm{ML}}_{a}}
    +
    \abs{
    \vb*{e}^{\mathrm{T}}_{a}
    \mathbf{F}^{-1}_{\vb*{\theta}}
    \!\left(\mathbf{H}_{\vb*{\theta}}+\mathbf{F}_{\vb*{\theta}}\right)
    \vb*{\Delta}^{\mathrm{ML}}
    }
    +
    \abs{
    \vb*{e}^{\mathrm{T}}_{a}
    \mathbf{F}^{-1}_{\vb*{\theta}}
    \vb*{r}_{\vb*{\theta}}(\vb*{\Delta}^{\mathrm{ML}})
    }
    \\
    &\le
    \abs{\vb*{\Delta}^{\mathrm{ML}}_{a}}
    +
    \|\mathbf{F}^{-1}_{\vb*{\theta}}\|_{\mathrm{op}}
    \,\|\mathbf{H}_{\vb*{\theta}}+\mathbf{F}_{\vb*{\theta}}\|_{\mathrm{op}}
    \,\|\vb*{\Delta}^{\mathrm{ML}}\|_{2}
    +
    \|\mathbf{F}^{-1}_{\vb*{\theta}}\|_{\mathrm{op}}
    \|\vb*{r}_{\vb*{\theta}}(\Delta^{\mathrm{ML}})\|_2.
\end{split}
\label{lowertt}
\end{align}
\end{widetext}

We define the \emph{good} event
\begin{align}
    \mathcal{G}:=\left\{\|\mathbf{H}_{\vb*{\theta}}+\mathbf{F}_{\vb*{\theta}}\|_{\mathrm{op}}\le c_H\right\}
\cap\left\{R(\vb*{x}) \le c_R\right\}.
\end{align}
On the accuracy event
\(\{\| \vb*{\Delta}^{\mathrm{ML}}\|_{\infty} \leq \epsilon\}\),
we have
\(\abs{\vb*{\Delta}^{\mathrm{ML}}_{a}} \leq \epsilon\)
and
\(\| \vb*{\Delta}^{\mathrm{ML}}\|_{2} \leq \sqrt{d}\epsilon\).
Combining these bounds with Eq.~\eqref{lowertt}, and then decomposing the probability according to \(\mathcal{G}\), gives
\begin{align}
    \begin{split}
    &\Pr[\{\| \vb*{\Delta}^{\mathrm{ML}}\|_{\infty} \leq \epsilon\}]\\
    &\le
    \Pr[\{\| \vb*{\Delta}^{\mathrm{ML}}\|_{\infty} \leq \epsilon\} \cap \mathcal{G}]
    +
    \Pr[\{\| \vb*{\Delta}^{\mathrm{ML}}\|_{\infty} \leq \epsilon\} \cap \mathcal{G}^{\mathsf{c}}]\\
    &\le
    \Pr[\{\| \vb*{\Delta}^{\mathrm{ML}}\|_{\infty} \leq \epsilon\} \cap \mathcal{G}]
    +
    \Pr[\mathcal{G}^{\mathsf{c}}]\\
    &\le
    \Pr[\{\| \vb*{\Delta}^{\mathrm{ML}}\|_{\infty} \leq \epsilon\} \cap \mathcal{G}]
    +
    \delta_{H}
    +
    \delta_{R}.
    \end{split}
\label{boundlowertt}
\end{align}
On the event
\(\{\| \vb*{\Delta}^{\mathrm{ML}}\|_{\infty} \leq \epsilon\} \cap \mathcal{G}\),
Eq.~\eqref{lowertt} implies
\begin{align}
    \begin{split}
    &\abs{\vb*{e}^{\mathrm{T}}_{a}\mathbf{F}^{-1}_{\vb*{\theta}}\vb*{S}_{\vb*{\theta}}} \\
    &\leq
    \epsilon
    +
    \|\mathbf{F}^{-1}_{\vb*{\theta}}\|_{\mathrm{op}}\,c_H\,\sqrt{d}\,\epsilon
    +
    \frac{1}{2}\|\mathbf{F}^{-1}_{\vb*{\theta}}\|_{\mathrm{op}}\,c_R\,d\,\epsilon^2
    =: \tau_{+}.
    \end{split}
\label{lowerttepsilp}
\end{align}
We now choose \(\delta_{H}=\delta_{R}=\delta/2\). With this choice, the threshold
\(\tau_{+}\) can be expressed as
\begin{align}
    \tau_{+} = \tau_{0+}+DM^{-\frac{1}{2}},
\end{align}
where
\begin{align}
    &\tau_{0+}:=\left(1+\epsilon\frac{d\mu_{R}}{2}\|\mathbf{F}^{-1}_{\vb*{\theta}}\|_{\mathrm{op}}\right)\epsilon,\\
    &D:=\left(\|\mathbf{F}^{-1}_{\vb*{\theta}}\|_{\mathrm{op}}\,\sqrt{\frac{2V_H}{\delta}}\,\sqrt{d}
+\frac{1}{2}\|\mathbf{F}^{-1}_{\vb*{\theta}}\|_{\mathrm{op}}\,\sqrt{\frac{2V_{R}}{\delta}}\,d\,\epsilon\right)\epsilon.
\end{align}

Combining Eqs.~\eqref{boundlowertt} and \eqref{lowerttepsilp}, we obtain
\begin{align}
    \Pr[\{\| \vb*{\Delta}^{\mathrm{ML}}\|_{\infty} \leq \epsilon\}]
    \le
    \Pr[\abs{\vb*{e}^{\mathrm{T}}_{a}\mathbf{F}^{-1}_{\vb*{\theta}}\vb*{S}_{\vb*{\theta}}} \leq \tau_{+}]
    +
    \delta.
\label{lowerinequfir}
\end{align}

\subsection{Application of the Berry--Esseen theorem}
\label{appensubsec:berrylower}
Applying the Berry--Esseen theorem from Sec.~\ref{appensec:tools}, we obtain
\begin{align}
    \begin{split}
    &\Pr\!\left[
    \left|
    \frac{
    \sqrt{M}\,
    \left(
    \vb*{e}^{\mathrm{T}}_{a}
    \mathbf{F}^{-1}_{\vb*{\theta}}
    \vb*{S}_{\vb*{\theta}}
    \right)
    }{
    \sigma_{\vb*{\theta},a}
    }
    \right|
    \le x
    \right]
    \\
    &\le
    1
    -
    2\int_x^\infty
    \frac{1}{\sqrt{2\pi}}
    e^{-t^2/2}\,dt
    +
    \frac{2C\rho}{\sigma_{\vb*{\theta},a}^3\sqrt{M}},
    \end{split}
    \label{eq:two_sided_BE}
\end{align}
where \(x:=\sqrt{M}\tau_{+}/\sigma_{\vb*{\theta},a}\). We emphasize that, although \(x\) is introduced for notational convenience, it depends on \(M\) through \(\tau_{+}\).

To lower bound the Gaussian tail integral, we use the Mills ratio inequality stated in Sec.~\ref{appensec:tools}. For all \(x>0\),
\begin{align}
    \int_x^\infty
    \frac{1}{\sqrt{2\pi}}
    e^{-t^2/2}\,dt
    >
    \frac{x}{x^2+1}
    \frac{1}{\sqrt{2\pi}}
    e^{-x^2/2}.
    \label{firsinequgau}
\end{align}
We next introduce the auxiliary function
\begin{align}
f(x)=
    \begin{cases}
    \displaystyle
    \frac{1}{\sqrt{2\pi}}x e^{-x^2/2},
    & 0\le x < 1,
    \\[0.9em]
    \displaystyle
    \frac{1}{\sqrt{2\pi}}\frac{1}{x} e^{-x^2/2},
    & x\ge 1.
    \end{cases}
\end{align}
By construction, this function satisfies, for all \(x\ge0\),
\begin{align}
    f(x)
    \le
    \frac{2}{\sqrt{2\pi}}
    \frac{x}{x^2+1}
    \exp\!\left(-\frac{x^2}{2}\right).
    \label{fxupper}
\end{align}
Combining Eqs.~\eqref{eq:two_sided_BE}--\eqref{fxupper}, we obtain
\begin{align}
    \Pr\!\left[
    \left|
    \frac{
    \sqrt{M}\,
    \left(
    \vb*{e}^{\mathrm{T}}_{a}
    \mathbf{F}^{-1}_{\vb*{\theta}}
    \vb*{S}_{\vb*{\theta}}
    \right)
    }{
    \sigma_{\vb*{\theta},a}
    }
    \right|
    \le x
    \right]
    \le
    1-f(x)
    +
    \frac{2C\rho}{\sigma_{\vb*{\theta},a}^3\sqrt{M}}.
    \label{milssapli}
\end{align}
Applying Eq.~\eqref{milssapli} to Eq.~\eqref{lowerinequfir}, we obtain
\begin{align}
    \Pr\!\left[
    \left\{
    \| \vb*{\Delta}^{\mathrm{ML}}\|_{\infty}
    \le \epsilon
    \right\}
    \right]
    \le
    1-f(x)
    +
    \frac{2C\rho}{\sigma_{\vb*{\theta},a}^3\sqrt{M}}
    +
    \delta.
    \label{lowerinequfir2222}
\end{align}
Therefore, for the guarantee
\begin{align}
    \Pr\!\left[
    \left\{
    \| \vb*{\Delta}^{\mathrm{ML}}\|_{\infty}
    \le \epsilon
    \right\}
    \right]
    \ge
    1-\delta
\end{align}
to hold, Eq.~\eqref{lowerinequfir2222} necessarily requires
\begin{align}
    f(x)
    \le
    2\delta
    +
    \frac{\eta}{\sqrt{M}}
    =:
    \delta',
    \qquad
    \eta
    :=
    \frac{2C\rho}{\sigma_{\vb*{\theta},a}^3}.
    \label{eq:necessary_condition_f}
\end{align}
Following the argument in Sec.~\ref{appensubsec:proofsketchlower}, let
\(M_1(\epsilon,\delta,d)\) denote the minimal number of repetitions such that
\begin{align}
    \forall\,M\ge M_1(\epsilon,\delta,d):
    \qquad
    f(x)\le\delta'.
    \label{minimalsample1}
\end{align}
Because Eq.~\eqref{minimalsample1} is a necessary condition for the desired success probability, the corresponding minimal thresholds satisfy
\begin{align}
    M_1(\epsilon,\delta,d)
    \le
    M_0(\epsilon,\delta,d).
    \label{eq:M1_leq_M0}
\end{align}

We now determine the relevant branch of \(f(x)\). The function \(f(x)\) is increasing on
\(0\le x\le1\), decreasing on \(x\ge1\), and attains its maximum at
\begin{align}
    f(1)
    =
    \frac{1}{\sqrt{2\pi e}}.
\end{align}
Under the assumption
\begin{align}
    0<\delta<\frac{1}{\sqrt{8\pi e}},
\end{align}
we have \(2\delta<1/\sqrt{2\pi e}\). Therefore, \(\delta'\le f(1)\) is guaranteed whenever
\begin{align}
    \sqrt{M}
    \ge
    \frac{\eta}{
    \frac{1}{\sqrt{2\pi e}}-2\delta
    }.
    \label{eq:BE_branch_condition}
\end{align}

{We first show that the branch \(0\le x\le1\) cannot determine the eventual threshold
\(M_1(\epsilon,\delta,d)\) in Eq.~\eqref{minimalsample1}. Importantly, our goal is not to identify an arbitrary solution of
\(f(x)=\delta'\).
Rather, we seek the final crossing point beyond which the inequality
\begin{align}
    f(x)\le \delta'
\end{align}
remains satisfied for all larger \(M\).
On this branch,
\begin{align}
    f(x)
    =
    \frac{1}{\sqrt{2\pi}}x e^{-x^2/2},
\end{align}
so that
\begin{align}
    f'(x)
    =
    \frac{1}{\sqrt{2\pi}}
    (1-x^2)e^{-x^2/2}
    \ge0,
    \quad
    0\le x\le1.
\end{align}
Meanwhile,
\begin{align}
    \delta'
    =
    2\delta+\frac{\eta}{\sqrt{M}}
\end{align}
is monotonically decreasing in \(M\).

Using
\begin{align}
    \tau_+
    =
    \tau_{0+}
    +
    \frac{D}{\sqrt{M}},
    \quad
    x
    =
    \frac{\sqrt{M}\tau_+}{\sigma_{\vb*{\theta},a}}
    =
    \frac{\tau_{0+}\sqrt{M}+D}{\sigma_{\vb*{\theta},a}},
    \label{eq:x_tauplus_relation}
\end{align}
we further obtain
\begin{align}
    \frac{dx}{dM}
    =
    \frac{\tau_{0+}}{2\sigma_{\vb*{\theta},a}\sqrt{M}}
    >
    0.
\end{align}
Therefore, as long as \(0\le x\le1\),
\begin{align}
    \frac{d}{dM}
    \left[
    f(x)-\delta'
    \right]
    =
    f'(x)\frac{dx}{dM}
    +
    \frac{\eta}{2M^{3/2}}
    >
    0.
    \label{eq:left_branch_increasing_gap}
\end{align}
Hence, throughout the branch \(0\le x\le1\), the quantity
\(f(x)-\delta'\) increases with \(M\).
Consequently, if the equality \(f(x)=\delta'\) is attained within this branch, then for sufficiently small increases in \(M\) while remaining in the same branch, one necessarily has
\(f(x)>\delta'\).
Such a crossing therefore cannot correspond to the threshold beyond which
\(f(x)\le\delta'\) holds for all larger \(M\).
Instead, it can only characterize the boundary of a small-\(M\) regime.}

{Hence, the eventual threshold \(M_1(\epsilon,\delta,d)\) must be determined by the decreasing branch \(x\ge1\).
Using Eq.~\eqref{eq:x_tauplus_relation}, the condition \(x\ge1\) is equivalent to
\begin{align}
    x\ge1
    \quad
    \Longleftrightarrow
    \quad
    \tau_{0+}\sqrt{M}+D
    \ge
    \sigma_{\vb*{\theta},a}.
\end{align}
Rearranging this condition yields
\begin{align}
    M
    \ge
    \left(
    \frac{\max\{\sigma_{\vb*{\theta},a}-D,0\}}{\tau_{0+}}
    \right)^2.
    \label{rangeofM}
\end{align}
In particular, when \(D\ge\sigma_{\vb*{\theta},a}\), the inequality \(x\ge1\) holds for all \(M\ge0\).}

\subsection{Lambert \(W\)-function}
\label{appensubsec:lambertlower}
On the branch \(x\ge1\), the necessary condition \(f(x)\le\delta'\) takes the form
\begin{align}
    \frac{1}{\sqrt{2\pi}}
    \frac{1}{x}
    e^{-x^2/2}
    \le
    \delta'
    \quad
    \Longleftrightarrow
    \quad
    x^2 e^{x^2}
    \ge
    \frac{1}{2\pi\delta'^2}.
    \label{equationlambert}
\end{align}
Applying the principal branch of the Lambert \(W\)-function to Eq.~\eqref{equationlambert}, we obtain
\begin{align}
    x^2
    \ge
    W_0\!\left(\delta'^{-2}/2\pi\right).
\end{align}
Using
\begin{align}
    x^2
    =
    \frac{M\tau_+^2}{\sigma_{\vb*{\theta},a}^2},
\end{align}
this yields the implicit necessary condition
\begin{align}
    M
    \ge
    W_0\!\left(\delta'^{-2}/2\pi\right)
    \tau_+^{-2}
    \sigma_{\vb*{\theta},a}^2 .
    \label{lambertalmostlast}
\end{align}
The above condition is implicit because both \(\delta'\) and \(\tau_+\) depend on \(M\).
To obtain a tractable explicit lower bound, we now use the concavity of the Lambert \(W_0\)-function.

\subsection{Further lower bound using concavity of Lambert \(W\)-function}
\label{appensubsec:concavity}
{For \(z\ge0\), the principal branch \(W_0(z)\) is concave and satisfies \(W_0(0)=0\).
Since
\begin{align}
    \delta'
    =
    2\delta+\frac{\eta}{\sqrt{M}}
    \ge
    2\delta,
\end{align}
we have
\begin{align}
    W_0\!\left(\delta'^{-2}/2\pi\right)
    &=
    W_0\!\left[
    \frac{4\delta^2}{\delta'^2}
    \frac{\delta^{-2}}{8\pi}
    \right]
    \notag\\
    &\ge
    \frac{4\delta^2}{\delta'^2}
    W_0\!\left(\delta^{-2}/8\pi\right).
    \label{eq:concavity_W0}
\end{align}
Consequently, Eq.~\eqref{lambertalmostlast} implies the weaker necessary condition
\begin{align}
    M
    \ge
    \frac{4\delta^2}{\delta'^2}
    W_0\!\left(\delta^{-2}/8\pi\right)
    \tau_+^{-2}
    \sigma_{\vb*{\theta},a}^2 .
    \label{minimalsample2}
\end{align}
Since all quantities are nonnegative, Eq.~\eqref{minimalsample2} is equivalent to
\begin{align}
    \sqrt{M}\,\delta'\tau_+
    \ge
    2\delta
    \sqrt{W_0\!\left(\delta^{-2}/8\pi\right)}
    \sigma_{\vb*{\theta},a}.
\end{align}
Substituting
\begin{align}
    \delta'
    =
    2\delta+\frac{\eta}{\sqrt{M}},
    \qquad
    \tau_+
    =
    \tau_{0+}
    +
    \frac{D}{\sqrt{M}},
\end{align}
and multiplying both sides by \(\sqrt{M}\), we obtain
\begin{align}
    \left(2\delta\sqrt{M}+\eta\right)
    \left(\tau_{0+}\sqrt{M}+D\right)
    \ge
    2\delta
    \sqrt{W_0\!\left(\delta^{-2}/8\pi\right)}
    \sigma_{\vb*{\theta},a}
    \sqrt{M}.
    \label{eq:quadratic_before}
\end{align}
Dividing Eq.~\eqref{eq:quadratic_before} by \(2\delta\tau_{0+}>0\), we obtain
\begin{align}
    M
    +
    \left(
    \frac{D}{\tau_{0+}}
    +
    \frac{\eta}{2\delta}
    -
    \frac{
    \sqrt{W_0\!\left(\delta^{-2}/8\pi\right)}
    \sigma_{\vb*{\theta},a}
    }{
    \tau_{0+}
    }
    \right)\sqrt{M}
    +
    \frac{\eta D}{2\delta\tau_{0+}}
    \ge
    0.
    \label{eq:quadratic_M}
\end{align}
Let \(y=\sqrt{M}\), and define
\begin{align}
    B
    :=
    -
    \frac{D}{\tau_{0+}}
    -
    \frac{\eta}{2\delta}
    +
    \frac{
    \sqrt{W_0\!\left(\delta^{-2}/8\pi\right)}
    \sigma_{\vb*{\theta},a}
    }{
    \tau_{0+}
    },
    \quad
    \Delta
    :=
    B^2
    -
    \frac{2\eta D}{\delta\tau_{0+}}.
\end{align}
Eq.~\eqref{eq:quadratic_M} is then equivalently expressed as
\begin{align}
    y^2
    -
    By
    +
    \frac{\eta D}{2\delta\tau_{0+}}
    \ge0.
\end{align}
If \(B\le0\) or \(\Delta\le0\), this quadratic inequality does not generate any nontrivial forbidden interval for \(y\ge0\).
If \(B>0\) and \(\Delta>0\), then the quadratic is negative precisely on the interval
\begin{align}
    \frac{B-\sqrt{\Delta}}{2}
    <
    y
    <
    \frac{B+\sqrt{\Delta}}{2}.
\end{align}
Therefore, the necessary condition is violated throughout this interval.
Accordingly, the relevant eventual lower-bound threshold is given by the square of the larger endpoint:
\begin{align}
    M_2
    :=
    \begin{cases}
    \displaystyle
    \left(
    \frac{B+\sqrt{\Delta}}{2}
    \right)^2,
    &
    B>0~\text{and}~\Delta>0,
    \\[1.2em]
    0,
    &
    \text{otherwise}.
    \end{cases}
    \label{eq:M2_definition}
\end{align}

Combining with Eqs. \eqref{rangeofM} and \eqref{eq:BE_branch_condition}, we obtain
\begin{align}
    M_L
    :=
    \max
    \left\{
    M_2,\,
    \left(
    \frac{\max\{\sigma_{\vb*{\theta},a}-D,0\}}{\tau_{0+}}
    \right)^2,\,
    \left(
    \frac{\eta}{
    \frac{1}{\sqrt{2\pi e}}-2\delta
    }
    \right)^2
    \right\}.
    \label{appeneq:lowerfinalform}
\end{align}
Hence, the sample-complexity threshold necessarily satisfies
\begin{align}
    M_0(\epsilon,\delta,d)
    \ge
    M_L .
\end{align}
This completes the derivation of the lower bound for a finite number of measurement outcomes.}

\subsection{Asymptotic limit $\epsilon \to 0$}
\label{appensubsec:small_error_lower}
{We now derive the simplified small-error form stated in
Theorem~\ref{theorem:infinitelower} in the main text. Fix
\(0<\delta< 1/\sqrt{8\pi e}\), \(d<\infty\), and \(\vb*{\theta}\), and consider the small-error limit \(\epsilon\to0\).
From the definitions of \(\tau_{0+}\) and \(D\), we have
\begin{align}
    \tau_{0+}
    =
    \left(
    1+
    \epsilon
    \frac{d\mu_R}{2}
    \left\|
    \mathbf{F}^{-1}_{\vb*{\theta}}
    \right\|_{\mathrm{op}}
    \right)\epsilon
    =
    \epsilon(1+O(\epsilon)),
    \label{eq:tau0_small_epsilon}
\end{align}
and
\begin{align}
    D
    =
    O(\epsilon).
    \label{eq:D_small_epsilon}
\end{align}
In addition, \(\eta\) and \(\sigma_{\vb*{\theta},a}\) are independent of \(\epsilon\) in this limit.

First, Eq. \eqref{eq:BE_branch_condition} satisfies
\begin{align}
    \left(
    \frac{\eta}{
    \frac{1}{\sqrt{2\pi e}}-2\delta
    }
    \right)^2
    =
    O(1),
\end{align}
and is therefore negligible compared with the leading \(\epsilon^{-2}\) scale.
Second, using Eqs.~\eqref{eq:tau0_small_epsilon} and \eqref{eq:D_small_epsilon}, the Eq. \eqref{rangeofM} satisfies
\begin{align}
    \left(
    \frac{\max\{\sigma_{\vb*{\theta},a}-D,0\}}{\tau_{0+}}
    \right)^2
    =
    \max\left\{\frac{\sigma_{\vb*{\theta},a}^2}{\epsilon^2}
    (1+o(1)),0\right\}.
    \label{eq:branch_asymptotic}
\end{align}

Lastly, it remains to evaluate \(M_2\).
From the definition of \(B\),
\begin{align}
    B
    =
    -
    \frac{D}{\tau_{0+}}
    -
    \frac{\eta}{2\delta}
    +
    \frac{
    \sqrt{W_0\!\left(\delta^{-2}/8\pi\right)}
    \sigma_{\vb*{\theta},a}
    }{
    \tau_{0+}
    }.
\end{align}
Since \(D/\tau_{0+}=O(1)\) and \(\eta/(2\delta)=O(1)\), whereas
\begin{align}
    \frac{
    \sqrt{W_0\!\left(\delta^{-2}/8\pi\right)}
    \sigma_{\vb*{\theta},a}
    }{
    \tau_{0+}
    }
    =
    \frac{
    \sqrt{W_0\!\left(\delta^{-2}/8\pi\right)}
    \sigma_{\vb*{\theta},a}
    }{
    \epsilon
    }
    (1+o(1)),
\end{align}
we obtain
\begin{align}
    B
    =
    \frac{
    \sqrt{W_0\!\left(\delta^{-2}/8\pi\right)}
    \sigma_{\vb*{\theta},a}
    }{
    \epsilon
    }
    (1+o(1)).
    \label{eq:B_asymptotic}
\end{align}
In particular, \(B>0\) for sufficiently small \(\epsilon\).
The discriminant is
\begin{align}
    \Delta
    =
    B^2
    -
    \frac{2\eta D}{\delta\tau_{0+}}.
\end{align}
Since \(D/\tau_{0+}=O(1)\), the second term is \(O(1)\), whereas Eq.~\eqref{eq:B_asymptotic} gives
\begin{align}
    B^2
    =
    \frac{
    W_0\!\left(\delta^{-2}/8\pi\right)
    \sigma_{\vb*{\theta},a}^2
    }{
    \epsilon^2
    }
    (1+o(1)).
\end{align}
Therefore,
\begin{align}
    \Delta
    =
    B^2(1+o(1)),
    \qquad
    \sqrt{\Delta}
    =
    B(1+o(1)).
\end{align}
It follows that
\begin{align}
    M_2
    =
    \left(
    \frac{B+\sqrt{\Delta}}{2}
    \right)^2
    =
    \frac{
    W_0\!\left(\delta^{-2}/8\pi\right)
    \sigma_{\vb*{\theta},a}^2
    }{
    \epsilon^2
    }
    (1+o(1)).
    \label{eq:M2_asymptotic}
\end{align}

Finally, the condition
\begin{align}
    0<\delta<\frac{1}{\sqrt{8\pi e}}
\end{align}
implies
\begin{align}
    \frac{\delta^{-2}}{8\pi}>e,
\end{align}
and therefore
\begin{align}
    W_0\!\left(\delta^{-2}/8\pi\right)>1.
\end{align}
Consequently, the \(M_2\) contribution in Eq.~\eqref{eq:M2_asymptotic} asymptotically dominates the branch contribution in Eq.~\eqref{eq:branch_asymptotic}.
Thus, for fixed \(\vb*{\theta}\) and \(a\),
\begin{align}
    M_0(\epsilon,\delta,d)
    \gtrsim
    W_0\!\left(\delta^{-2}/8\pi\right)
    [\mathbf F_{\boldsymbol{\theta}}^{-1}]_{aa}
    \epsilon^{-2}.
\end{align}

We now derive the simplified small-error form stated in
Theorem~\ref{theorem:infinitelower} in the main text. The pointwise asymptotic lower bound obtained above can be equivalently expressed as
\begin{align}
    \liminf_{\epsilon\to0}
    \epsilon^2 M_0(\epsilon,\delta,d)
    \ge
    W_0\!\left({\delta^{-2}}/{8\pi}\right)
    [\mathbf F_{\vb*{\theta}}^{-1}]_{aa}
    \label{eq:appD-pointwise-liminf}
\end{align}
for every \(\vb*{\theta}\in\Theta\) and every \(a\in[d]\), whenever
\(0<\delta<1/\sqrt{8\pi e}\).

Define
\begin{align}
    \mathfrak{F}
    :=
    \sup_{\vb*{\theta}\in\Theta}
    \max_{a\in[d]}
    [\mathbf F_{\vb*{\theta}}^{-1}]_{aa}.
\end{align}
If \(\mathfrak{F}<\infty\), then for every \(\gamma>0\) there exist
\(\vb*{\theta}_\gamma\in\Theta\) and \(a_\gamma\in[d]\) such that
\begin{align}
    [\mathbf F_{\vb*{\theta}_\gamma}^{-1}]_{a_\gamma a_\gamma}
    \ge
    \mathfrak{F}-\gamma .
\end{align}
Applying Eq.~\eqref{eq:appD-pointwise-liminf} to the pair
\((\vb*{\theta}_\gamma,a_\gamma)\) yields
\begin{align}
    \liminf_{\epsilon\to0}
    \epsilon^2 M_0(\epsilon,\delta,d)
    \ge
    W_0\!\left({\delta^{-2}}/{8\pi}\right)
    (\mathfrak{F}-\gamma).
\end{align}
Since \(\gamma>0\) is arbitrary, we conclude that
\begin{align}
    \liminf_{\epsilon\to0}
    \epsilon^2 M_0(\epsilon,\delta,d)
    \ge
    W_0\!\left({\delta^{-2}}/{8\pi}\right)
    \sup_{\vb*{\theta}\in\Theta}
    \max_{a\in[d]}
    [\mathbf F_{\vb*{\theta}}^{-1}]_{aa}.
    \label{eq:appD-main-lower}
\end{align}
This is precisely the simplified small-error lower bound stated in
Theorem~\ref{theorem:infinitelower}.}

\section{Proof of the upper bound for $\ell_2$}\label{appensec:uppertwo}
{\begin{widetext}
Assume that conditions (A1)–(A2) and the standard regularity conditions (R1)–(R5) stated in in Appendix \ref{appen:sraftpr} hold. Then the following theorem applies.
\begin{theorem}
Fix \(d<\infty\). There exists a constant \(\epsilon_{0}>0\) such that, for every \(0<\epsilon<\epsilon_{0}\) and \(0<\delta\le 1\), the following holds. Let \(M_{0}(\epsilon,\delta,d)\) denote the minimal number of repetitions such that
\begin{align}
    \forall\, M\ge M_0(\epsilon,\delta,d):\quad
    \Pr\!\left[
    \| \tilde{\vb*{\theta}}^{\mathrm{ML}}
    -\vb*{\theta} \|_{2}
    \le \epsilon
    \right]
    \ge 1-\delta
    \quad
    \text{for all } \vb*{\theta}\in\Theta .
\end{align}
\(M_{0}\) is then upper bounded as
\begin{align}
M_{0} \le
\sup_{\vb*{\theta}\in\Theta}
\max \left\{
\left(\frac{D}{\tau_{0-}}\right)^{2},
\left(\frac{2d \eta}{\delta}\right)^{2},
\left(
\frac{d\eta}{\delta}
+\frac{D}{2\tau_{0-}}
+\frac{\sigma \sqrt{W_{0}}}{2\tau_{0-}}
+
\sqrt{
\left(
\frac{d\eta}{\delta}
+\frac{D}{2\tau_{0-}}
+\frac{\sigma \sqrt{W_{0}}}{2\tau_{0-}}
\right)^{2}
-\frac{2d\eta}{\delta}\frac{D}{\tau_{0-}}
}
\right)^{2}
\right\},
\label{appeneq:theoremtwoupper}
\end{align}
where
\begin{align}
&W_{0}:=W_{0}(8\pi^{-1}\delta^{-2}d^{2}), \\
&\tau_{0-}:=\left(1-\epsilon\frac{\sqrt{d}\mu_{R}}{2}\|\mathbf{F}^{-1}_{\vb*{\theta}}\|_{\mathrm{op}}\right)\frac{\epsilon}{\sqrt{d}},\\
    &D:=\left(\|\mathbf{F}^{-1}_{\vb*{\theta}}\|_{\mathrm{op}}\,\sqrt{\frac{4V_H}{\delta}}
+\frac{1}{2}\|\mathbf{F}^{-1}_{\vb*{\theta}}\|_{\mathrm{op}}\,\sqrt{\frac{4V_{R}}{\delta}}\,\epsilon\right)\epsilon, \\
&\eta:=  \frac{1}{d}\sum_{a=1}^{d} \frac{2C \rho}{\sigma^{3}_{\vb*{\theta},a}},\\
&\rho := \sup_{\vb*{\theta}\in\Theta}\max_{a\in[d]}\mathbb{E}\!\left[|\mathbf{e}_{a}^{\mathrm{T}}\mathbf{F}^{-1}_{\vb*{\theta}}\ell^{(1)}_{\vb*{\theta}}(x)|^3\right],\\
&\sigma := \sqrt{\sup_{\vb*{\theta} \in \Theta}\max_{a\in [d]}[\mathbf{F}^{-1}_{\vb*{\theta}}]_{aa}}.
\end{align}
In the small-error limit \(\epsilon \to 0\), Eq.~\eqref{appeneq:theoremtwoupper} reduces to
\begin{align}
    M_{0}
    \lesssim
    dW_0(8\pi^{-1}\delta^{-2}d^2)
    \sup_{\vb*{\theta}\in\Theta}
    \max_{a\in[d]}
    [\mathbf F_{\vb*{\theta}}^{-1}]_{aa}
    \epsilon^{-2}.
\end{align}
\end{theorem}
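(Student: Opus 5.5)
The proof will reduce the $\ell_2$-criterion to the $\ell_\infty$ machinery of Appendix~\ref{appensec:upperinfinite}. We use the inclusion $\{\|\vb*{\Delta}^{\mathrm{ML}}\|_\infty\le \epsilon/\sqrt d\}\subseteq\{\|\vb*{\Delta}^{\mathrm{ML}}\|_2\le\epsilon\}$. We then rerun the Brouwer-based argument with the cube $\mathcal{C}^{(\infty)}_{\epsilon/\sqrt d}$. Simply substituting $\epsilon\to\epsilon/\sqrt d$ into Theorem~\ref{appentheorem1} would already yield the small-error form $d\,W_0(8\pi^{-1}\delta^{-2}d^2)\mathfrak{F}\epsilon^{-2}$. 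To obtain the sharper constants $\tau_{0-}$ and $D$ stated above, however, the linear and remainder terms are controlled directly in $\ell_2$.

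First, we keep the fixed-point map $\mathcal{F}$ of Sec.~\ref{appensubsec:taylor} and work on the cube $\mathcal{C}:=\{\|\vb*{\Delta}\|_\infty\le\epsilon/\sqrt d\}$. Every $\vb*{\Delta}\in\mathcal{C}$ satisfies $\|\vb*{\Delta}\|_2\le\epsilon$. On the good event $\{\|\mathbf H_{\vb*\theta}+\mathbf F_{\vb*\theta}\|_{\rm op}\le c_H\}\cap\{R\le c_R\}$, the bounds of Sec.~\ref{appensubsec:acib} give
\[
\|\mathbf F^{-1}_{\vb*\theta}(\mathbf H_{\vb*\theta}+\mathbf F_{\vb*\theta})\vb*\Delta\|_\infty\le\|\mathbf F^{-1}_{\vb*\theta}\|_{\rm op}c_H\epsilon
\]
and
\[
\|\mathbf F^{-1}_{\vb*\theta}\vb*r_{\vb*\theta}(\vb*\Delta)\|_\infty\le\tfrac12\|\mathbf F^{-1}_{\vb*\theta}\|_{\rm op}c_R\epsilon^2 .
\]
The factors $\sqrt d$ and $d$ from the $\ell_\infty$ proof disappear here because $\|\vb*\Delta\|_2\le\epsilon$ replaces $\sqrt d\,\epsilon$. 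Accordingly, we set the margin
\[
\tau_-:=\epsilon/\sqrt d-\|\mathbf F^{-1}_{\vb*\theta}\|_{\rm op}c_H\epsilon-\tfrac12\|\mathbf F^{-1}_{\vb*\theta}\|_{\rm op}c_R\epsilon^2 .
\]
With $\delta_H=\delta_R=\delta/4$ and $c_H,c_R$ as in Eqs.~\eqref{eq:cH_def},~\eqref{eq:cR_def}, this takes the form $\tau_-=\tau_{0-}-DM^{-1/2}$, with exactly the $\tau_{0-}$ and $D$ of the statement. On the event where, in addition, $\|\mathbf F^{-1}_{\vb*\theta}\vb*S_{\vb*\theta}\|_\infty\le\tau_-$, $\mathcal F$ maps $\mathcal C$ into itself. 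Brouwer's theorem together with (A1) then gives $\|\vb*\Delta^{\mathrm{ML}}\|_\infty\le\epsilon/\sqrt d$, and hence $\|\vb*\Delta^{\mathrm{ML}}\|_2\le\epsilon$. The choice of $\epsilon_0$ must now ensure $\epsilon\le r_0$, so that (R1) and (R4) apply, and $\tau_{0-}>\epsilon/(2\sqrt d)$. Taking $\epsilon_0=\min\{r_0,1/(\sqrt d K_F\overline\mu_R)\}$ suffices.

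Second, the remaining steps repeat the $\ell_\infty$ proof essentially verbatim. We apply a union bound over the $d$ coordinates, then the Berry--Esseen bound, Eq.~\eqref{dsfsdf}, with the uniform $\sigma$. The Mills ratio bound and the Lambert-$W$ inversion with concavity follow, and produce $W_0(8\pi^{-1}\delta^{-2}d^2)$. This leads to the same quadratic inequality in $\sqrt M$, Eq.~\eqref{appen:quadratic1}, with the new $\tau_{0-}$ and $D$. Solving it, and requiring $M>(D/\tau_{0-})^2$ and $M\ge(2d\eta/\delta)^2$, gives the three-term maximum, and the supremum over $\vb*\theta$ follows by uniformity.

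Third, for the asymptotics we note that $\tau_{0-}=(\epsilon/\sqrt d)(1+O(\epsilon))$ and $D=O(\epsilon)$ uniformly in $\vb*\theta$. The first two terms of the maximum are therefore $O(1)$, and the third is $\sigma^2W_0\,d\,\epsilon^{-2}(1+O(\epsilon))$, which is the claimed bound.

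The main obstacle is the bookkeeping that makes the margin $\tau_{0-}$ uniformly positive and of order $\epsilon/\sqrt d$, while keeping the Hessian and remainder errors on the $\ell_2$ scale $\epsilon$ rather than the cube scale. Mixing the two norms, namely the cube for Brouwer and the $\ell_2$ ball for the error terms, is the one genuinely new ingredient. All the probabilistic estimates carry over from the $\ell_\infty$ proof.
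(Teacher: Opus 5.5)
Your proposal is correct and is essentially the paper's proof. The paper likewise bounds $\Pr[\|\vb*{\Delta}^{\mathrm{ML}}\|_{2}\ge\epsilon]$ by $\Pr[\|\vb*{\Delta}^{\mathrm{ML}}\|_{\infty}\ge\epsilon/\sqrt{d}]$, reruns the Brouwer argument with exactly your margin $\tau_{-}=\epsilon/\sqrt{d}-\|\mathbf{F}^{-1}_{\vb*{\theta}}\|_{\mathrm{op}}c_{H}\epsilon-\tfrac{1}{2}\|\mathbf{F}^{-1}_{\vb*{\theta}}\|_{\mathrm{op}}c_{R}\epsilon^{2}$, and repeats the union-bound, Berry--Esseen, Mills-ratio and Lambert-$W_0$ steps of the $\ell_\infty$ case verbatim. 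One correction to your commentary: this $\tau_{-}$, and hence the stated $\tau_{0-}$, $D$ and your $\epsilon_{0}$, is exactly what black-box substitution $\epsilon\to\epsilon/\sqrt{d}$ into Theorem~\ref{appentheorem1} produces (because $\sqrt{d}\cdot\epsilon/\sqrt{d}=\epsilon$ and $d\cdot\epsilon^{2}/d=\epsilon^{2}$), so controlling the error terms ``directly in $\ell_2$'' yields neither sharper constants nor a new ingredient---the $\ell_\infty$ proof already combines the cube with the bound $\|\vb*{\Delta}\|_{2}\le\sqrt{d}\,\epsilon$.
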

\end{widetext}
We derive the upper bound for \(\ell_2\)-distance-based learning by reducing the problem to the previously established upper bound for \(\ell_\infty\)-distance-based learning. The key observation is the standard norm inequality: for any \(\vb*{x}\in\mathbb{R}^d\),
\begin{align}
    \|\vb*{x}\|_2 \le \sqrt{d}\,\|\vb*{x}\|_\infty .
\end{align}
Applying this inequality to the MLE estimation error vector \(\vb*{\Delta}^{\mathrm{ML}}\), we obtain
\begin{widetext}
\begin{align}
    \begin{split}
    \Pr[\|\vb*{\Delta}^{\mathrm{ML}}\|_{2} \geq \epsilon] 
    &\le \Pr\!\left[\|\vb*{\Delta}^{\mathrm{ML}}\|_{\infty} \geq \frac{\epsilon}{\sqrt{d}}\right]
    \leq \Pr[\mathcal{E}^{\mathsf{c}}]\\
    &=
    \Pr\bigg[
    \left\{\|\mathbf{F}^{-1}_{\vb*{\theta}}\vb*{S}_{\vb*{\theta}}\|_{\infty}\ge \tau_{-}\right\}
    \cup
    \left\{\|\mathbf{H}_{\vb*{\theta}}+\mathbf{F}_{\vb*{\theta}}\|_{\mathrm{op}}\ge c_H\right\}
    \cup
    \left\{R \ge c_R\right\}
    \bigg]\\
    &\le
    \Pr\!\left[
    \left\{\|\mathbf{F}^{-1}_{\vb*{\theta}}\vb*{S}_{\vb*{\theta}}\|_{\infty}\ge \tau_{-}\right\}
    \right]
    +
    \delta_{H}
    +
    \delta_{R},
    \end{split}
\label{mlboundupper2}
\end{align}
\end{widetext}}
where the inequalities follow directly from Eq.~\eqref{mlboundupper}.
Compared with the \(\ell_\infty\) case, the definitions of the threshold \(\tau_{-}\) and the good event \(\mathcal{E}\) are modified to incorporate the norm-conversion factor \(d^{1/2}\). Specifically, we define
\begin{align}
\tau_{-}
:= \frac{\epsilon}{\sqrt{d}}
- \|\mathbf{F}^{-1}_{\vb*{\theta}}\|_{\mathrm{op}}\, c_H \, \epsilon
- \frac{1}{2}\|\mathbf{F}^{-1}_{\vb*{\theta}}\|_{\mathrm{op}}\, c_R \,  \epsilon^2 ,
\label{eq:tau_def2}
\end{align}
and
\begin{align}
\begin{split}
\mathcal{E}
:= &\left\{\|\mathbf{F}^{-1}_{\vb*{\theta}} \vb*{S}_{\vb*{\theta}}\|_\infty \le \tau_{-}\right\}
\cap
\left\{\|\mathbf{H}_{\vb*{\theta}} + \mathbf{F}_{\vb*{\theta}}\|_{\mathrm{op}} \le c_H \right\}
\\
&\cap
\left\{ R \le c_R \right\}.
\end{split}
\end{align}
The remainder of the proof proceeds identically to the \(\ell_\infty\)-based case.

\section{Proof of the lower bound for $\ell_2$}\label{appensec:lowertwo}
\begin{widetext}
Assume that conditions (A1)–(A2) and the standard regularity conditions (R1)–(R5) stated in in Appendix \ref{appen:sraftpr} hold. Then the following theorem applies.
{\begin{theorem}\label{appentheorem4}
For $0<\epsilon$, $0<\delta < 1/\sqrt{8\pi e}$ and $d<\infty$, let
$M_0(\epsilon,\delta,d)$ denote the minimal number of repetitions such that
\begin{align}
    \forall M\ge M_0(\epsilon,\delta,d):
    \quad
    \Pr\!\left[
    \|\tilde{\boldsymbol{\theta}}^{\rm ML}
    -\boldsymbol{\vartheta}\|_2
    \le \epsilon
    \right]
    \ge 1-\delta  \quad
    \text{for all } \vb*{\vartheta}\in\Theta .
\end{align}
\(M_{0}\) is then lower bounded as
\begin{align}
    M_{0}
    \ge
    \max
    \left\{
    M_2(\vb*{\theta}),
    \left(
    \frac{
    \max\{
    \sigma_{\vb*{\theta}}-D,0\}
    }{
    \tau_{0+}
    }
    \right)^2,
    \left(
    \frac{
    \eta
    }{
    \frac{1}{\sqrt{2\pi e}}-2\delta
    }
    \right)^2
    \right\},
    \label{appeneq:twolower}
\end{align}
for any $\boldsymbol{\theta}\in\Theta$.
Here
\begin{align}
    M_2(\vb*{\theta})
    :=
    \begin{cases}
    \displaystyle
    \left(
    \frac{
    B_{\vb*{\theta}}
    +
    \sqrt{\Delta_{\vb*{\theta}}}
    }{2}
    \right)^2,
    &
    B_{\vb*{\theta}}>0
    \ \text{and}\
    \Delta_{\vb*{\theta}}>0,
    \\[1.2em]
    0,
    &
    \text{otherwise}.
    \end{cases}
\end{align}
and
\begin{align}
    B_{\vb*{\theta}}
    &:=
    -
    \frac{D}{\tau_{0+}}
    -
    \frac{\eta}{2\delta}
    +
    \frac{
    \sqrt{W_0\!\left(\delta^{-2}/8\pi\right)}
    \sigma_{\vb*{\theta}}
    }{
    \tau_{0+}
    },
    \\
    \Delta_{\vb*{\theta}}
    &:=
    B_{\vb*{\theta}}^{2}
    -
    \frac{2\eta D}{\delta\tau_{0+}},
    \\
    \tau_{0,+}
    &:=
    \left(
    1+
    \epsilon
    \frac{d\mu_R}{2}
    \|\mathbf{F}^{-1}_{\vb*{\theta}}\|_{\rm op}
    \right)\epsilon,
    \\
    D
    &:=
    \left(
    \sqrt{\frac{2V_H}{\delta}}\sqrt d
    +
    \frac{1}{2}
    \sqrt{\frac{2V_R}{\delta}}\,d\epsilon
    \right)
    \|\mathbf{F}^{-1}_{\vb*{\theta}}\|_{\rm op}\epsilon,
    \\
    \eta
    &:=
    \frac{2C\rho_{\vb*{\theta}}}{\sigma_{\boldsymbol{\theta}}^3},
    \\
    \sigma_{\boldsymbol{\theta}}
    &:=
     \sqrt{\mu_{\max}(\mathbf F_{\vb*{\theta}}^{-1})}.
    \end{align}
    Here
\begin{align}
    \rho_{\vb*{\theta}}
    :=
    \mathbb E\left[
    \left|
    u_{\vb*{\theta}}^{T}
    \mathbf F_{\vb*{\theta}}^{-1}
    \ell_{\vb*{\theta}}^{(1)}(X)
    \right|^3
    \right],
\end{align}
where $u_{\vb*{\theta}}$ is a unit eigenvector of
$\mathbf F_{\vb*{\theta}}^{-1}$ corresponding to
$\mu_{\max}(\mathbf F_{\vb*{\theta}}^{-1})$.
In the small-error limit \(\epsilon\to0\), this lower bound reduces to
\begin{align}
    M_0(\epsilon,\delta,d)
    \gtrsim
    W_0\!\left(\delta^{-2}/8\pi\right)
    \sup_{\vb*{\theta}\in \Theta}
   \mu_{\max}\!\left(\mathbf F_{\vb*{\theta}}^{-1}\right)
    \epsilon^{-2}.
\end{align}
\end{theorem}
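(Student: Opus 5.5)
The plan is to mirror the proof of the $\ell_\infty$ lower bound in Appendix~\ref{appensec:lowerinfinite}. The one change is that the coordinate direction $\vb*{e}_a$ is replaced by the unit eigenvector $u_{\vb*{\theta}}$ of $\mathbf{F}^{-1}_{\vb*{\theta}}$ belonging to $\mu_{\max}(\mathbf{F}^{-1}_{\vb*{\theta}})$.

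\textbf{Step 1: reduce to a scalar score event.} Fix $\vb*{\theta}\in\Theta$ and start from the fixed-point identity of Sec.~\ref{appensubsec:taysf}, now projected onto $u_{\vb*{\theta}}$:
\begin{align}
u_{\vb*{\theta}}^{\mathrm T}\mathbf{F}^{-1}_{\vb*{\theta}}\vb*{S}_{\vb*{\theta}}
= u_{\vb*{\theta}}^{\mathrm T}\vb*{\Delta}^{\mathrm{ML}}
-u_{\vb*{\theta}}^{\mathrm T}\mathbf{F}^{-1}_{\vb*{\theta}}(\mathbf{H}_{\vb*{\theta}}+\mathbf{F}_{\vb*{\theta}})\vb*{\Delta}^{\mathrm{ML}}
-u_{\vb*{\theta}}^{\mathrm T}\mathbf{F}^{-1}_{\vb*{\theta}}\vb*{r}_{\vb*{\theta}}(\vb*{\Delta}^{\mathrm{ML}}).
\end{align}
On the accuracy event $\{\|\vb*{\Delta}^{\mathrm{ML}}\|_2\le\epsilon\}$, Cauchy--Schwarz gives $|u_{\vb*{\theta}}^{\mathrm T}\vb*{\Delta}^{\mathrm{ML}}|\le\epsilon$ and $\|\vb*{\Delta}^{\mathrm{ML}}\|_2\le\epsilon$. (The stated $\tau_{0+}$ and $D$ keep the $\sqrt d$ and $d$ factors, which is harmless since $\|\cdot\|_2\le\epsilon$ is stronger than what those factors allow.) Intersect with the good event $\mathcal{G}$ built from the Markov and Chebyshev bounds on the Hessian fluctuation and the third-derivative envelope, taking $\delta_H=\delta_R=\delta/2$. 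As in Eq.~\eqref{lowerinequfir} this yields
\begin{align}
\Pr[\|\vb*{\Delta}^{\mathrm{ML}}\|_2\le\epsilon]\le \Pr\!\left[|u_{\vb*{\theta}}^{\mathrm T}\mathbf{F}^{-1}_{\vb*{\theta}}\vb*{S}_{\vb*{\theta}}|\le\tau_+\right]+\delta,\qquad \tau_+=\tau_{0+}+DM^{-1/2}.
\end{align}

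\textbf{Step 2: Berry--Esseen and the threshold on $M$.} The scalar $u_{\vb*{\theta}}^{\mathrm T}\mathbf{F}^{-1}_{\vb*{\theta}}\vb*{S}_{\vb*{\theta}}$ is an average of i.i.d.\ mean-zero terms with variance
\begin{align}
\frac{u_{\vb*{\theta}}^{\mathrm T}\mathbf{F}^{-1}_{\vb*{\theta}}\mathbf{F}_{\vb*{\theta}}\mathbf{F}^{-1}_{\vb*{\theta}}u_{\vb*{\theta}}}{M}=\frac{\mu_{\max}(\mathbf{F}^{-1}_{\vb*{\theta}})}{M}=\frac{\sigma_{\vb*{\theta}}^2}{M}.
\end{align}
Its third absolute moment $\rho_{\vb*{\theta}}$ is finite by (R5). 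Apply the two-sided Berry--Esseen bound, then the Mills-ratio lower bound through the auxiliary function $f$. The necessary condition becomes $f(x)\le 2\delta+\eta/\sqrt M$ with $x=\sqrt M\tau_+/\sigma_{\vb*{\theta}}$. From here the branch analysis, the Lambert-$W$ inversion, the concavity step $W_0(\delta'^{-2}/2\pi)\ge (4\delta^2/\delta'^2)W_0(\delta^{-2}/8\pi)$ and the quadratic inequality in $\sqrt M$ carry over verbatim, with $\sigma_{\vb*{\theta},a}$ replaced by $\sigma_{\vb*{\theta}}$. This gives the finite-$\epsilon$ bound~\eqref{appeneq:twolower}.

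\textbf{Step 3: small-error limit.} Since $\tau_{0+}=\epsilon(1+O(\epsilon))$, $D=O(\epsilon)$, and $\eta$ is $O(1)$, the root gives $M_2=W_0(\delta^{-2}/8\pi)\sigma_{\vb*{\theta}}^2\epsilon^{-2}(1+o(1))$, and this dominates the other two terms because $W_0(\delta^{-2}/8\pi)>1$. The pointwise liminf bound holds for every $\vb*{\theta}$, so taking a near-maximizing sequence $\vb*{\theta}_\gamma$ upgrades it to $\sup_{\vb*{\theta}}\mu_{\max}(\mathbf{F}^{-1}_{\vb*{\theta}})$, exactly as in Eq.~\eqref{eq:appD-main-lower}.

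\textbf{Main obstacle.} The delicate point is uniformity of the Berry--Esseen constant. (R5) controls the standardized third moment uniformly over unit directions, and since the eigenvector direction $u_{\vb*{\theta}}$ is one such direction, $\eta$ stays bounded. I would invoke (R5) explicitly here, to make sure the $O(1)$ terms stay subleading as $\epsilon\to0$.
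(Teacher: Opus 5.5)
Your proposal is correct and follows the paper's proof essentially step for step: you project the fixed-point identity onto the top eigenvector $u_{\vb*{\theta}}$ of $\mathbf{F}^{-1}_{\vb*{\theta}}$, intersect with the Hessian and third-derivative good event, and then rerun the Berry--Esseen, Mills-ratio, Lambert-$W$ and quadratic-in-$\sqrt{M}$ argument of the $\ell_\infty$ lower bound with $\sigma_{\vb*{\theta},a}$ replaced by $\sigma_{\vb*{\theta}}=\sqrt{\mu_{\max}(\mathbf{F}^{-1}_{\vb*{\theta}})}$. The one cosmetic difference is that the paper's proof uses a tighter threshold $\epsilon'$ without the $\sqrt d$ and $d$ factors, whereas you bound $\|\vb*{\Delta}^{\mathrm{ML}}\|_2\le\epsilon\le\sqrt d\,\epsilon$ and so reproduce the theorem's stated $\tau_{0+}$ and $D$ exactly; this is valid because enlarging the threshold only weakens the necessary condition.
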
}
\end{widetext}
We begin with the expansion
\begin{align}
\vb*{\Delta}^{\mathrm{ML}}
=
\mathbf{F}^{-1}_{\vb*{\theta}}\vb*{S}_{\vb*{\theta}}
+\mathbf{F}^{-1}_{\vb*{\theta}}\!\left(\mathbf{H}_{\vb*{\theta}}+\mathbf{F}_{\vb*{\theta}}\right)\vb*{\Delta}^{\mathrm{ML}}
+\mathbf{F}^{-1}_{\vb*{\theta}}\vb*{r}_{\vb*{\theta}}(\vb*{\Delta}^{\mathrm{ML}}).
\end{align}
Let \(\mu_{\max}(\mathbf{F}^{-1}_{\vb*{\theta}})\) denote the largest eigenvalue of
\(\mathbf{F}^{-1}_{\vb*{\theta}}\), and let
\(\vb*{u}_{\vb*{\theta}} \in \mathbb{R}^d\) be a corresponding unit eigenvector, namely,
\begin{align}
\mathbf{F}^{-1}_{\vb*{\theta}}\vb*{u}_{\vb*{\theta}}
=
\mu_{\max}(\mathbf{F}^{-1}_{\vb*{\theta}})\vb*{u}_{\vb*{\theta}},
\quad
\|\vb*{u}_{\vb*{\theta}}\|_2 = 1 .
\end{align}
Taking the inner product of the above decomposition with
\(\vb*{u}_{\vb*{\theta}}\) yields
\begin{align}
\begin{split}
&\vb*{u}^{\mathrm{T}}_{\vb*{\theta}}\mathbf{F}^{-1}_{\vb*{\theta}}\vb*{S}_{\vb*{\theta}}
\\
&=
\vb*{u}^{\mathrm{T}}_{\vb*{\theta}}\vb*{\Delta}^{\mathrm{ML}}
-
\vb*{u}^{\mathrm{T}}_{\vb*{\theta}}
\mathbf{F}^{-1}_{\vb*{\theta}}
\!\left(\mathbf{H}_{\vb*{\theta}}+\mathbf{F}_{\vb*{\theta}}\right)
\vb*{\Delta}^{\mathrm{ML}}
-
\vb*{u}^{\mathrm{T}}_{\vb*{\theta}}
\mathbf{F}^{-1}_{\vb*{\theta}}
\vb*{r}_{\vb*{\theta}}(\vb*{\Delta}^{\mathrm{ML}}).
\end{split}
\end{align}
Applying the triangle inequality and standard operator-norm bounds, we obtain
\begin{widetext}
\begin{align}
\begin{split}
\abs{\vb*{u}^{\mathrm{T}}_{\vb*{\theta}}\mathbf{F}^{-1}_{\vb*{\theta}}\vb*{S}_{\vb*{\theta}}}
&\le
\abs{\vb*{u}^{\mathrm{T}}_{\vb*{\theta}}\vb*{\Delta}^{\mathrm{ML}}}
+
\abs{\vb*{u}^{\mathrm{T}}_{\vb*{\theta}}\mathbf{F}^{-1}_{\vb*{\theta}}\!\left(\mathbf{H}_{\vb*{\theta}}+\mathbf{F}_{\vb*{\theta}}\right)\vb*{\Delta}^{\mathrm{ML}}}
+
\abs{\vb*{u}^{\mathrm{T}}_{\vb*{\theta}}\mathbf{F}^{-1}_{\vb*{\theta}}\vb*{r}_{\vb*{\theta}}(\vb*{\Delta}^{\mathrm{ML}})} \\
&\le
\|\vb*{\Delta}^{\mathrm{ML}}\|_2
+
\|\mathbf{F}^{-1}_{\vb*{\theta}}\|_{\mathrm{op}}
\,
\|\mathbf{H}_{\vb*{\theta}}+\mathbf{F}_{\vb*{\theta}}\|_{\mathrm{op}}
\,
\|\vb*{\Delta}^{\mathrm{ML}}\|_2
+
\|\mathbf{F}^{-1}_{\vb*{\theta}}\|_{\mathrm{op}}
\,
\|\vb*{r}_{\vb*{\theta}}(\vb*{\Delta}^{\mathrm{ML}})\|_2 .
\end{split}
\label{lowertt2}
\end{align}
\end{widetext}
Consequently, we have
\begin{align}
\Pr\!\left[\|\vb*{\Delta}^{\mathrm{ML}}\|_2 \le \epsilon\right]
\le
\Pr\!\left[\{\|\vb*{\Delta}^{\mathrm{ML}}\|_2 \le \epsilon\} \cap \mathcal{G}\right]
+ \delta_H + \delta_R ,
\label{boundlowertt2}
\end{align}
where the good event \(\mathcal{G}\) is defined as
\begin{align}
\mathcal{G}
:=
\left\{\|\mathbf{H}_{\vb*{\theta}} + \mathbf{F}_{\vb*{\theta}}\|_{\mathrm{op}} \le c_H\right\}
\cap
\left\{ R \le c_R \right\}.
\end{align}
On the event
\(\{\|\vb*{\Delta}^{\mathrm{ML}}\|_2 \le \epsilon\} \cap \mathcal{G}\),
Eq.~\eqref{lowertt2} implies
\begin{align}
\begin{split}
&\abs{\vb*{u}^{\mathrm{T}}_{\vb*{\theta}}\mathbf{F}^{-1}_{\vb*{\theta}}\vb*{S}_{\vb*{\theta}}}
\\
&\le
\epsilon
+
\|\mathbf{F}^{-1}_{\vb*{\theta}}\|_{\mathrm{op}}\, c_H \epsilon
+
\frac{1}{2}\|\mathbf{F}^{-1}_{\vb*{\theta}}\|_{\mathrm{op}}\, c_R \epsilon^2
=: \epsilon'(\epsilon,d,M) .
\end{split}
\label{lowerttepsilon2}
\end{align}
Combining Eqs.~\eqref{boundlowertt2} and \eqref{lowerttepsilon2}, we obtain
\begin{align}
\Pr\!\left[\|\vb*{\Delta}^{\mathrm{ML}}\|_2 \le \epsilon\right]
\le
\Pr\!\left[
\abs{\vb*{u}^{\mathrm{T}}_{\vb*{\theta}}\mathbf{F}^{-1}_{\vb*{\theta}}\vb*{S}_{\vb*{\theta}}}
\le \epsilon'
\right]
+ \delta_H + \delta_R .
\label{lowerinequfir2}
\end{align}
Finally, the projected score has mean zero and variance
\begin{align}
\mathbb{E}\!\left[\vb*{u}^{\mathrm{T}}_{\vb*{\theta}}\mathbf{F}^{-1}_{\vb*{\theta}}\vb*{S}_{\vb*{\theta}}\right]
= 0,
\quad
\mathbb{V}\!\left[\vb*{u}^{\mathrm{T}}_{\vb*{\theta}}\mathbf{F}^{-1}_{\vb*{\theta}}\vb*{S}_{\vb*{\theta}}\right]
=
\frac{\mu_{\max}(\mathbf{F}^{-1}_{\vb*{\theta}})}{M}
= \frac{\sigma^2_{\vb*{\theta}}}{M}.
\end{align}
The remainder of the proof proceeds identically to the \(\ell_\infty\)-based case, with the one-dimensional projection
\(\vb*{u}^{\mathrm{T}}_{\vb*{\theta}}\mathbf{F}^{-1}_{\vb*{\theta}}\vb*{S}_{\vb*{\theta}}\)
playing the role of the coordinate projection used there.

\section{Singular FIM}\label{appensec:singular}
\subsection{Asymptotic unbiased estimator}\label{appensubsec:asunbiae}
We establish asymptotic unbiasedness under mild concentration and moment assumptions.
\begin{theorem}\label{biastheorem}
Assume that there exists $\epsilon_0>0$ such that for any $\delta\in(0,1]$ and any
$\epsilon\in(0,\epsilon_0]$ there exists an integer $M_0=M_0(\delta,\epsilon)$ satisfying, for all
$M\ge M_0$,
\begin{equation}\label{eq:conc_assump}
\Pr\!\left[|\tilde{\theta}(\mathbf{x})-\theta|\le \epsilon\right]\;\ge\;1-\delta .
\end{equation}
Moreover, assume there exists $\eta>0$ such that
\begin{equation}\label{eq:moment_assump}
\sup_{M\ge 1}\mathbb{E}\!\left[|\tilde{\theta}-\theta|^{1+\eta}\right]=:C<\infty .
\end{equation}
Then $\tilde{\theta}$ is asymptotically unbiased, i.e.,
\begin{equation}
\lim_{M\to\infty}\mathbb{E}[\tilde{\theta}]=\theta .
\end{equation}
\end{theorem}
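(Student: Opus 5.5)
The plan is to show that convergence in probability, combined with a uniform moment of order $1+\eta$, forces $\mathbb{E}[\tilde{\theta}-\theta]\to 0$. The underlying mechanism is uniform integrability of $\tilde{\theta}-\theta$. Since $|\mathbb{E}[\tilde{\theta}]-\theta|\le \mathbb{E}|\tilde{\theta}-\theta|$, it suffices to show $\mathbb{E}|\tilde{\theta}-\theta|\to 0$ as $M\to\infty$.

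First, fix arbitrary $\epsilon\in(0,\epsilon_0]$ and $\delta\in(0,1]$, and take $M\ge M_0(\delta,\epsilon)$ as provided by Eq.~\eqref{eq:conc_assump}. Write $Z:=|\tilde{\theta}-\theta|$ and split according to the event $G:=\{Z\le\epsilon\}$:
\begin{align}
\mathbb{E}[Z]=\mathbb{E}[Z\mathbb{1}_{G}]+\mathbb{E}[Z\mathbb{1}_{G^{\mathsf{c}}}]\le \epsilon+\mathbb{E}[Z\mathbb{1}_{G^{\mathsf{c}}}].
\end{align}
Second, control the tail contribution with Hölder's inequality using exponents $p=1+\eta$ and $q=(1+\eta)/\eta$:
\begin{align}
\mathbb{E}[Z\mathbb{1}_{G^{\mathsf{c}}}]\le \left(\mathbb{E}[Z^{1+\eta}]\right)^{\frac{1}{1+\eta}}\Pr[G^{\mathsf{c}}]^{\frac{\eta}{1+\eta}}\le C^{\frac{1}{1+\eta}}\delta^{\frac{\eta}{1+\eta}}.
\end{align}
This step uses the uniform bound in Eq.~\eqref{eq:moment_assump}, which holds for all $M$ and in particular for $M\ge M_0$, together with $\Pr[G^{\mathsf{c}}]\le\delta$.

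Third, combining the two bounds gives $\mathbb{E}[Z]\le \epsilon+C^{1/(1+\eta)}\delta^{\eta/(1+\eta)}$ for all $M\ge M_0(\delta,\epsilon)$. Therefore $\limsup_{M\to\infty}\mathbb{E}[Z]\le \epsilon+C^{1/(1+\eta)}\delta^{\eta/(1+\eta)}$. Because $\epsilon$ and $\delta$ are arbitrary, letting both tend to zero yields $\lim_{M\to\infty}\mathbb{E}[Z]=0$, and hence $\lim_{M\to\infty}\mathbb{E}[\tilde{\theta}]=\theta$.

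The only step that needs care is the tail term. Without the $1+\eta$ moment condition, the contribution of $G^{\mathsf{c}}$ could carry non-vanishing mass even though its probability is small, so the Hölder step is where the assumption is indispensable. Everything else is routine. One could alternatively cite the Vitali convergence theorem, using that a uniform $L^{1+\eta}$ bound implies uniform integrability, but the direct Hölder argument above is self-contained.
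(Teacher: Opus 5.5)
Your proof is correct and follows essentially the same route as the paper: split on the event $\{|\tilde{\theta}-\theta|\le\epsilon\}$, bound the good part by $\epsilon$, control the tail by H\"older with exponents $1+\eta$ and $(1+\eta)/\eta$ to get $C^{1/(1+\eta)}\delta^{\eta/(1+\eta)}$, and then let $\epsilon,\delta\to0$. The only cosmetic difference is that you bound $\mathbb{E}|\tilde{\theta}-\theta|$, which yields $L^1$ convergence and is slightly stronger, whereas the paper bounds the signed bias $|\mathbb{E}[\tilde{\theta}-\theta]|$ directly.
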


\begin{proof}
Define \(\Delta := \tilde{\theta}-\theta\).
For a given \(\epsilon\in(0,\epsilon_0)\), decompose
\begin{equation}\label{eq:delta_decomp}
\mathbb{E}[\Delta]
=
\mathbb{E}\!\big[\Delta\,\mathbf{1}\{|\Delta|\le \epsilon\}\big]
+
\mathbb{E}\!\big[\Delta\,\mathbf{1}\{|\Delta|> \epsilon\}\big],
\end{equation}
where \(\mathbf{1}\{\cdot\}\) denotes the indicator function.
For the first term,
\begin{equation}\label{eq:first_term}
\bigl|\mathbb{E}[\Delta\,\mathbf{1}\{|\Delta|\le \epsilon\}]\bigr|
\le
\mathbb{E}[|\Delta|\,\mathbf{1}\{|\Delta|\le \epsilon\}]
\le
\epsilon .
\end{equation}
For the second term, applying H\"older's inequality with conjugate exponents
\(p=1+\eta\) and \(q=(1+\eta)/\eta\) gives
\begin{align}\label{eq:holder}
\bigl|\mathbb{E}[\Delta\,\mathbf{1}\{|\Delta|> \epsilon\}]\bigr|
&\le
\mathbb{E}\!\left[|\Delta|^{1+\eta}\right]^{\frac{1}{1+\eta}}
\Pr(|\Delta|>\epsilon)^{\frac{\eta}{1+\eta}} \nonumber\\
&\le
C^{\frac{1}{1+\eta}}\Pr(|\Delta|>\epsilon)^{\frac{\eta}{1+\eta}},
\end{align}
where the last inequality uses \eqref{eq:moment_assump}.

For a given \(\delta\in(0,1)\), Eq.~\eqref{eq:conc_assump} implies that, for all
\(M\ge M_0(\delta,\epsilon)\),
\(\Pr(|\Delta|>\epsilon)\le \delta\). Hence, combining
Eqs.~\eqref{eq:delta_decomp}, \eqref{eq:first_term}, and \eqref{eq:holder}, we obtain
\begin{equation}\label{eq:bound_fixed_eps_delta}
\bigl|\mathbb{E}[\Delta]\bigr|
\le
\epsilon
+
C^{\frac{1}{1+\eta}}\delta^{\frac{\eta}{1+\eta}}
\end{equation}
for all \(M\ge M_0(\delta,\epsilon)\).

To conclude, let \(\alpha>0\) be arbitrary. Choose
\(\epsilon\in(0,\epsilon_0)\) such that \(\epsilon\le \alpha/2\), and choose
\(\delta\in(0,1)\) such that
\[
C^{\frac{1}{1+\eta}}\delta^{\frac{\eta}{1+\eta}}\le \alpha/2 .
\]
Then Eq.~\eqref{eq:bound_fixed_eps_delta} implies that, for all
\(M\ge M_0(\delta,\epsilon)\),
\[
\bigl|\mathbb{E}[\Delta]\bigr|\le \alpha.
\]
Therefore, \(\mathbb{E}[\Delta]\to 0\) as \(M\to\infty\). Since
\(\Delta=\tilde{\theta}-\theta\), we obtain
\(\mathbb{E}[\tilde{\theta}]\to\theta\).
\end{proof}

%----------------------------------------------------------------------------------------

\subsection{Biased estimator}\label{appensubsec:biasesti}

By Theorem~\ref{biastheorem}, for any scalar estimator \(\tilde{\theta}\) satisfying
the moment condition \eqref{eq:moment_assump}, failure of asymptotic unbiasedness
necessarily implies failure of the concentration property
\eqref{eq:conc_assump}.
Equivalently, there exist \(\delta\in(0,1]\) and
\(\epsilon\in(0,\epsilon_0]\) such that, for every integer \(M_0\), one can find
\(M\ge M_0\) satisfying
\begin{align}
    \Pr\!\left[|\tilde{\theta}(\vb*{x})-\theta|\le \epsilon\right] < 1-\delta .
\end{align}
In the singular-FIM setting, this observation motivates restricting attention to
\emph{unbiasedly estimable} directions.

\paragraph{Estimable subspace.}
Let
\begin{align}
    \operatorname{supp}(\mathbf{F}_{\vb*{\theta}})
    := \operatorname{Im}(\mathbf{F}_{\vb*{\theta}})
\end{align}
denote the support subspace of the Fisher information matrix (FIM).
It is well known that a scalar functional
\(\vb*{a}^{\mathrm{T}}\vb*{\theta}\)
is unbiasedly estimable if and only if
\begin{align}
    \mathbf{F}_{\vb*{\theta}}\,\mathbf{F}^{+}_{\vb*{\theta}}\,\vb*{a}
    =
    \vb*{a},
\end{align}
or equivalently, if and only if
\(\vb*{a}\in \operatorname{supp}(\mathbf{F}_{\vb*{\theta}})\),
where \(\mathbf{F}^{+}_{\vb*{\theta}}\) denotes the Moore--Penrose pseudoinverse of
\(\mathbf{F}_{\vb*{\theta}}\) \cite{bias-kwon2025}.
In particular, the coordinate parameter \(\theta_{a}\) is unbiasedly estimable
if and only if the corresponding basis vector \(\vb*{e}_{a}\) satisfies
\begin{align}
    \mathbf{F}_{\vb*{\theta}}\,\mathbf{F}^{+}_{\vb*{\theta}}\,\vb*{e}_{a}
    =
    \vb*{e}_{a}.
    \label{unbiasedcondition}
\end{align}
Therefore, in the singular case, the relevant object is not merely a subset of the original coordinate parameters, but rather the estimable subspace
\(\operatorname{supp}(\mathbf{F}_{\vb*{\theta}})\) itself.

Let
\begin{align}
    r := \operatorname{rank}(\mathbf{F}_{\vb*{\theta}}),
    \qquad
    k := d-r .
\end{align}
Since \(\mathbf{F}_{\vb*{\theta}}\) is symmetric and positive semidefinite, we have the orthogonal decomposition
\begin{align}
    \mathbb{R}^d
    =
    \ker(\mathbf{F}_{\vb*{\theta}})
    \oplus
    \operatorname{supp}(\mathbf{F}_{\vb*{\theta}}).
\end{align}

\paragraph{Regularity assumptions.}
We impose the following assumptions on the log-likelihood function
\(\ell_{\vb*{\theta}}(\vb*{x})\):

\begin{enumerate}
    \item[(B1)] \textbf{Profile maximizer and stationary point:}
    For each fixed null-space coordinate, the log-likelihood, viewed as a
    function of the estimable coordinates, admits a unique maximizer in the
    interior of the parameter space \(\Theta\). Moreover, this maximizer is the unique stationary point with respect to the estimable coordinates.

    \item[(B2)] \textbf{Smoothness:}
    The function \(\ell_{\vb*{\theta}}(\vb*{x})\) is three times continuously
    differentiable on \(\Theta\).

    \item[(B3)] \textbf{Constant support subspace:}
    The rank \(r\) is constant throughout \(\Theta\), and the support subspace
    \(\operatorname{supp}(\mathbf{F}_{\vb*{\theta}})\) is independent of
    \(\vb*{\theta}\) over \(\Theta\).
\end{enumerate}

\paragraph{Orthogonal reparameterization.}
Under Assumption (B3), we may choose an orthogonal matrix
\begin{align}
    \mathbf{U}
    :=
    \bigl(
        \mathbf{U}^{(0)}\;\;
        \mathbf{U}^{(1)}
    \bigr)
    \in \mathbb{R}^{d\times d},
\end{align}
where the columns of
\(\mathbf{U}^{(0)}\in\mathbb{R}^{d\times k}\)
form an orthonormal basis of
\(\ker(\mathbf{F}_{\vb*{\theta}})\), and the columns of
\(\mathbf{U}^{(1)}\in\mathbb{R}^{d\times r}\)
form an orthonormal basis of
\(\operatorname{supp}(\mathbf{F}_{\vb*{\theta}})\).
We then introduce the orthogonal reparameterization
\begin{align}
    \vb*{\xi}
    :=
    \mathbf{U}^{\mathrm{T}}\vb*{\theta}
    =
    \begin{pmatrix}
        \vb*{\xi}^{(0)} \\
        \vb*{\xi}^{(1)}
    \end{pmatrix},
    \label{newparameter}
\end{align}
where
\(\vb*{\xi}^{(0)}\in\mathbb{R}^{k}\)
denotes the null-space coordinates, while
\(\vb*{\xi}^{(1)}\in\mathbb{R}^{r}\)
denotes the estimable coordinates.

\begin{lemma}[Null-space directions annihilate the score]\label{lem:null_score}
Assume that the Fisher information matrix is defined by the score outer product
\begin{align}
    \mathbf{F}_{\vb*{\theta}}
    :=
    \mathbb{E}_{\vb*{\theta}}\!\left[
        \vb*{S}_{\vb*{\theta}}(\vb*{X})
        \vb*{S}_{\vb*{\theta}}(\vb*{X})^{\mathrm{T}}
    \right].
\end{align}
where $\vb*{S}_{\vb*{\theta}}(\vb*{x}):= \nabla_{\vb*{\theta}}\ell_{\vb*{\theta}}(\vb*{x})$.
Then for every $\vb*{u}\in \ker(\mathbf{F}_{\vb*{\theta}})$,
\begin{align}
    \vb*{u}^{\mathrm{T}}\vb*{S}_{\vb*{\theta}}(\vb*{X})=0
    \qquad
    p_{\vb*{\theta}}\text{-a.s.}
    \label{eq:null_score_as}
\end{align}
Moreover, on the common support where $p_{\vb*{\theta}}(\vb*{x})>0$,
\begin{align}
    \vb*{u}^{\mathrm{T}}\nabla_{\vb*{\theta}} p_{\vb*{\theta}}(\vb*{x})=0
    \qquad
    p_{\vb*{\theta}}\text{-a.s.}
    \label{eq:null_gradp_as}
\end{align}
\end{lemma}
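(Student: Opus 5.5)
The proof reduces to a quadratic-form computation: a random variable with zero second moment vanishes almost surely. Fix $\vb*{u}\in\ker(\mathbf{F}_{\vb*{\theta}})$. By the score outer-product definition of the FIM and linearity of expectation,
\begin{align}
    0=\vb*{u}^{\mathrm{T}}\mathbf{F}_{\vb*{\theta}}\vb*{u}
    =\mathbb{E}_{\vb*{\theta}}\!\left[\left(\vb*{u}^{\mathrm{T}}\vb*{S}_{\vb*{\theta}}(\vb*{X})\right)^{2}\right].
\end{align}
The integrand $(\vb*{u}^{\mathrm{T}}\vb*{S}_{\vb*{\theta}}(\vb*{X}))^{2}$ is nonnegative, and its expectation under $p_{\vb*{\theta}}$ vanishes. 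It must therefore vanish $p_{\vb*{\theta}}$-almost surely, which gives Eq.~\eqref{eq:null_score_as}.

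In the discrete outcome setting used throughout the paper, the argument is even simpler. The expectation is a finite or countable sum $\sum_{\vb*{x}}p_{\vb*{\theta}}(\vb*{x})(\vb*{u}^{\mathrm{T}}\vb*{S}_{\vb*{\theta}}(\vb*{x}))^{2}$, and every term is nonnegative. Hence each term with $p_{\vb*{\theta}}(\vb*{x})>0$ must be zero, so $\vb*{u}^{\mathrm{T}}\vb*{S}_{\vb*{\theta}}(\vb*{x})=0$ on the support. The only thing to check is that the expectation is well defined, and this follows from finiteness of $\mathbf{F}_{\vb*{\theta}}$, which is implicit in the statement.

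For Eq.~\eqref{eq:null_gradp_as}, I would use the identity $\nabla_{\vb*{\theta}}\ell_{\vb*{\theta}}(\vb*{x})=\nabla_{\vb*{\theta}}p_{\vb*{\theta}}(\vb*{x})/p_{\vb*{\theta}}(\vb*{x})$. This holds wherever $p_{\vb*{\theta}}(\vb*{x})>0$, since the log-likelihood is differentiable there by (B2). Multiplying by $p_{\vb*{\theta}}(\vb*{x})>0$ gives
\begin{align}
    \vb*{u}^{\mathrm{T}}\nabla_{\vb*{\theta}}p_{\vb*{\theta}}(\vb*{x})=p_{\vb*{\theta}}(\vb*{x})\,\vb*{u}^{\mathrm{T}}\vb*{S}_{\vb*{\theta}}(\vb*{x})=0
\end{align}
on the same full-measure set.

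If $\vb*{S}_{\vb*{\theta}}$ is interpreted as the score of $M$ i.i.d.\ outcomes, the same argument applies verbatim. Alternatively, one can note that the $M$-sample FIM is $M$ times the single-outcome FIM, so the two kernels coincide.

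The only delicate point is the measure-theoretic one: zero expectation of a nonnegative variable implies it vanishes almost surely, not pointwise. This is why the conclusion is stated only on the support of $p_{\vb*{\theta}}$. No deeper obstacle is expected.
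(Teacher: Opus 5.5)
Your proposal is correct and follows essentially the same argument as the paper. The paper also writes $0=\vb*{u}^{\mathrm{T}}\mathbf{F}_{\vb*{\theta}}\vb*{u}=\mathbb{E}_{\vb*{\theta}}[(\vb*{u}^{\mathrm{T}}\vb*{S}_{\vb*{\theta}})^{2}]$, concludes from nonnegativity that $\vb*{u}^{\mathrm{T}}\vb*{S}_{\vb*{\theta}}=0$ almost surely, and then multiplies the identity $\nabla_{\vb*{\theta}}p_{\vb*{\theta}}=p_{\vb*{\theta}}\vb*{S}_{\vb*{\theta}}$ by $\vb*{u}^{\mathrm{T}}$ on the support. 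Your additional remarks on the discrete case and on the $M$-sample score are accurate but not needed.
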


\begin{proof}
Let \(\vb*{u}\in \ker(\mathbf{F}_{\vb*{\theta}})\). Then
\begin{align}
    0
    =
    \vb*{u}^{\mathrm{T}}\mathbf{F}_{\vb*{\theta}}\vb*{u}
    &=
    \mathbb{E}_{\vb*{\theta}}\!\left[
        \vb*{u}^{\mathrm{T}}
        \vb*{S}_{\vb*{\theta}}(\vb*{X})
        \vb*{S}_{\vb*{\theta}}(\vb*{X})^{\mathrm{T}}
        \vb*{u}
    \right] \nonumber\\
    &=
    \mathbb{E}_{\vb*{\theta}}\!\left[
        \bigl(\vb*{u}^{\mathrm{T}}\vb*{S}_{\vb*{\theta}}(\vb*{X})\bigr)^2
    \right].
\end{align}
Since the integrand is nonnegative, the expectation can vanish only if
\[
\bigl(\vb*{u}^{\mathrm{T}}\vb*{S}_{\vb*{\theta}}(\vb*{X})\bigr)^2=0
\]
holds \(p_{\vb*{\theta}}\)-almost surely, which establishes
\eqref{eq:null_score_as}.
On the common support where \(p_{\vb*{\theta}}(\vb*{x})>0\), we have
\begin{align}
    \nabla_{\vb*{\theta}} p_{\vb*{\theta}}(\vb*{x})
    =
    p_{\vb*{\theta}}(\vb*{x})\,\nabla_{\vb*{\theta}}\log p_{\vb*{\theta}}(\vb*{x})
    =
    p_{\vb*{\theta}}(\vb*{x})\,\vb*{S}_{\vb*{\theta}}(\vb*{x}).
\end{align}
Multiplying both sides by \(\vb*{u}^{\mathrm{T}}\) and using
\eqref{eq:null_score_as} yields \eqref{eq:null_gradp_as}.
\end{proof}

\begin{lemma}[Block form of the score under an orthogonal decomposition]\label{lem:block_score}
Assume (B3). Define $\vb*{\xi}:=\mathbf{U}^{\mathrm{T}}\vb*{\theta}$ and write
$\vb*{\xi}=(\vb*{\xi}^{(0)},\vb*{\xi}^{(1)})^{\mathrm{T}}$ accordingly.
Then the score in the $\vb*{\xi}$-coordinates satisfies
\begin{align}
    \vb*{S}_{\vb*{\xi}}(\vb*{X})
    :=
    \mathbf{U}^{\mathrm{T}}
    \nabla_{\vb*{\theta}}\ell_{\vb*{\theta}}(\vb*{X})
    =
    \begin{pmatrix}
        \vb*{0} \\
        \vb*{S}^{(1)}_{\vb*{\xi}}(\vb*{X})
    \end{pmatrix}
    \qquad
    p_{\vb*{\theta}}\text{-a.s.}
    \label{eq:score_block}
\end{align}
In particular, the score has no component along the null-space directions
$\vb*{\xi}^{(0)}$.
\end{lemma}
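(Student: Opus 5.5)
The plan is to work directly from the chain rule for the orthogonal change of variables $\vb*{\xi}=\mathbf{U}^{\mathrm{T}}\vb*{\theta}$, so that the score in the new coordinates is simply the old score rotated by $\mathbf{U}^{\mathrm{T}}$. Then Lemma~\ref{lem:null_score} kills every component of the rotated score that lies along the null space.

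First, I note that $\mathbf{U}$ is orthogonal, so $\vb*{\theta}=\mathbf{U}\vb*{\xi}$. For any differentiable $g$ this gives $\nabla_{\vb*{\xi}}\,g(\mathbf{U}\vb*{\xi})=\mathbf{U}^{\mathrm{T}}\nabla_{\vb*{\theta}}g$. Hence the definition $\vb*{S}_{\vb*{\xi}}:=\mathbf{U}^{\mathrm{T}}\nabla_{\vb*{\theta}}\ell_{\vb*{\theta}}$ coincides with the genuine gradient of the log-likelihood in the $\vb*{\xi}$-coordinates. Splitting $\mathbf{U}^{\mathrm{T}}$ into row blocks gives
\begin{align}
    \vb*{S}_{\vb*{\xi}}(\vb*{X})
    =
    \begin{pmatrix}
        \mathbf{U}^{(0)\mathrm{T}}\vb*{S}_{\vb*{\theta}}(\vb*{X}) \\
        \mathbf{U}^{(1)\mathrm{T}}\vb*{S}_{\vb*{\theta}}(\vb*{X})
    \end{pmatrix},
\end{align}
and I define $\vb*{S}^{(1)}_{\vb*{\xi}}:=\mathbf{U}^{(1)\mathrm{T}}\vb*{S}_{\vb*{\theta}}$.

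Second, I treat the top block. Each column $\vb*{u}_j$ of $\mathbf{U}^{(0)}$, for $j=1,\dots,k$, lies in $\ker(\mathbf{F}_{\vb*{\theta}})$, so Lemma~\ref{lem:null_score} gives $\vb*{u}_j^{\mathrm{T}}\vb*{S}_{\vb*{\theta}}(\vb*{X})=0$ $p_{\vb*{\theta}}$-a.s. Since there are only finitely many columns, a finite union of null sets is still null, so the whole block $\mathbf{U}^{(0)\mathrm{T}}\vb*{S}_{\vb*{\theta}}$ vanishes almost surely. This yields Eq.~\eqref{eq:score_block}.

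The only subtle point is the role of Assumption (B3). The basis $\mathbf{U}^{(0)}$ must be a fixed matrix, independent of $\vb*{\theta}$, both for the reparameterization to be a global linear change of variables and for the chain-rule identity to hold with a constant $\mathbf{U}$. Constancy of $\operatorname{supp}(\mathbf{F}_{\vb*{\theta}})$, and hence of its orthogonal complement $\ker(\mathbf{F}_{\vb*{\theta}})$, over $\Theta$ guarantees this. So the null-score statement holds at every $\vb*{\theta}$ with the same $\mathbf{U}^{(0)}$. I expect no real obstacle beyond stating this carefully.
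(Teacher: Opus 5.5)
Your proposal is correct and follows the paper's proof essentially step for step. The chain rule for the linear change of variables $\vb*{\theta}=\mathbf{U}\vb*{\xi}$ gives $\nabla_{\vb*{\xi}}\ell=\mathbf{U}^{\mathrm{T}}\nabla_{\vb*{\theta}}\ell$, and Lemma~\ref{lem:null_score}, applied to each column of $\mathbf{U}^{(0)}$ (which lies in $\ker(\mathbf{F}_{\vb*{\theta}})$), kills the top block almost surely. Your finite-union-of-null-sets remark and your note that (B3) is what makes $\mathbf{U}$ a single $\vb*{\theta}$-independent matrix simply make explicit what the paper leaves implicit in its ``stacking'' step and in its use of (B3).
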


\begin{proof}
Since \(\vb*{\theta}=\mathbf{U}\vb*{\xi}\), the Jacobian is
\[
\frac{\partial \vb*{\theta}}{\partial \vb*{\xi}}
=
\mathbf{U},
\]
and therefore, by the chain rule,
\begin{align}
    \nabla_{\vb*{\xi}}\ell_{\vb*{\xi}}(\vb*{X})
    =
    \left(\frac{\partial \vb*{\theta}}{\partial \vb*{\xi}}\right)^{\mathrm{T}}
    \nabla_{\vb*{\theta}}\ell_{\vb*{\theta}}(\vb*{X})
    =
    \mathbf{U}^{\mathrm{T}}\nabla_{\vb*{\theta}}\ell_{\vb*{\theta}}(\vb*{X}).
\end{align}
Each column \(\vb*{u}_i\) of \(\mathbf{U}^{(0)}\) belongs to
\(\ker(\mathbf{F}_{\vb*{\theta}})\). Hence, by
Lemma~\ref{lem:null_score},
\[
\vb*{u}_i^{\mathrm{T}}
\nabla_{\vb*{\theta}}
\ell_{\vb*{\theta}}(\vb*{X})
=
0
\]
holds \(p_{\vb*{\theta}}\)-almost surely.
Stacking these identities gives
\begin{align}
    (\mathbf{U}^{(0)})^{\mathrm{T}}
    \nabla_{\vb*{\theta}}
    \ell_{\vb*{\theta}}(\vb*{X})
    =
    \vb*{0}
    \qquad
    p_{\vb*{\theta}}\text{-a.s.}
\end{align}
Therefore, the first block of
\(
\mathbf{U}^{\mathrm{T}}
\nabla_{\vb*{\theta}}
\ell_{\vb*{\theta}}(\vb*{X})
\)
vanishes, establishing \eqref{eq:score_block}.
\end{proof}

Equation \eqref{eq:score_block} shows that the likelihood is locally insensitive
(in the sense of vanishing directional derivatives, $p_{\vb*{\theta}}$-a.s.)
along the null-space directions $\vb*{\xi}^{(0)}$.

\paragraph{Profile maximum likelihood on the estimable subspace.}
For fixed \(\vb*\xi^{(0)}\), define the profile MLE for the estimable coordinates by
\begin{align}
    \tilde{\vb*\xi}^{(1),\mathrm{ML}}
    \in
    \arg\max_{\vb*\zeta\in\mathbb{R}^r}\,
    \ell_{(\vb*\xi^{(0)},\,\vb*\zeta)}(\vb*X).
\end{align}
Let the corresponding displacement be
\begin{align}
    \tilde{\vb*\Delta}^{(1),\mathrm{ML}}
    :=
    \tilde{\vb*\xi}^{(1),\mathrm{ML}}-\vb*\xi^{(1)}.
\end{align}

We define the reduced, or estimable, score and Hessian as
\begin{align}
    \vb*S^{(1)}_{\vb*\xi}(\vb*X)
    &:=
    \nabla_{\vb*\xi^{(1)}}\ell_{\vb*\xi}(\vb*X),\\
    \mathbf{H}^{(11)}_{\vb*\xi}(\vb*X)
    &:=
    \nabla^2_{\vb*\xi^{(1)}}\ell_{\vb*\xi}(\vb*X).
\end{align}
By the first-order optimality condition for the profile MLE, together with Assumption~(B1), we have
\begin{align}
    \vb*0
    =
    \vb*S^{(1)}_{(\vb*\xi^{(0)},\,\tilde{\vb*\xi}^{(1),\mathrm{ML}})}(\vb*X).
\end{align}
Applying Taylor's theorem to the reduced score around \(\vb*\xi\) yields
\begin{align}
    \vb*0
    =
    \vb*S^{(1)}_{\vb*\xi}(\vb*X)
    +
    \mathbf{H}^{(11)}_{\vb*\xi}(\vb*X)\,
    \tilde{\vb*\Delta}^{(1),\mathrm{ML}}
    +
    \vb*r^{(1)}_{\vb*\xi}\!\left(\tilde{\vb*\Delta}^{(1),\mathrm{ML}}\right),
    \label{eq:reduced_score_eq}
\end{align}
where \(\vb*r^{(1)}_{\vb*\xi}(\cdot)\) denotes the corresponding Taylor remainder.

\paragraph{Pseudoinverse reduction.}
The Fisher information matrix in the \(\vb*{\xi}\)-coordinates is
\begin{align}
    \mathbf{F}_{\vb*{\xi}}
    =
    \mathbf{U}^{\mathrm{T}}
    \mathbf{F}_{\vb*{\theta}}
    \mathbf{U}
    =
    \begin{pmatrix}
        \mathbf{0} & \mathbf{0} \\
        \mathbf{0} & \bar{\mathbf{F}}_{\vb*{\xi}}
    \end{pmatrix},
\end{align}
where \(\bar{\mathbf{F}}_{\vb*{\xi}}\in\mathbb{R}^{r\times r}\) is positive definite.
Consequently, its Moore--Penrose inverse is
\begin{align}
    \mathbf{F}^{+}_{\vb*{\xi}}
    =
    \begin{pmatrix}
        \mathbf{0} & \mathbf{0} \\
        \mathbf{0} & \bar{\mathbf{F}}_{\vb*{\xi}}^{-1}
    \end{pmatrix}.
\end{align}
Adding and subtracting
\(\bar{\mathbf{F}}_{\vb*{\xi}}\tilde{\vb*{\Delta}}^{(1),\mathrm{ML}}\)
in Eq.~\eqref{eq:reduced_score_eq}, and then rearranging, gives
\begin{align}
    \bar{\mathbf{F}}_{\vb*{\xi}}\tilde{\vb*{\Delta}}^{(1),\mathrm{ML}}
    =
    \vb*{S}^{(1)}_{\vb*{\xi}}
    +
    \bigl(\mathbf{H}^{(11)}_{\vb*{\xi}}+\bar{\mathbf{F}}_{\vb*{\xi}}\bigr)
    \tilde{\vb*{\Delta}}^{(1),\mathrm{ML}}
    +
    \vb*{r}^{(1)}_{\vb*{\xi}}(\tilde{\vb*{\Delta}}^{(1),\mathrm{ML}}).
\end{align}
Thus,
\begin{widetext}
\begin{align}
    \tilde{\vb*{\Delta}}^{(1),\mathrm{ML}}
    &=
    \bar{\mathbf{F}}_{\vb*{\xi}}^{-1}\vb*{S}^{(1)}_{\vb*{\xi}}
    + \bar{\mathbf{F}}_{\vb*{\xi}}^{-1}
      \bigl(
          \mathbf{H}^{(11)}_{\vb*{\xi}}
          + \bar{\mathbf{F}}_{\vb*{\xi}}
      \bigr)
      \tilde{\vb*{\Delta}}^{(1),\mathrm{ML}}
    + \bar{\mathbf{F}}_{\vb*{\xi}}^{-1}
      \vb*{r}^{(1)}_{\vb*{\xi}}
      \bigl(\tilde{\vb*{\Delta}}^{(1),\mathrm{ML}}\bigr).
    \label{eq:delta_fixed_point}
\end{align}
\end{widetext}

\paragraph{Consequence for sample complexity bounds.}
We therefore conclude that the learning problem reduces to the non-singular case restricted to the \(r\)-dimensional estimable subspace
\(\operatorname{supp}(\mathbf{F}_{\vb*{\theta}})\).
As a result, sample-complexity bounds established for the non-singular setting extend directly to the estimable coordinates
\(\vb*{\xi}^{(1)}\), with the replacements
\begin{align}
    \mathbf{F}^{-1}_{\vb*{\theta}}
    \;\longrightarrow\;
    \mathbf{F}^{+}_{\vb*{\theta}},
    \qquad
    d
    \;\longrightarrow\;
    r=\operatorname{rank}(\mathbf{F}_{\vb*{\theta}}),
\end{align}
provided the corresponding bound is formulated in an orthogonally invariant manner.

\section{Validity of (A1)-(A2) in statistical models}\label{appensec:assumption}
\begin{enumerate}
    \item[(A1)] \textbf{Unique maximizer and stationary point:} $\ell_{\vb*{\theta}}(\vb*{x})$ has a unique maximizer $\tilde{\vb*{\theta}}^{\mathrm{ML}}$ in the interior of the parameter domain $\Theta$ and it is the unique stationary point.
    \item[(A2)] \textbf{Smoothness:} $\ell_{\vb*{\theta}}(\vb*{x})$ is three times continuously differentiable with respect to $\vb*{\theta}$ on the parameter domain $\Theta$.
\end{enumerate}

\subsection{Bernoulli model}

Let \(x_1,\ldots,x_M \in \{0,1\}\) be i.i.d.\ Bernoulli random variables with parameter
\(\theta \in (0,1)\), and define
\[
S := \sum_{i=1}^M x_i.
\]
The log-likelihood function is
\begin{align}
\ell_{\theta}(\vb*{x})
= S \log \theta + (M-S)\log(1-\theta),
\quad \theta \in \Theta := (0,1).
\end{align}

\noindent\textbf{(A1).}
The first derivative of the log-likelihood is
\begin{align}
\ell_{\theta}'(\vb*{x})
= \frac{S}{\theta} - \frac{M-S}{1-\theta}.
\end{align}
Setting \(\ell_{\theta}'(\vb*{x})=0\) yields the unique stationary point
\begin{align}
\theta^{\mathrm{ML}} = \frac{S}{M},
\end{align}
provided \(0<S<M\). Moreover, since
\begin{align}
\ell_{\theta}''(\vb*{x})
=
-\frac{S}{\theta^2}
-\frac{M-S}{(1-\theta)^2}
<0,
\end{align}
the log-likelihood is strictly concave on \((0,1)\), and therefore
\(\theta^{\mathrm{ML}}\) is the unique maximizer.

\noindent\textbf{(A2).}
The function \(\ell_{\theta}(\vb*{x})\) is infinitely differentiable on \((0,1)\).

\subsection{Gaussian model with known variance}

Let \(\vb*{y}_1,\ldots,\vb*{y}_M \in \mathbb{R}^d\) be i.i.d.\ Gaussian random vectors
distributed according to \(\mathcal{N}(\vb*{\theta},\mathbf{\Sigma})\), where the covariance matrix
\(\mathbf{\Sigma} \succ 0\) is known and \(\vb*{\theta} \in \mathbb{R}^d\) is unknown.
Up to an additive constant, the log-likelihood function is
\begin{align}
\ell_{\vb*{\theta}}(\vb*{x})
= -\frac{1}{2}\sum_{i=1}^M (\vb*{y}_i-\vb*{\theta})^{\mathrm{T}}\mathbf{\Sigma}^{-1}(\vb*{y}_i-\vb*{\theta}),
\quad \vb*{\theta} \in \Theta := \mathbb{R}^d.
\end{align}

\noindent\textbf{(A1).}
The gradient of the log-likelihood is
\begin{align}
\nabla_{\vb*{\theta}} \ell_{\vb*{\theta}}(\vb*{x})
=
M\mathbf{\Sigma}^{-1}(\bar{\vb*{y}} - \vb*{\theta}),
\quad
\bar{\vb*{y}}
:=
\frac{1}{M}\sum_{i=1}^M \vb*{y}_i.
\end{align}
Therefore,
\(\nabla_{\vb*{\theta}} \ell_{\vb*{\theta}}(\vb*{x})=0\)
if and only if
\[
\vb*{\theta}
=
\bar{\vb*{y}}
=:
\vb*{\theta}^{\mathrm{ML}},
\]
which is thus the unique stationary point.
Moreover, the Hessian is
\begin{align}
\nabla_{\vb*{\theta}}^2 \ell_{\vb*{\theta}}(\vb*{x})
=
-M\mathbf{\Sigma}^{-1},
\end{align}
which is strictly negative definite because
\(\mathbf{\Sigma}\succ0\).
Hence the log-likelihood is strictly concave, and
\(\vb*{\theta}^{\mathrm{ML}}\) is the unique global maximizer.

\noindent\textbf{(A2).}
The function \(\ell_{\vb*{\theta}}(\vb*{x})\) is a quadratic polynomial in
\(\vb*{\theta}\), and is therefore infinitely differentiable.

\subsection{Multinomial model}

Let \((n_1,\ldots,n_K)\) be multinomial counts with total count
\[
M=\sum_{k=1}^K n_k,
\]
and parameter vector
\(\vb*{\theta}=(\theta_1,\cdots,\theta_K)\), where
\(\theta_k>0\) and
\(\sum_{k=1}^K \theta_k=1\).
The log-likelihood function is
\begin{align}
\ell_{\vb*{\theta}}(\vb*{x})
=
\sum_{k=1}^K n_k \log \theta_k,
\end{align}
defined on the probability simplex.

\noindent\textbf{(A1).}
The partial derivatives are
\begin{align}
\pdv{\ell_{\vb*{\theta}}(\vb*{x})}{\theta_{k}}
=
\frac{n_k}{\theta_{k}}.
\end{align}
Because the parameters satisfy the normalization constraint
\[
\sum_{k=1}^K \theta_k=1,
\]
the stationary point must be determined using a Lagrange multiplier.
Define
\begin{align}
\mathcal{L}(\vb*{\theta},\lambda)
=
\sum_{k=1}^K n_k \log \theta_k
+
\lambda\left(\sum_{k=1}^K \theta_k-1\right).
\end{align}
The stationarity conditions give
\begin{align}
\pdv{\mathcal{L}}{\theta_k}
=
\frac{n_k}{\theta_k}
+
\lambda
=
0,
\end{align}
which implies
\begin{align}
\theta_k
=
-\frac{n_k}{\lambda}.
\end{align}
Using the normalization condition,
\begin{align}
1
=
\sum_{k=1}^K \theta_k
=
-\frac{1}{\lambda}\sum_{k=1}^K n_k
=
-\frac{M}{\lambda},
\end{align}
and therefore \(\lambda=-M\).
Hence the unique stationary point under the normalization constraint is
\begin{align}
\theta_k^{\mathrm{ML}}
=
\frac{n_k}{M}.
\end{align}
Moreover, since
\begin{align}
\pdv[2]{\ell_{\vb*{\theta}}(\vb*{x})}{\theta_k}
=
-\frac{n_k}{\theta_k^2}
<0,
\end{align}
the log-likelihood is strictly concave on the interior of the simplex, implying that this stationary point is the unique global maximizer.

\noindent\textbf{(A2).}
The function \(\ell_{\vb*{\theta}}(\vb*{x})\) is infinitely differentiable throughout the interior region \(\theta_k>0\).

\subsection{Poisson model}

Let \(x_1,\ldots,x_M\) be i.i.d.\ Poisson random variables with mean
\(\theta > 0\).
The log-likelihood function is
\begin{align}
\ell_{\theta}(\vb*{x})
= \sum_{i=1}^M \big( x_i \log \theta - \theta \big) + \text{const},
\quad \theta \in \Theta := (0,\infty).
\end{align}

\noindent\textbf{(A1).}
The derivative of the log-likelihood is
\begin{align}
\ell_{\theta}'(\vb*{x})
=
\sum_{i=1}^M \frac{x_i}{\theta} -M .
\end{align}
Setting \(\ell_{\theta}'(\vb*{x})=0\) yields the unique stationary point
\begin{align}
\theta^{\mathrm{ML}}
=
\frac{1}{M}\sum_{i=1}^M x_i .
\end{align}
Moreover,
\begin{align}
\ell_{\theta}''(\vb*{x})
=
-\sum_{i=1}^M \frac{x_i}{\theta^2}
\le 0,
\end{align}
so the log-likelihood is concave on \((0,\infty)\). If \(\sum_{i=1}^M x_i>0\), the concavity is strict and \(\theta^{\mathrm{ML}}\) is the unique global maximizer.

\noindent\textbf{(A2).}
The function \(\ell_{\theta}(\vb*{x})\) is infinitely differentiable on
\(\Theta=(0,\infty)\).

\subsection{General exponential family (canonical parameterization)}

Let \(x_1,\ldots,x_M\) be i.i.d.\ measurement outcomes drawn from a \emph{regular} exponential family with
canonical parameter \(\vb*{\theta}\in\Theta\subset\mathbb{R}^d\) and density
\begin{equation}
p_{\vb*{\theta}}(x)
= h(x)\exp\!\left(\vb*{\theta}^{\mathrm{T}}\vb*{t}(x)-A(\vb*{\theta})\right),
\quad \vb*{\theta}\in\Theta,
\end{equation}
where \(\vb*{t}(x)\in\mathbb{R}^d\) is the sufficient statistic and
\(A(\vb*{\theta})\) is the log-partition function.
Define the aggregated sufficient statistic
\begin{equation}
\vb*{T} := \sum_{i=1}^M \vb*{t}(x_i)\in\mathbb{R}^d.
\end{equation}
The log-likelihood function is
\begin{align}
\ell_{\vb*{\theta}}(\vb*{x})
&:= \sum_{i=1}^M \log p_{\vb*{\theta}}(x_i) \nonumber\\
&= \sum_{i=1}^M \log h(x_i)
+ \vb*{\theta}^{\mathrm{T}} \sum_{i=1}^M \vb*{t}(x_i)
- M A(\vb*{\theta}) \nonumber\\
&= \sum_{i=1}^M \log h(x_i)
+ \vb*{\theta}^{\mathrm{T}}\vb*{T}
- M A(\vb*{\theta}).
\end{align}

\noindent\textbf{(A1).}
The gradient of the log-likelihood is
\begin{align}
\nabla_{\vb*{\theta}} \ell_{\vb*{\theta}}(\vb*{x})
=
\vb*{T}
-
M \nabla A(\vb*{\theta}).
\end{align}
Setting
\(
\nabla_{\vb*{\theta}} \ell_{\vb*{\theta}}(\vb*{x})=\vb*{0}
\)
yields the maximum-likelihood equation
\begin{align}
\nabla A(\vb*{\theta}^{\mathrm{ML}})
=
\frac{1}{M}\vb*{T}.
\end{align}
Moreover, the Hessian is
\begin{align}
\nabla_{\vb*{\theta}}^2 \ell_{\vb*{\theta}}(\vb*{x})
=
-
M \nabla^2 A(\vb*{\theta}).
\end{align}
In a minimal regular exponential family,
\(A(\vb*{\theta})\) is strictly convex on \(\Theta\), so
\(\nabla^2 A(\vb*{\theta})\succ0\).
Hence the log-likelihood is strictly concave, and
\(\nabla A\) is injective.
Therefore, whenever
\(
\frac{1}{M}\vb*{T}
\)
lies in the range of \(\nabla A\) (equivalently, in the mean-parameter space),
the solution \(\vb*{\theta}^{\mathrm{ML}}\) exists and is unique.

\noindent\textbf{(A2).}
Because the family is regular, the log-partition function
\(A(\vb*{\theta})\) is finite on the open set \(\Theta\) and is smooth
(in fact, real analytic) throughout \(\Theta\).
Consequently, \(\ell_{\vb*{\theta}}(\vb*{x})\) is infinitely differentiable on \(\Theta\).

\subsection{Pauli Eigenvalue Estimation}

In this subsection, we verify that assumptions (A1)--(A2) hold for Pauli eigenvalue estimation under the standard measurement model. Throughout, we impose the normalization condition \(\lambda_0 = 1\). For simplicity, let us denote the probability distribution in Eq.~\eqref{eq:paulientangletostate} by
\begin{align}
    p_{\vb*{\lambda}}(x):=p_{x}.
\end{align}
Here, \(\{p_x\}_{x=0}^{4^n-1}\) are the measurement outcome probabilities. These probabilities depend linearly on the Pauli eigenvalues
\(\vb*{\lambda}=(1,\lambda_1,\ldots,\lambda_{4^n-1})\), and satisfy
\(p_x\ge 0\) together with
\(\sum_{x=0}^{4^n-1} p_x=1\).
We treat
\((p_1,\ldots,p_{4^n-1})\)
as free parameters and eliminate \(p_0\) through the relation
\begin{align}
p_0
=
1-\sum_{k=1}^{4^n-1} p_k .
\end{align}
Accordingly, the parameter domain is
\begin{align}
\Theta
:=
\Bigl\{
(p_1,\ldots,p_{4^n-1})
:
p_k>0\ \forall k,\ 
\sum_{k=1}^{4^n-1} p_k < 1
\Bigr\},
\end{align}
which is an open convex subset of \(\mathbb{R}^{4^n-1}\).

Given an observed dataset \(\boldsymbol{x}\), summarized by the outcome counts
\(\{n_x\}_{x=0}^{4^n-1}\), the total number of measurement outcomes is
\[
M = \sum_{x=0}^{4^n-1} n_x.
\]
The multinomial log-likelihood, up to an additive constant independent of
\(\boldsymbol{\theta}\), is given by
\begin{align}
\ell_{\boldsymbol{\theta}}(\boldsymbol{x})
&=
\sum_{x=0}^{4^n-1} n_x \log p_x \nonumber\\
&=
\Bigl(M-\sum_{k=1}^{4^n-1} n_k\Bigr)
\log\!\Bigl(1-\sum_{k=1}^{4^n-1} p_k\Bigr)
+
\sum_{k=1}^{4^n-1} n_k \log p_k ,
\label{eq:pauli_loglik}
\end{align}
where the parameter vector is
\[
\boldsymbol{\theta}=(p_1,\ldots,p_{4^n-1}) \in \Theta.
\]

Assumption (A2) follows immediately from the definition of \(\Theta\).
Indeed, for every \(\boldsymbol{\theta}\in\Theta\), we have
\(p_k>0\) for all \(k\), together with
\[
1-\sum_{k=1}^{4^n-1} p_k>0.
\]
Hence all logarithmic arguments appearing in Eq.~\eqref{eq:pauli_loglik} are strictly positive.
Therefore,
\(\ell_{\boldsymbol{\theta}}(\boldsymbol{x})\)
is infinitely differentiable with respect to \(\boldsymbol{\theta}\) on \(\Theta\), and in particular
\[
\ell_{\boldsymbol{\theta}}(\boldsymbol{x})\in C^3(\Theta).
\]

To verify Assumption~(A1), we note that the multinomial likelihood is maximized by the empirical frequencies. Specifically,
\begin{align}
p_k^{\mathrm{ML}}
=
\frac{n_k}{M},
\qquad
k=1,\ldots,4^n-1,
\end{align}
with
\begin{align}
p_0^{\mathrm{ML}}
=
1-\sum_{k=1}^{4^n-1} p_k^{\mathrm{ML}}
=
\frac{n_0}{M}.
\end{align}
Whenever
\[
\tilde{\boldsymbol{\theta}}^{\mathrm{ML}}
=
(p_1^{\mathrm{ML}},\ldots,p_{4^n-1}^{\mathrm{ML}})
\]
lies in the interior of \(\Theta\), it is the unique maximizer of
\(\ell_{\boldsymbol{\theta}}(\boldsymbol{x})\), and also the unique stationary point.
Equivalently,
\begin{align}
\nabla_{\boldsymbol{\theta}}
\ell_{\boldsymbol{\theta}}(\boldsymbol{x})
=
0
\quad\Longleftrightarrow\quad
\boldsymbol{\theta}
=
\tilde{\boldsymbol{\theta}}^{\mathrm{ML}}.
\end{align}
Therefore, Pauli eigenvalue estimation satisfies assumptions (A1)--(A2) under the normalization constraint \(\lambda_0=1\).

\bibliography{Reference.bib}
\end{document}